\documentclass[%
 reprint,
 nofootinbib,
superscriptaddress,
 amsmath,amssymb,
 aps,
onecolumn
]{revtex4-2}

\usepackage{graphicx}%
\usepackage{dcolumn}%
\usepackage{bm}%
\usepackage{braket}
\usepackage{amsthm}%
\usepackage{amsmath}
\usepackage{algorithm}
\usepackage{algpseudocode}
\usepackage{tabularx}
\usepackage[inline]{asymptote}
\usepackage{siunitx}
\usepackage{tikz-cd}
\usepackage{todonotes}
\usetikzlibrary{backgrounds}
\usepackage{tcolorbox}
\tcbuselibrary{theorems,skins,hooks,breakable}
\tcbset{myexastyle/.style={
  enhanced,
  breakable,
  colback=gray!10,
  colframe=gray!50!black,
  boxrule=0.5pt,
  fonttitle=\bfseries,
  title={Example \thetcbcounter.\quad #1},
  coltitle=white
}}

\newtcbtheorem[number within=section]{algorithminner}{Example}{myexastyle}{ex}

\newenvironment{exa}[1]
{\begin{algorithminner}{#1}{algorithm-\arabic{section}-[section]}}
{\end{algorithminner}}

\usepackage[colorlinks=true,linkcolor=blue,allcolors=blue]{hyperref}
\usepackage[capitalize,nameinlink]{cleveref}
\usepackage[table]{xcolor}

\newtheorem{theorem}{Theorem}[section]
\newtheorem{lemma}{Lemma}[section]
\newtheorem{definition}{Definition}[section]
\newtheorem{corollary}{Corollary}[section]
\newtheorem{prop}{Proposition}[section]

\theoremstyle{remark} %
\newtheorem{remark}{Remark}[section]

\begin{document}
\title{Copy-cup Gates in Tensor Products of Group Algebra Codes}

\author{Ryan Tiew}
\affiliation{School of Mathematics, Fry Building, Woodland Road, Bristol BS8 1UG, UK}
\affiliation{Quantum Engineering Centre for Doctoral Training, University of Bristol, BS8 1FD, UK}
\author{Nikolas P.\ Breuckmann}
\affiliation{School of Mathematics, Fry Building, Woodland Road, Bristol BS8 1UG, UK}

\date{\today}%

\begin{abstract}
We determine conditions on classical group algebra codes so that they have pre-orientation for cup products and copy-cup gates. This defines quantum codes that have constant-depth $\operatorname{CZ}$ and $\operatorname{CCZ}$ gates constructed via tensor products of classical group algebra codes, including hypergraph and balanced products. We show that determining the conditions relies on solving the perfect matching problem in graph theory. Conditions are fully determined for the 2- and 3-copy-cup gates, for group algebra codes up to weight 4, including for codes with odd check weight. These include the bivariate bicycle codes, which we show do not have the pre-orientation for either type of copy-cup gate. We show that abelian weight 4 group algebra codes satisfying the non-associative 3-copy-cup gate necessarily have a code distance of 2, whereas codes that satisfy conditions for the symmetric 3-copy-cup gate can have higher distances, and in fact also satisfy conditions for the 2-copy-cup gate. Finally we find examples of quantum codes from the product of abelian group algebra codes that have inter-code constant-depth $\operatorname{CZ}$ and $\operatorname{CCZ}$ gates.
\end{abstract}

\maketitle
\section{Introduction}
Quantum error-correcting codes have received sustained interest in recent years due to their necessity in constructing universal fault-tolerant quantum computers. 
Schemes are necessary to implement logic gates in a fault-tolerant manner, and there has been a lot of work in this area in tandem with the construction of high-rate quantum LDPC codes \cite{Breuckmann_2021}. 
For example, there have been a number of schemes for logic gates such as ones that generalise lattice surgery~\cite{Horsman_2012, Cohen_2022,cowtan2026parallellogicalmeasurementsquantum, cross2024improvedqldpcsurgerylogical, swaroop2024universaladaptersquantumldpc, he2025extractorsqldpcarchitecturesefficient}, as well as ones involving homomorphic measurements \cite{huang2022homomorphiclogicalmeasurements, PhysRevX.15.021065, ide2024faulttolerantlogicalmeasurementshomological}.
On the other hand, a different fault-tolerant scheme involves using codes that have non-Clifford gates that can be implemented in a constant-depth circuit by construction, such as those in \cite{he2025asymptoticallygoodquantumcodes,nguyen2024goodbinaryquantumcodes,golowich2024quantumldpccodestransversal}. %
Constant-depth circuits guarantee fault tolerance as qubit errors only propagate sparsely through physical qubits before error correction. 
Codes with high rate and constant-depth non-Clifford gates can then be used, for example, in magic state distillation \cite{menon2025magictricyclesefficientmagic} as part of a larger quantum computer or architecture.

It is a difficult endeavour to design quantum LDPC codes equipped with constant-depth gates. 
Recently, the so-called \emph{$\Lambda$-copy-cup gate} \cite{breuckmann2024cupsgatesicohomology} was defined, which allows for diagonal constant-depth gates on quantum codes that satisfy certain properties, and that can reach any desired level of the Clifford hierarchy. 
The work relies on the extension of the topological cup product towards more general cochain complexes that define quantum codes. 
Since their work, other papers, in particular \cite{menon2025magictricyclesefficientmagic,jacob2025singleshotdecodingfaulttolerantgates} have come up using the 3-copy-cup gate and their conditions to find codes with constant-depth non-Clifford gates. 

In this paper, we build on these existing bodies of work. We determine constraints on classical group algebra codes that result in copy-cup gates on the tensor product quantum codes, and show that the constraints can be determined combinatorially. 
We also extend the work of \cite{breuckmann2024cupsgatesicohomology} towards classical codes with odd check weight, and show that the restriction to codes with even check weight is unnecessary. 
This immediately lets us analyse the bivariate bicycle codes \cite{bravyi2023highthreshold}, which are constructed from abelian group algebra codes of weight 3. 
We strengthen this analysis with a numerical search for codes similar to the bivariate bicycle codes -- that is, quantum codes with $\operatorname{X}$ and $\operatorname{Z}$ check weight 6 -- from abelian groups that are equipped with the 2-copy-cup gate.
We also show they cannot have 3-copy-cup gates. We further show that higher weight classical codes lead to more code constraints. 
We perform a numerical search for abelian group algebra codes with pre-orientation that satisfy the 2- and 3- copy-cup gates up to weight 4, and illustrate how constraints grow with check weight by providing examples of restrictive conditions for weight 6 group algebra codes with pre-orientation for the 3-copy-cup gate.

The rest of the paper is structured as follows. In \cref{sec: background}, we provide relevant background information to understand our work. We define classical group algebra codes and quantum CSS codes, and their relation to cochain complexes. We show how cohomology invariants are related to quantum logic gates, and go through necessary definitions of the copy-cup gate from \cite{breuckmann2024cupsgatesicohomology} in a pedagogical manner, to illustrate the differences in pre-orientations for the copy-cup gates that arise from the non-associativity of the cup product. In \cref{sec: application}, we determine pre-orientation conditions for 3-term group algebra codes, for both the 2- and 3-copy-cup gates, and present results from the numerical search for weight 6 quantum codes that have constant-depth $\operatorname{CZ}$ from the 2-copy-cup gate. In \cref{sec: higher weight}, we generalise this to a combinatorial argument for pre-orientation for codes with higher check weights, and apply them to weight 4 group algebra codes. This is further supplemented with examples of quantum codes with 2-copy-cup gates from a numerical search. In \cref{sec: further}, we present results of targeted numerical searches, including for codes from univariate cyclic groups and general abelian groups, for quantum codes equipped with the 3-copy-cup gate, and use the combinatorial argument to determine pre-orientation conditions for 6-term group algebra codes. Finally, in \cref{sec: conclusion}, we give a summary of the work, and discuss future avenues for exploration.
\section{Background} \label{sec: background}
In this section, we define notation necessary in the remainder of our work. We first provide background information on cochain complexes and how they are related to CSS codes, followed by an explanation of coinvariants and diagonal logic gates, and then a description on classical group algebra codes. Then, we go through redefinitions from \cite{breuckmann2024cupsgatesicohomology} necessary for the copy-cup gate, highlighting important aspects by providing examples to illustrate subtle differences, particularly in the non-associativity of the cup product. 
\subsection{Cochain complexes and CSS codes} \label{sec: cochains}
Let us first define a cochain complex and how they are related to error-correcting codes.
\begin{definition}[Cochain complex]
    Let $R$ be a ring. A \emph{cochain complex} $A$ is a sequence of $R$-modules $\{A^i\}$ and linear maps $\delta^i$, known as coboundary operators, and written as
    \begin{center}
    \begin{tikzcd}[cells={nodes={minimum height=2em}}]
    A = ( \cdots \arrow[r, "\delta^{j-2}"] & A^{j-1} \arrow[r, "\delta^{j-1}"] & A^j\arrow[r, "\delta^j"] & A^{j+1} \arrow[r, "\delta^{j+1}"] & \cdots ),
    \end{tikzcd}
    \end{center}
    such that the maps satisfy the relation
    \begin{equation}
        \delta^{i+1}\circ \delta^i = 0.
    \end{equation}
    The complex is said to be \emph{based} if it is equipped with a basis $X_i^A$ for each $A^i$; and when the context is clear, we omit the superscript. Elements of $A^i$ are known as $i$-cochains, while $\operatorname{im}\delta^{i-1}$ and $\ker \delta^i$ are known as \emph{$i$-coboundaries} and \emph{$i$-cocycles} respectively. Then, the $i^{th}$ cohomology group is defined to be
    \begin{equation}
        \mathcal{H}^i = \ker \delta^{i}/\operatorname{im} \delta^{i-1}.
    \end{equation}
    Dually, we can also define chain complexes and homology by taking boundary operators $\partial_{i+1}:=(\delta^{i})^T$. Boundary operators satisfy the conditions
    \begin{equation}
        \partial_i\circ\partial_{i+1}=0
    \end{equation}
    and homology can be similarly defined as
    \begin{equation}
        \mathcal{H}_i = \ker \partial_i/\operatorname{im} \partial_{i+1}.
    \end{equation}    
\end{definition}
In this work, we consider based cochain complexes defined over the binary field $\mathbb{F}_2$. We note in particular that a basis $X_0$ of 0-cochains are known as \emph{vertices} while a basis $X_1$ of 1-cochains are \emph{edges}. 
In the rest of the paper, when the meaning is clear, we omit the superscript, and refer to a basis of $i$-cochains as $X_i$.

We can also define a stabiliser quantum error-correcting code as one whose codewords are defined by the set of states
\begin{equation}
    \mathcal{C}=\{\ket{\psi}\mid s\ket{\psi}=\ket{\psi}\forall s\in \mathcal{S}\},
\end{equation}
where $\mathcal{S}$ is an abelian subgroup of the Pauli group $\operatorname{P}_n$ that does not contain $-\operatorname{I}$~\cite{GOTTESMAN2006196}. A CSS code is a type of stabiliser quantum error-correcting code where $\mathcal{S}$ can be generated by tensor products of solely Pauli $\operatorname{X}$ or $\operatorname{Z}$. CSS codes have parameters $[[n,k,d]]$, where $n$ is the number of physical qubits of the code, $k$ is the number of logical qubits, and $d$ is the code distance. They are defined by two parity check matrices $H_X$ and $H_Z$ whose rows encode the $\operatorname{X}(\operatorname{Z})$ stabilisers of the code. The check matrices must satisfy the commutation relation
\begin{equation}
    H_XH_Z^T = 0.
\end{equation}
By identifying $H_X$ with $(\delta^{i})^T$, and $H_Z$ with $\delta^{i+1}$,
we see that a CSS code is a 3-term cochain complex built from $\{A^{i+1},A^i,A^{i-1}\}$ for some index $i$ and their respective coboundary operators that can be excised from or straightforwardly identified with $A$, with physical qubits identified with the basis $X_i^A$. Logical operators $\bar{\operatorname{X}}$ and $\bar{\operatorname{Z}}$ are identified with the cohomology and homology classes of the complex respectively. In the rest of this work, we take physical qubits to be defined on the level of 1-cochains, i.e. $i=1$.
\begin{remark}
While CSS codes only need to be defined on 3-term cochain complexes, in some cases there are benefits to use longer complexes. For example, 5-term cochain complexes can be constructed with the additional terms $A^{i+2},A^{i-2}$ used as \emph{metachecks} on the syndrome of the quantum code \cite{Campbell_2019}. In \cite{jacob2025singleshotdecodingfaulttolerantgates}, 4-term cochain complexes give rise to \emph{single-shot decodability} for errors in the $\operatorname{X}$ basis, with the additional cochain term used to define $\operatorname{Z}$-metachecks. The work of \cite{breuckmann2024cupsgatesicohomology} also utilise longer complexes, which are necessary in the definition of the cup product, to implement constant-depth copy-cup gates ascending the Clifford hierarchy.
\end{remark}
\subsection{Cohomology invariants and diagonal quantum gates}\label{sec: cohom invariants}
In defining the quantum code on a cochain complex, there is a natural relation between \emph{cohomology invariants} and quantum logic gates. As we have defined the cochain complexes over $\mathbb{F}_2$, a cohomology invariant is a map from cohomology classes to an element in $\mathbb{F}_2$. Operations on $i$-cochains
\begin{equation}
    \psi: A^i\rightarrow \mathbb{F}_2
\end{equation}
are cohomology invariants if, for $i$-cocycles $\gamma,\gamma'\in \ker \delta^i\subset A^i$, they implement
\begin{equation}
    \psi(\gamma) = \psi(\gamma'):=\psi([\gamma]) \qquad \text{if } \gamma,\gamma'\in[\gamma],
\end{equation}
where $[\gamma]$ is the cohomology class corresponding to $\gamma$. In other words, the operation $\psi$ depends solely on cohomology classes. The operation $\psi$ is said to \emph{descend} to a cohomology invariant, implementing a map $\psi: \mathcal{H}^i\rightarrow \mathbb{F}_2$.

Let us now make the correspondence between cohomology classes and logical basis codestates. A 3-term cochain complex with cohomology group $\mathcal{H}^i$ of dimension $k$ has $k$ cohomology classes, given as $\{[\gamma_1],\cdots,[\gamma_k]\}$, where $\gamma_i$ are representative cocycles $\gamma_i\in \ker\delta^i,i\in\{1,\cdots,k\}$. Then, in the $[[n,k,d]]$ CSS code identified with the complex, each $\gamma_i$ corresponds to a logical operator $\bar{\operatorname{X}}_i$, while its cohomology class $[\gamma_i]$ corresponds to $\bar{\operatorname{X}}_i$ up to addition of $\operatorname{X}$ stabilisers. A logical computational basis state that is $1$ on the $i^{th}$ logical qubit can be written as
\begin{equation}
    \sum_{\substack{
    s_x\in \operatorname{rowspace}(H_X)\\
    s_z\in \operatorname{rowspace}(H_Z)
    }
    }
    \left(\prod_{j\in \operatorname{supp}(s_x)}\operatorname{X}_j\right)\left(\prod_{j'\in \operatorname{supp}(s_z)}\operatorname{Z}_{j'}\right) \bar{\operatorname{X}}_i\ket{0}^{\otimes n}
\end{equation}
up to normalisation factors, where $s_x,s_z$ are $\mathbb{F}_2^n$ vectors, and $\operatorname{supp}(s_x), \operatorname{supp}(s_z)$ indicates their support. Recalling that the basis of $i$-cochains is identified with qubits, we can identify the support of $\bar{\operatorname{X}}_i$ with that of the cocycle $\gamma_i$; and by identifying the $\operatorname{X}$ stabilisers with $\operatorname{im}\delta^i$, we see that this codestate can be identified with the cohomology class $[\gamma_i]$ as
\begin{equation}
    \ket{[\gamma_i]}:=\sum_{\gamma'\in[\gamma_i]} \ket{\gamma'}.
\end{equation}
Cohomology invariants can then be used to construct \emph{diagonal} quantum logic gates, which act on logical computational basis states to implement a phase. They are used to construct a unitary $U_\psi$ which preserves the codespace, and that implements a phase
\begin{equation}
    U_\psi\ket{[\gamma_i]} = e^{i\pi\psi([\gamma_i])}\ket{[\gamma_i]}.
\end{equation}
Such a unitary preserves the codespace as it implements a phase $e^{i\pi\phi([\gamma])}$ on all states in the superposition. Then, when we consider an \emph{arbitrary} logical codestate written in the logical computational basis
\begin{equation}
    \ket{\chi} = \sum_{i=1}^k \alpha_i \ket{[\gamma_i]}, \qquad \alpha_i\in \mathbb{C} \text{ such that }\sum_{i=1}^{k}|\alpha_i|^2=1,
\end{equation}
the unitary action is extended to the codestate by linearity and given as
\begin{align}
    U_{\psi}\ket{\chi} &= \sum_{i=1}^k \alpha_i U_{\psi}\ket{[\gamma_i]}\\
    &= \sum_{i=1}^k\alpha_ie^{i\pi\phi([\gamma_i])}\ket{[\gamma_i]}.
\end{align}
In turn, $U_\psi$ has a representation as a $k\times k$ matrix with diagonal entries $e^{i\pi\phi([\gamma_i])}$, with basis the logical computational basis, and is therefore said to be a diagonal quantum logic gate.
\subsection{Classical codes and group algebra codes} \label{sec: group algebra}
A binary classical code can be identified as the linear vector subspace $\ker H$, where $H$ is the \emph{parity check matrix} of the code defined over $\mathbb{F}_2$. Similar to quantum codes, classical codes have parameters $[n,k,d]$ that denote the code length, number of encoded bits, and distance respectively. Following the previous section, by taking $H:=(\delta^i)^T$, we then see that classical codes can be defined as the homology group of a 2-term cochain complex. When $i=0$, edges are information bits, while vertices are parity checks.

The group algebra $KG$ for group $G$ with cardinality $|G|$ over field $K$ is the set of all linear combinations of finitely-many elements $g\in G$ with coefficients in $K$. In this work, we are interested in group algebra codes \cite{Berman_1969,6770805} over the binary field $\mathbb{F}_2$. Then, the group algebra is a free $\mathbb{F}_2$-vector space with basis $G$. A group algebra element $\alpha\in \mathbb{F}_2[G]$ can be represented by
\begin{equation}
    \alpha=\sum_{g\in G} \alpha_g g,\, \alpha_g\in\mathbb{F}_2.
\end{equation}
Addition of two group algebra elements $\alpha,\beta$ can be written as
\begin{subequations}
    \begin{align}
    \alpha+\beta &= \sum_{g\in G} \alpha_gg+\sum_{g\in G}\beta_gg\\
    &=\sum_{g\in G}(\alpha_g+\beta_g)g
    \end{align}
\end{subequations}
while multiplication can be written as 
\begin{subequations}
    \begin{align}
    \alpha\beta &= \left(\sum_{g\in G} \alpha_gg\right)\left(\sum_{h\in G}\beta_hh\right)\\
    &=\sum_{g,h\in G}\left(\alpha_g\beta_h\right) gh,
    \end{align}
\end{subequations}
with multiplication of group elements defined by the group operation. Group algebra elements can be converted into a parity check matrix $H$ by taking their \emph{left regular representation} over $\mathbb{F}_2$, where every group element is mapped to a square $|G| \times |G|$ matrix representing its permutation action on other group elements via multiplication from the left. We can therefore define classical group algebra codes as the kernel of a group algebra element mapped to its regular representation. Following the notation of \cite{jitman2010checkablecodesgrouprings}, we define this group algebra element to be the \textit{check element} of the group algebra code, and freely identify the check element with its matrix representation.
Since individual group elements are mapped to permutation matrices, the parity check matrix defined by the check element $\alpha\in\mathbb{F}_2[G]$ has $|\alpha|$ non-zero entries in every row and column, with $|\cdot|$ defined as the Hamming weight of the group algebra element as a binary vector; we denote this as the \emph{check weight} of the code, and define \emph{$m$-term} or \emph{weight $m$ group algebra codes} as the kernel of a check element with $m$ distinct group elements. Note also that multiplication of $\alpha$ by $g\in G$ has the property $|\alpha g|=|\alpha|$ by the group rearrangement theorem.

Following the discussion in \cref{sec: cochains}, we can also define group algebra codes on 2-term cochain complexes. Group algebra codes with check element $\alpha$ can be defined by the homology group of the 2-term cochain complex
    \begin{center}
    \begin{tikzcd}[cells={nodes={minimum height=2em}}]
        R \arrow[r, "\delta\in R"] & R, R:=\mathbb{F}_2[G], \delta:=\alpha^{-1},
    \end{tikzcd}
    \end{center}
where the superscript $^{-1}$ indicates the \emph{antipode map} which maps $g$ to its inverse $g^{-1}$ for every group element $g$ in the check element $\alpha$. Equivalently, when considering the matrix representations of the check element and coboundary operator, the map corresponds to the matrix transpose on the regular representation of the check element. The group $G$ also constitutes bases $X_0,X_1$ for 0- and 1-cochains of the complex respectively. To avoid cumbersome notation in referring to group algebra codes by $\alpha$ and coboundary operators by $\alpha^{-1}$, for the rest of the paper we refer to a group algebra code by its coboundary operator $\delta$.

When $G$ is abelian, group algebras correspond to polynomial rings, and the check element is referred to as a \emph{check polynomial}. For example, the univariate cyclic group algebra presented as
\begin{equation}\notag
\mathbb{F}_2[G] = \mathbb{F}_2\langle x\mid x^n=1\rangle
\end{equation}
is in one-to-one correspondence with the polynomial quotient ring
\begin{equation}\notag
\mathbb{F}_2[x]/\langle x^n+1\rangle.
\end{equation}
Examples of abelian group algebra codes include repetition codes, Hamming codes, and BCH codes.
\begin{remark}
    Quantum codes constructed as the tensor product of abelian group algebra codes include the toric code and its higher-dimensional variants, and bivariate bicycle codes \cite{bravyi2023highthreshold} and their generalisations, e.g.\ \cite{pryadko_2BGA,mian2026multivariatemulticyclecodescomplete, lin2025abelianmulticyclecodessingleshot,liang2025generalizedtoriccodestwisted,symons2025sequencesbivariatebicyclecodes, tiew2024lowoverheadentanglinggatesgeneralised}. 
\end{remark}
\begin{exa}{Repetition Code}
    The $[n,1,n]$ classical repetition code is a group algebra code from a cyclic group algebra $\mathbb{F}_2[C_n]$, where $C_n$ is generated by the element $x$. The group algebra element $1+x$ defines the check polynomial of the repetition code. We can convert the check polynomial to an $n\times n$ parity check matrix by mapping individual terms of the check polynomial to its regular representation; in this case, we have
    \begin{align}
        x &\mapsto \begin{pmatrix} 
        0 & 0 & 0 & \dots & 1 \\
        1 & 0 & 0 & \dots & 0 \\
        0 & 1 & 0 & \dots & 0 \\
        \vdots & \vdots & \ddots & \vdots & \vdots \\
        0 & 0 & \dots & 1 & 0 
        \end{pmatrix},\\
        1 &\mapsto \begin{pmatrix} 
        1 & 0 & 0 & \dots & 0 \\
        0 & 1 & 0 & \dots & 0 \\
        0 & 0 & 1 & \dots & 0 \\
        \vdots & \vdots & \vdots & \ddots & \vdots \\
        0 & 0 & 0 & \dots & 1 
        \end{pmatrix}.
    \end{align}
\end{exa}
\subsection{Tensor product cochain complexes}
The tensor product gives us a method to construct longer cochain complexes from shorter ones; and therefore to construct quantum codes from classical codes. 
The tensor product over the binary field $\mathbb{F}_2$ gives the so-called hypergraph product \cite{Tillich_2014}, while tensor products over more general rings defines the balanced product \cite{Breuckmann_2021_2}. 
We focus on taking balanced products over group algebras $R=\mathbb{F}_2[G]$. The balanced product of two vector spaces $V,W$ equipped with a linear right (resp. left) action of $G$ over $R$ is the quotient
\begin{equation}
    V\otimes_{R}W = V\otimes W/\langle vg\otimes w - v\otimes gw\rangle
\end{equation}
for $v\in V, w\in W, g\in G$. The balanced product partitions elements of the vector space into equivalence classes, which are $G$-orbits, with an anti-diagonal action of $G$. These equivalence classes, explicitly given as 
\begin{equation}
    v\otimes_Rw := \{vg\otimes g^{-1}w\mid g\in G\},
\end{equation}
are also known as \emph{coinvariants}. When the vector spaces $V$ and $W$ are both the group algebra $\mathbb{F}_2[G]$, there is an isomorphism
\begin{equation} \label{eqn: RR isomorphism}
    R\otimes_{R}R \cong R,\, R=\mathbb{F}_2[G]
\end{equation}
by identifying $g_1\otimes_R g_2\mapsto g_1g_2$ for $g_1,g_2\in G$.

In this work, we focus on taking $\Lambda$-fold balanced products of 2-term cochain complexes $\{A_i\}$ with coboundary operators $\{\delta_{A_i}\},i\in \{1,\cdots,\Lambda\}$ corresponding to classical codes. This returns the complex $E$. A \emph{total complex} 
\begin{center}
    \begin{tikzcd}[cells={nodes={minimum height=2em}}]
    \operatorname{Tot}(E) = ( \cdots \arrow[r, "\delta^{n-2}_E"] & E^{n-1} \arrow[r, "\delta^{n-1}_E"] & E^n\arrow[r, "\delta^n_E"] & E^{n+1} \arrow[r, "\delta^{n+1}_E"] & \cdots ),
    \end{tikzcd}
\end{center}
can then be defined by collapsing the product complex along terms with the same grading; that is, we have terms
\begin{equation}
    E^n = \bigoplus_{i_1+\cdots +i_\Lambda =n}A_1^{i_1}\otimes \cdots \otimes A_{\Lambda}^{i_\Lambda}
\end{equation}
and with coboundary operators
\begin{equation}
    \delta^n_E = \bigoplus_{\substack{i_1+\cdots +i_j+\cdots+i_\Lambda =n\\j\in \{1,\cdots,\Lambda\}}} \operatorname{id}_{A_1^{i_1}}\otimes\cdots \otimes \delta_{A_j}^{i_j}\otimes\cdots \otimes \operatorname{id}_{A_\Lambda^{i_\Lambda}},
\end{equation}
where $\operatorname{id}$ is the identity map.
\begin{exa}{Square and Cube complexes} \label{exa: cube complexes}
    Consider the case we have three 2-term cochain complexes
\begin{center}
    \begin{tikzcd}[cells={nodes={minimum height=2em}}]
    A = (A^0\arrow[r, "\delta^0_{A}"] & A^1),\\
    B = (B^0\arrow[r, "\delta^0_{B}"] & B^1),\\
    C = (C^0\arrow[r, "\delta^0_{C}"] & C^1).
    \end{tikzcd}
\end{center}
Taking the tensor product of $A$ and $B$ returns the square double complex
\begin{center}
\begin{tikzcd}[
    cells={nodes={minimum height=2em}}, 
    column sep=6em, %
    row sep=5em    %
]
A^0\otimes B^0 \arrow[d,"id_{A^0}\otimes \delta^0_{B}"] \arrow[r, "\delta^0_{A}\otimes id_{B^0}"] & A^1\otimes B^0 \arrow[d,"id_{A^1}\otimes \delta^0_{B}"]\\
A^0\otimes B^1 \arrow[r, "\delta^0_{A}\otimes id_{B^1}"] & A^1\otimes B^1
\end{tikzcd}
\end{center}
This can be collapsed into the total complex which we write as
\begin{center}
    \begin{tikzcd}[
    cells={nodes={minimum height=2em}}, 
    column sep=10em, %
    row sep=5em    %
]
    A^0\otimes B^0 \arrow[r,"H_X^T := {\begin{bmatrix}
        \delta^0_{A}\otimes id_{B^0}\\
        id_{A^0}\otimes\delta^0_{B}
    \end{bmatrix}}"] & A^1\otimes B^0\oplus A^0\otimes B^1 \arrow[r,"{H_Z :=[id_{A^1}\otimes\delta^0_{B}\mid \delta^0_{A}\otimes id_{B^1}]}"] & A^1\otimes B^1,
    \end{tikzcd}
\end{center}
where we can choose the way in which we stack the new coboundary operators. This is a 3-term cochain complex, which defines a CSS code. 
Similarly, taking the tensor product of $A,B$ and $C$ returns the cube complex
\begin{center}
\begin{tikzpicture}[>=stealth, node distance=3cm, auto,scale = 1, font = \small]
\def\shift{1.75}
\def\nodepos{4.5}
\node (A000) at (0,0) {$A^0\otimes B^0\otimes C^0$};
\node (A100) at (\nodepos,0) {$A^1\otimes B^0\otimes C^0$};
\node (A010) at (0,\nodepos) {$A^0\otimes B^1\otimes C^0$};
\node (A110) at (\nodepos,\nodepos) {$A^1\otimes B^1\otimes C^0$};

\node (A001) at (\shift,\shift) {$A^0\otimes B^0\otimes C^1$};
\node (A101) at (\nodepos+\shift,\shift) {$A^1\otimes B^0\otimes C^1$};
\node (A011) at (\shift,\nodepos+\shift) {$A^0\otimes B^1\otimes C^1$};
\node (A111) at (\nodepos+\shift,\nodepos+\shift) {$A^1\otimes B^1\otimes C^1$};

\draw[->] (A000) -- node[below] {$\tilde{\delta}^0_{A}$} (A100);
\draw[->] (A000) -- node[left]  {$\tilde\delta^0_{B}$} (A010);
\draw[->] (A100) -- node[right] {$\tilde\delta^0_{B}$} (A110);
\draw[->] (A010) -- node[above] {$\tilde\delta^0_{A}$} (A110);

\draw[->] (A001) -- node[below] {$\tilde\delta^0_{A}$} (A101);
\draw[->] (A001) -- node[left]  {$\tilde\delta^0_{B}$} (A011);
\draw[->] (A101) -- node[right] {$\tilde\delta^0_{B}$} (A111);
\draw[->] (A011) -- node[above] {$\tilde\delta^0_{A}$} (A111);

\draw[->] (A000) -- node[left] {$\tilde\delta^0_{C}$} (A001);
\draw[->] (A100) -- node[left] {$\tilde\delta^0_{C}$} (A101);
\draw[->] (A010) -- node[left] {$\tilde\delta^0_{C}$} (A011);
\draw[->] (A110) -- node[left] {$\tilde\delta^0_{C}$} (A111);
\end{tikzpicture}
\end{center}
The notation $\tilde\delta$ refers to tensor products of coboundary operator $\delta$ with the appropriate identity maps, which we have omitted for clarity.
This can be collapsed into a 4-term total complex with coboundary operators obtained through appropriate stacking of the input coboundary operators. For examples of this, see e.g.\ \cite[Appendix B]{eberhardt2024logicaloperatorsfoldtransversalgates}, \cite[Section II]{jacob2025singleshotdecodingfaulttolerantgates}, \cite[Appendix A]{menon2025magictricyclesefficientmagic}. A CSS code can be defined by excising 3 consecutive terms of the cochain complex. We remark that, as has been pointed out in \cite[Appendix B]{eberhardt2024logicaloperatorsfoldtransversalgates}, quantum codes defined on the product of 3 cochain complexes are not symmetric as the number of sectors with the same grading is different no matter how consecutive terms are excised from the total complex. The extension to $\Lambda$-dimensional hypercube complexes from the tensor product of $\Lambda$ 2-term cochain complexes is straightforward: the resultant $(\Lambda+1)$-term total complex $E$ has terms $E^k,k\in\{0,\cdots,\Lambda\}$ as direct sums of ${\Lambda\choose k}$ terms from the hypercube complex with the same grading, where ${\Lambda\choose k}$ are binomial coefficients.
\end{exa}
\subsection{Pre-orientations and cup products} \label{sec: cup product}
On complexes derived from manifolds, the cup product is a product on suitably-oriented cochains that outputs another cochain in the complex. In this work, we specifically consider the cup product on $0$- and $1$-cochains. To generalise the product to arbitrary complexes, including ones that we can construct from tensor products of classical codes, we first have to define a notion of \emph{pre-orientation}, a generalisation of cochain orientation to a general cochain complex.
\begin{definition}[Pre-orientation] \label{defn: preorientation}
    Consider a based cochain complex $A$ with terms in the cochain complex $\{A^i\}$, basis $\{X_i^A\}$, and coboundary operators $\delta$. Let us identify $i$-cochains with their support as binary vectors. A \emph{pre-orientation} is a disjoint union of 1-coboundaries $\delta(a^0)$ such that
    \begin{equation}
        \delta(a^0) = \delta_{in}(a^0)\uplus \delta_{out}(a^0)\uplus\delta_{free}(a^0)
    \end{equation}
for every $a^0\in X_0^A$. The disjoint union $\uplus$ indicates the relation
\begin{equation}
    \delta_{\alpha}(a^0)\cap\delta_{\beta}(a^0) = \emptyset,
\end{equation}
for $\alpha,\beta\in\{in,out,free\},\alpha\neq\beta$, where $\cap$ indicates the set intersection on the vector supports.
\end{definition}
Informally, thinking of $\delta$ as a matrix acting on vectors in $A^0$, a pre-orientation is the labelling of every non-zero entry in the matrix with the subscript \emph{in}, \emph{out}, or \emph{free}. 
In complexes constructed from the cellulations of manifolds, each $a^1\in X_1$ connects two $a^0,b^0\in X_0$. Similarly, in graphs, every edge connects two vertices. In both of these cases, pre-orientation is simply an assignment of edge orientation with $\delta_{free}(a^0)=0$ for all basis elements $a^0\in X_0$. 
For example, let an oriented graph be defined by an incidence matrix with columns indexing vertices and rows indexing edges. Then every row of the incidence matrix has 2 non-zero entries. Each entry is labelled by an \emph{in} or \emph{out} denoting whether edges start from or end on the vertex, and the disjoint partitions $\delta_{in}$ and $\delta_{out}$ are simply the splitting of non-zero entries in~$\delta$ based on their labelling. %

Classical LDPC codes can be defined by a parity check matrix, where it is not necessary for rows or columns of the check matrices to have only 2 non-zero entries. Interpret as an incidence matrix, the check matrices denote \emph{hypergraphs}, where \emph{hyperedges} can join arbitrary numbers of vertices. A pre-orientation can therefore be thought of as a generalisation of edge orientation to hypergraphs; and from topological complexes to general cochain complexes. 
\begin{definition}[Cup product on 0- and 1-cochains] \label{defn: cup product}
    Let 1-cochains (edges) be indexed by superscripts $^1$ and 0-cochains (vertices) be indexed by superscripts $^0$, and a basis $X_i$ be defined for $i$-cochains. Then the \emph{cup product} between $x^i\in X_i$ and $y^j\in X_j$, $i,j\in \{0,1\}$ is defined by
    \begin{subequations}
        \begin{align}
            x^1\cup x^0 &= x^1,x^1 \in \delta_{in}(x^0),\\
            x^0\cup x^1 &= x^1,x^1 \in \delta_{out}(x^0),\\
            x^0\cup y^0 &= x^0,x^0 =y^0
        \end{align}
    \end{subequations}
and zero otherwise. The product extends by linearity to all $i$-cochains.

Cup products on tensor product cochains can similarly be defined by distributing the cup product across the tensor product. Explicitly, for cochains $a,b,c,d$, we have that 
    \begin{equation}
            a\otimes b \cup c\otimes d := (a\cup c) \otimes (b\cup d).
    \end{equation}
\end{definition}
When there are more than 2 arguments, the cup product can be non-associative, and depends on the order the cup products are applied, defined by brackets. Consider, for example, the case where both input cochains are 1-cochains from a tensor product cochain complex. In \cite{jacob2025singleshotdecodingfaulttolerantgates}, the cup product is done with brackets originating left-to-right, regardless of where the 1-cochain argument sits in the tensor product. In \cite{menon2025magictricyclesefficientmagic,breuckmann2024cupsgatesicohomology}, the brackets originate \emph{from} the 1-cochain, always applied between 1-cochains and 0-cochains, ``inside-out". We define the former cup product as the \emph{non-associative} cup product, and the latter cup product the \emph{symmetric} cup product according to terminology of \cite{menon2025magictricyclesefficientmagic}. We can also, for example, define a third way non-associativity is handled, by originating brackets ``outside-in", where the cup product on the 1-cochain is handled last. We illustrate the differences with an example on the cup product of 3 cochains.
\begin{exa}{Cup products on 3-term tensor product cochains} \label{exa: cup products}
    Consider a cup product of the form
    \begin{equation}
        a^1\otimes b^0\otimes c^0 \cup d^0\otimes e^1\otimes f^0 \cup g^0\otimes h^0\otimes i^1
    \end{equation}
    where the superscript indicates whether it is a 0- or 1-cochain; and in this case are all basis elements. This form of cup product is encountered in the 3-copy-cup gate. In \cite{jacob2025singleshotdecodingfaulttolerantgates}, the cup product brackets are applied as
    \begin{subequations}
        \begin{align}
            &a^1\otimes b^0\otimes c^0 \cup d^0\otimes e^1\otimes f^0 \cup g^0\otimes h^0\otimes i^1\\
            &= ((a^1\otimes b^0\otimes c^0 \cup d^0\otimes e^1\otimes f^0) \cup g^0\otimes h^0\otimes i^1)
        \end{align}
    \end{subequations}
    where brackets always originate from the left. Then this has the simplification
    \begin{subequations}
        \begin{align}
            &((a^1\otimes b^0\otimes c^0 \cup d^0\otimes e^1\otimes f^0) \cup g^0\otimes h^0\otimes i^1)\\
            &= ((a^1\cup d^0) \cup g^0)\otimes ((b^0\cup e^1)\cup h^0) \otimes ((c^0\cup f^0) \cup i^1)\\
            &= a^1\otimes e^1\otimes i^1,\\
            a^1&\in \delta_{in}(d^0)\cap  \delta_{in}(g^0),\\
            e^1 &\in \delta_{out}(b^0)\cap \delta_{in}(h^0),\\
            i^1&\in \delta_{out}(c^0\cap f^0),
        \end{align}
    \end{subequations}
    following the rules of the cup product set out in \cref{defn: cup product}. The notation $\cap$ indicates the set intersection of cochains, which are in turn identified with their support as binary vectors as defined in \cref{defn: preorientation}. We write the condition on the $i^1$ term suggestively to highlight how the different cup products lead to different conditions on pre-orientation in \cref{defn: master} later on. On the other hand, doing the calculation with the symmetric cup product defined in \cite{menon2025magictricyclesefficientmagic,breuckmann2024cupsgatesicohomology} returns
    \begin{subequations}
        \begin{align}
            &a^1\otimes b^0\otimes c^0 \cup d^0\otimes e^1\otimes f^0 \cup g^0\otimes h^0\otimes i^1\\
            &= ((a^1\cup d^0)\cup g^0)\otimes (b^0\cup e^1\cup h^0)\otimes (c^0\cup (f^0\cup i^1)) \\
            &= a^1\otimes e^1\otimes i^1,\\
            a^1&\in \delta_{in}(d^0)\cap \delta_{in}(g^0),\\
            e^1 &\in \delta_{out}(b^0)\cap \delta_{in}(h^0),\\
            i^1&\in \delta_{out}(c^0)\cap \delta_{out}(f^0).
        \end{align}
    \end{subequations}
    Finally, we could also apply the cup product ``outside-in" with cup products on the 1-cochain applied last, which returns
    \begin{subequations}
        \begin{align}
            &a^1\otimes b^0\otimes c^0 \cup d^0\otimes e^1\otimes f^0 \cup g^0\otimes h^0\otimes i^1\\
            &= (a^1\cup (d^0\cup g^0))\otimes (b^0\cup e^1\cup h^0)\otimes ((c^0\cup f^0)\cup i^1) \\
            &= a^1\otimes e^1\otimes i^1,\\
            a^1&\in \delta_{in}(d^0\cap g^0),\\
            e^1 &\in \delta_{out}(b^0)\cap \delta_{in}(h^0),\\
            i^1&\in \delta_{out}(c^0\cap f^0).
        \end{align}
    \end{subequations}
    The order of all cup products in the second argument of the tensor product does not matter. The difference between the symmetric and non-associative cup products presents itself in conditions for the final $i^1$ term of the tensor product.
\end{exa}
\subsection{Integrals and the integrated Leibniz rule}
We now move on to the definition of integrals and the integrated Leibniz rule. The integrated Leibniz rule leads to the conditions that must be satisfied by the classical codes in order for the tensor product quantum code to have the copy-cup gate. Importantly, we point out in this section that the restriction to even check weight codes, as in the previous works of \cite{breuckmann2024cupsgatesicohomology,jacob2025singleshotdecodingfaulttolerantgates,menon2025magictricyclesefficientmagic} is in fact unnecessary. This lets us analyse a larger family of quantum codes, including the bivariate bicycle codes \cite{bravyi2023highthreshold} which are constructed as the balanced product of weight 3 abelian group algebra codes.

\begin{definition}[$n$-dimensional Integral] \label{integral}
An \emph{$n$-dimensional integral} on $n$-cochains from a complex $A$ is defined as
\begin{subequations}
    \begin{align}
    \int^A_n:A^n&\to \mathbb{F}_2,\\
    \int^A_n b &= 0
\end{align}
\end{subequations}
where $A^n$ indicates the space of $n$-cochains, and such that the integral maps any coboundary $b\in \operatorname{im} \delta^{n-1}$ to 0.
\end{definition}
Note that there is a clash of notation with the definite integral: the superscript indicates the cochain complex $A$, while the subscript $n$ indicates that the integral is a function on $n$-cochains. In 2-term cochain complexes where
\begin{center}
\begin{tikzcd}[cells={nodes={minimum height=2em}}]
A = A^0 \arrow[r, "\delta^0"] & A^1,
\end{tikzcd}
\end{center}
a candidate for an integral on 1-cochains is the Hamming weight of the cochain modulo 2, i.e.
\begin{subequations}\label{eqn: 1d integral}
   \begin{align} 
    \int^A_1:A^1&\to \mathbb{F}_2,\\
    \int^A_1 b &\mapsto |b| \mod 2,
\end{align}
\end{subequations}
This was considered in \cite{breuckmann2024cupsgatesicohomology}, and is what we will use as the integral for the rest of the paper.

Interpreting the 2-term complex as a classical code, if the check weight of the code is even, i.e. column weight of~$\delta^0$ is even, then all coboundaries in $\operatorname{im}\delta^0$ will have 0 integral. Let us remark that, by this definition, the integral on classical codes with odd check weight cannot be 0.

We can further define structure on integrals in the case of tensor product cochain complexes.
\begin{definition}[Integrals on tensor product cochain complexes] \label{defn: tensor product cochain}
Let $A_1$ and $A_2$ be two cochain complexes, where the terms $A_1^i,A_2^j$ in the complexes are equipped with integrals $\int^{A_1}_i,\int^{A_2}_j$ of dimensions $i$ and $j$ respectively. Then their tensor product cochain complex is equipped with the integral
\begin{subequations}
   \begin{align}
    \int_{i+j}^{A_1\otimes A_2}:A_1^i\otimes A_2^j&\to \mathbb{F}_2,\\
    \int_{i+j}^{A_1\otimes A_2} a_1\otimes a_2&\mapsto \int_i^{A_1}a_1\int_j^{A_2}a_2.
\end{align}
\end{subequations}
\end{definition}
Similar to the treatment of cup products, we distribute the integral across terms of the tensor product cochain. We can use \cref{defn: tensor product cochain} in particular with a $\Lambda$-fold tensor product cochain complex, where all tensor product arguments are 1-cochains.

\begin{definition} \label{lambda_integral}
Let a $(\Lambda+1)$-term cochain complex $A$ be defined by the tensor product of $\Lambda$ 2-term cochain complexes~$\{A_i\}$, $i\in \{1,\dotsc,\Lambda \}$. Then the $\Lambda$-dimensional integral on $\Lambda$-cochains is defined as
\begin{subequations}
   \begin{align}
    \int_\Lambda:\bigotimes_{i=1}^{\Lambda} A^{1}_i&\to \mathbb{F}_2,\\
    \int_\Lambda a_1\otimes \cdots \otimes a_\Lambda &\mapsto \int_1a_1\int_1a_2\cdots\int_1a_\Lambda.
\end{align}
\end{subequations}
\end{definition}
The superscript is omitted on individual integrals, and we take all 1-dimensional integrals to be the Hamming weight modulo 2, just as in \cref{eqn: 1d integral}. The higher-dimensional integrals and cup products allow us to analyse longer-term cochain complexes, and in particular the tensor product cochain complexes. We can now define the copy-cup gate on $\Lambda$ copies of the $(\Lambda+1)$-term cochain complex.
\begin{definition}[$\Lambda$-copy-cup gate] \label{defn: copy-cup gate}
    Consider $\Lambda$ copies of a $(\Lambda+1)$-term cochain complex constructed from the tensor product of $\Lambda$ 2-term cochain complexes. The $i^{th}$ copy of the $(\Lambda+1)$-term cochain complex has a basis of 1-cochains $\underline{X}_1^{(i)}$, and the union of the complexes is the quantum code with qubits identified with $\underline{X}=\underline{X}_1^{(1)}\uplus\cdots\uplus \underline{X}_1^{(\Lambda)}$. The \emph{$\Lambda$-copy-cup gate} is the logical gate implemented by the circuit
    \begin{equation} \label{copy-cup gate}
        \prod_{\substack{\{x_i\in \underline{X}_1^{(i)}\}\\
        i\in\{1,\cdots,\Lambda\}}}\left(C^{\Lambda-1}Z_{x_1\cdots x_\Lambda}\right)^{\int_\Lambda x_1\cup \cdots \cup x_\Lambda},
    \end{equation}
    where $\int_\Lambda$ is as defined in \cref{lambda_integral}. The circuit involves gates at the $\Lambda^{th}$ level of the Clifford hierarchy.
\end{definition}
We freely identify the gate and the circuit described in \cref{defn: copy-cup gate}. To show that the copy-cup gate acts as a logic gate on the quantum code, we must show that the integral
\begin{equation}
    \int_\Lambda x_1\cup\cdots \cup x_\Lambda \label{eqn: cohom inv}
\end{equation}
in \cref{copy-cup gate} is a cohomology invariant as defined in \cref{sec: cohom invariants}.
This requires the definition of the integrated Leibniz rule; and if the integrated Leibniz rule is satisfied, the copy-cup gate acts as an operation in cohomology.
\begin{definition}[Integrated Leibniz rule]
A 2-term cochain complex with coboundary operator $\delta$ and basis $X_0$ for 0-cochains satisfies an integrated Leibniz rule if
    \begin{equation} 
    \sum_{i=1}^{\Lambda} \int _1 a_1\cup \cdots \cup \delta(a_i)\cup \cdots a_\Lambda =0 \mod 2
\label{eqn: integrated leibniz} \end{equation}
for all $a_i\in X_0, i\in \{1,\dotsc,\Lambda\}$.
\end{definition}
\begin{lemma}[Cohomology invariant] \label{lem: cohom invariance} The tensor product of $\Lambda$ 2-term cochain complexes that satisfy the integrated Leibniz rule up to $\Lambda$ returns a cochain complex that has \cref{eqn: cohom inv} as a cohomology-invariant operation. As a quantum code with qubits defined on a basis $X_1$ of 1-cochains, the cochain complex will be equipped with the $\Lambda$-copy cup gate as defined in \cref{defn: copy-cup gate}.
\end{lemma}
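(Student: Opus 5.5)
We show that $\Phi(\gamma_1,\dots,\gamma_\Lambda):=\int_\Lambda \gamma_1\cup\cdots\cup\gamma_\Lambda$ is unchanged when any single argument $\gamma_j$ is replaced by $\gamma_j+\delta_E(\beta)$ with $\beta\in E^0$, provided the remaining arguments are cocycles. By multilinearity of the cup product and of the integral over $\mathbb{F}_2$, it is enough to prove
\begin{equation}
\int_\Lambda \gamma_1\cup\cdots\cup\delta_E(\beta)\cup\cdots\cup\gamma_\Lambda = 0
\end{equation}
for $\gamma_i\in\ker\delta_E^1$ ($i\neq j$) and $\beta\in E^0$. This vanishing holds for all $j$, so by induction on the number of replaced arguments $\Phi$ descends to $\mathcal{H}^1\times\cdots\times\mathcal{H}^1\to\mathbb{F}_2$. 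It is then a cohomology invariant in the sense of \cref{sec: cohom invariants}. The circuit in \cref{copy-cup gate} multiplies each logical basis state $\ket{[\gamma_1]}\otimes\cdots\otimes\ket{[\gamma_\Lambda]}$ by the phase $(-1)^{\Phi}$, since the $C^{\Lambda-1}Z$ exponents sum multilinearly to $\Phi(\gamma_1,\dots,\gamma_\Lambda)$ on every representative. Hence it preserves the codespace and implements the $\Lambda$-copy-cup gate.

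The central identity to establish is a global Leibniz formula on the total complex $E$:
\begin{equation}
\sum_{i=1}^{\Lambda}\int_\Lambda a_1\cup\cdots\cup\delta_E(a_i)\cup\cdots\cup a_\Lambda=0
\end{equation}
for all basis cochains $a_i\in E^0$, so that every product has total degree $\Lambda$. I would derive this from the factorwise rule \cref{eqn: integrated leibniz} in three steps.

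First, expand each $a_i$ as a tensor product $a_i^{(1)}\otimes\cdots\otimes a_i^{(\Lambda)}$ of basis elements of the $A_k^0$. Write $\delta_E=\sum_k \mathrm{id}\otimes\cdots\otimes\delta_{A_k}\otimes\cdots\otimes\mathrm{id}$.

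Second, use the distributive rule of \cref{defn: cup product} and \cref{defn: tensor product cochain} to factor each term as a product over tensor slots $k$ of one-dimensional integrals of cup products in $A_k$. A term survives only if, in every slot $k$, exactly one $\delta$ lands there; otherwise the slot contributes a $0$-cochain, to which $\int_1$ assigns nothing on $A_k^1$. Surviving terms are therefore indexed by bijections $\sigma$ from arguments to slots. In each slot $k$, the factor is $\int_1 a^{(k)}_1\cup\cdots\cup\delta(a^{(k)}_{\sigma^{-1}(k)})\cup\cdots\cup a^{(k)}_\Lambda$.

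Third, sum over the arguments $i$, keeping fixed which slot receives the derivative. The left-hand side becomes a sum over $\sigma$ of products of such factors, and each factor is controlled by the integrated Leibniz rule in $A_k$.

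Next I would pass from this basis identity to the needed vanishing. Apply the $E$-level Leibniz identity with $a_j=\beta$ and each other argument a cocycle $\gamma_i$. This requires the identity for general cochains, not just $0$-cochains, so the previous step should be carried out with each argument allowed to have any degree. The factorwise terms containing $\delta(\gamma_i)$ vanish because $\gamma_i$ is a cocycle, leaving only the term $\delta_E(\beta)$, which must therefore integrate to zero. The integral must also vanish on coboundaries of $E^{\Lambda-1}$, which is the $\Lambda$-argument instance of the same Leibniz formula.

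The main obstacle is the third step: the bookkeeping that turns the sum over bijections $\sigma$ into a sum whose vanishing follows from the slotwise rule. The difficulty is that non-associativity of the cup product means the bracketing (symmetric or non-associative) must be fixed consistently. I would handle it by fixing the bracketing convention of \cref{exa: cup products}. Then, in each slot $k$, I would group terms where the derivative falls on different arguments but all other slots are unchanged, and show that each group is exactly an instance of \cref{eqn: integrated leibniz} in $A_k$, which is zero mod $2$. Odd check weight enters only through $\int_1$ being Hamming parity, so the factorwise rule is used verbatim and no parity of $\delta$ is needed.
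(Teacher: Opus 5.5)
Your overall route is the right one and matches the structure of the result the paper cites: derive a Leibniz identity on the total complex $E$ from the factorwise rule \cref{eqn: integrated leibniz}, then apply it with one argument $\beta\in E^0$ and the others cocycles. Your descent and circuit-phase paragraphs are fine. The central step, however, is set up incorrectly and would not go through as written.

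With all $a_i\in E^0$, each summand $a_1\cup\cdots\cup\delta_E(a_i)\cup\cdots\cup a_\Lambda$ lies in $E^1$, not $E^\Lambda$, so for $\Lambda\ge 2$ the identity you first state is vacuous. Moreover, each summand contains a \emph{single} $\delta$. Your claim that every slot receives exactly one $\delta$, indexed by bijections $\sigma$, therefore describes the expansion of $\int_\Lambda\delta_E(a_1)\cup\cdots\cup\delta_E(a_\Lambda)$, not the Leibniz sum. At best that would show the form vanishes on tuples of coboundaries, which is not invariance, and your third step groups the wrong terms.

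The bookkeeping you need is this. Take pure tensors of basis elements with $\sum_i\deg a_i=\Lambda-1$, and let $d_l=\sum_i\deg a_i^{(l)}$. No $A_l$ has a degree-2 term, and $E^\Lambda=A_1^1\otimes\cdots\otimes A_\Lambda^1$. Hence a summand can be nonzero only if $d_l=1$ for every $l$ except one slot $k^*$ with $d_{k^*}=0$; otherwise every summand vanishes. In that case the derivative must fall in slot $k^*$ for every $i$, so
\[
\sum_{i=1}^{\Lambda}\int_\Lambda a_1\cup\cdots\cup\delta_E(a_i)\cup\cdots\cup a_\Lambda=\Bigl(\prod_{l\neq k^*}\int_1 a_1^{(l)}\cup\cdots\cup a_\Lambda^{(l)}\Bigr)\sum_{i=1}^{\Lambda}\int_1 a_1^{(k^*)}\cup\cdots\cup\delta\bigl(a_i^{(k^*)}\bigr)\cup\cdots\cup a_\Lambda^{(k^*)} .
\]
The prefactor is independent of $i$, including its bracketing. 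In each slot, the left-to-right or ``from the 1-cochain'' bracketing depends only on that slot's own arguments. The last sum is exactly \cref{eqn: integrated leibniz} in $A_{k^*}$ with all arguments in $X_0$. So there is no sum over bijections: the only bijection is the fixed placement of the degrees of the undifferentiated arguments. The non-associativity obstacle you anticipate does not arise, and you never need a factorwise rule for arguments of positive degree.

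Separately, drop the claim that $\int_\Lambda$ must vanish on $\operatorname{im}\delta_E^{\Lambda-1}$. It is not needed, since invariance only involves $\gamma_j\mapsto\gamma_j+\delta_E(\beta)$ with $\beta\in E^0$. It also does not follow from the $\Lambda$-argument rule, and it is false in exactly the odd-weight setting the paper targets. For $\Lambda=2$ and basis elements $x^1\in X_1^A$, $y^0\in X_0^B$, we have $\int_2\delta_E(x^1\otimes y^0)=\int_1x^1\cdot\int_1\delta_B(y^0)=|\delta_B|\bmod 2$. This equals $1$ for the weight-3 codes of \cref{thm:2lambda}, even though they satisfy the $\Lambda=2$ rule. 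The claim also contradicts your closing remark that no parity of $\delta$ is used.

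Finally, your argument covers $\otimes_{\mathbb{F}_2}$ only. To cover the balanced products $\otimes_R$ used throughout the paper, add that the pre-orientation of a group algebra code is compatible with the $G$-action. Then $\cup$ and $\int_\Lambda$ are well defined on coinvariants, and the identity above descends.
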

We direct readers interested in the proof to \cite[Lemma 5.1 and Theorem 6.1]{breuckmann2024cupsgatesicohomology}. The integrated Leibniz rule leads to the copy-cup gate respecting coboundaries in $\Lambda$ copies of the $\Lambda$-fold tensor product of 2-term cochain complexes, and therefore acting as a cohomology operation.

Understanding the integrated Leibniz rule lets us consider classical codes with odd check weights in the construction of quantum codes with 2- and 3-copy-cup gates, which have not been considered in previous works \cite{breuckmann2024cupsgatesicohomology,menon2025magictricyclesefficientmagic,jacob2025singleshotdecodingfaulttolerantgates}. When $\Lambda=1$, we recover the integral $\int_1$ in \cref{eqn: 1d integral} from the integrated Leibniz rule on a single copy of a classical code. This definition implies that classical codes with check weights of odd order, such as odd-term group algebra codes, do not satisfy the 1-copy-cup gate, since coboundaries (columns in the coboundary operator, resp. rows in the code parity check matrix) are no longer mapped to 0 modulo 2. 
Indeed, in \cite{breuckmann2024cupsgatesicohomology}, this was the reason why only classical codes with even check weight were considered when constructing the tensor product quantum codes.

However, in the proof of \cref{lem: cohom invariance}, particularly in \cite[Lemma 5.1]{breuckmann2024cupsgatesicohomology}, we only require that the integrated Leibniz rule is required to be satisfied up to $\Lambda$ for the $\Lambda$-copy-cup gate to be a cohomology invariant, and conditions for different $\Lambda$ need not be satisfied simultaneously. We could therefore have a classical code of odd weight that satisfies conditions for 2- and 3-copy-cup gates but not the 1-copy cup gate, which does not correspond to a logical quantum gate. Indeed, as we will see in later sections, conditions on the codes to satisfy the 2- and 3-copy-cup gates can be treated independently. There is therefore no need to restrict ourselves to discussion of codes with even check weight. Immediately, this lets us consider copy-cup gates on group algebra codes of odd order, including bivariate bicycle codes from \cite{bravyi2023highthreshold}.

\subsection{Conditions on codes from the integrated Leibniz rule}
We have seen that the integrated Leibniz rule needs to be satisfied by classical codes for the tensor product quantum code to have the copy-cup gate as a logic gate. We now show how the integrated Leibniz rule results in conditions on pre-orientation, and therefore restrictions on the classical code parity check matrices.

\begin{definition}[Master equations]
\label{defn: master}
The integrated Leibniz rule is satisfied by a classical code if, for all $a_i\in X_0$, it satisfies
    \begin{equation} \label{eqn: master}
        \sum_{j=1}^{\Lambda}\left\vert\delta_{out}\left(\bigcap_{i=1}^{j-1}(a_i)\right)\cap\delta (a_j) \cap \left(\bigcap_{k=j+1}^\Lambda \delta_{in}(a_k)\right) \right\vert = 0 \mod 2
    \end{equation}
for the non-associative cup product, where cup products are applied left-to-right,
    \begin{equation} \label{eqn: master2}
        \sum_{j=1}^{\Lambda}\left\vert\left(\bigcap_{i=1}^{j-1}\delta_{out}(a_i)\right)\cap\delta (a_j) \cap \left(\bigcap_{k=j+1}^\Lambda \delta_{in}(a_k)\right) \right\vert = 0 \mod 2
    \end{equation}
for the symmetric cup product defined in \cite{menon2025magictricyclesefficientmagic,breuckmann2024cupsgatesicohomology}, where brackets are applied to 1-cochains first, and 
    \begin{equation} \label{eqn: master3}
        \sum_{j=1}^{\Lambda}
        \left\vert
        \delta_{out}\left(\bigcap_{i=1}^{j-1}(a_i)\right)
        \cap\delta (a_j) 
        \cap  \delta_{in}\left(\bigcap_{k=j+1}^\Lambda(a_k)\right) 
        \right\vert = 0 \mod 2
    \end{equation}
for an outside-in cup product where cup products are applied to 1-cochains last.
\end{definition}
We remind the reader that, as in \cref{defn: preorientation}, $\cap$ refers to a set intersection, and we freely identify vectors with their supports. The definitions differ solely in how non-associativity in the cup product is handled, and we refer the reader to \cref{defn: cup product} and the discussion following it for an explicit example that illustrates the different cup products. 
Satisfaction of the integrated Leibniz rule follows from these equations by substituting them into \cref{eqn: integrated leibniz}.
From these equations, we will be able to derive all possible constraints on the classical codes.
We therefore label them as ``master equations". In particular, \cref{eqn: master2} was presented in \cite[Proposition 5.2]{breuckmann2024cupsgatesicohomology}, while \cref{eqn: master} has been considered in \cite{jacob2025singleshotdecodingfaulttolerantgates}.
For the rest of the paper, we will focus on the non-associative and symmetric cup products.

Generally, non-associativity in the cup product leads to different constraints on codes. However, when the pre-orientation satisfies the \emph{non-overlapping bits} condition, the cup product is associative, leading to the same set of code constraints from all equations in \cref{defn: master}.
\begin{definition}[Non-overlapping bits, Definition 5.3 in \cite{breuckmann2024cupsgatesicohomology}]
\label{nonoverlap}
A pre-orientation is said to satisfy the \emph{non-overlapping bits} condition if 
\begin{subequations}
\begin{align}
    \delta_{in}(a_1)\cap\delta_{in}(a_2)&=\emptyset,\qquad a_1\neq a_2,\\
    \delta_{out}(a_1)\cap\delta_{out}(a_2)&=\emptyset,\qquad a_1\neq a_2.
\end{align}
\end{subequations}
\end{definition}
The non-overlapping bits condition leads to associativity in the cup product. For example, it is immediately satisfied in group algebra codes that have single terms in $\delta_{in}$ and $\delta_{out}$; and also for directed graphs with the edges leading into/out of vertices being labelled as $\delta_{in}$ and $\delta_{out}$ terms respectively.

At this stage, let us simplify notation of the multiple intersections in the equations. We assign the $\delta_{in},\delta_{out},\delta_{free}$ terms to indices $i,o,f$ respectively, and write the intersections as indices according to their positions. For example,  $|\delta(a_1)\cap\delta_{in}(a_2)|$ is written
\begin{subequations}
\begin{align}
    |\delta(a_1)\cap\delta_{in}(a_2)| &=: (i+o+f)i\\
    &= ii+oi+fi\\
    &=|\delta_{in}(a_1)\cap\delta_{in}(a_2)| \\\notag&+ |\delta_{out}(a_1)\cap\delta_{in}(a_2)| \\\notag&+ |\delta_{free}(a_1)\cap\delta_{in}(a_2)|.
\end{align}
\end{subequations}
For the remainder of this section, we will refer to the intersections using this notation, and omit the $a_i$ which are understood to be from a basis $X_0$.
\begin{corollary}
When the non-overlapping bits condition of \cref{nonoverlap} is satisfied, we have that
\begin{subequations} \label{eqn: repeated numbers}
\begin{align}
    ii&=0,\qquad a_1\neq a_2,\\
    oo&=0, \qquad a_1\neq a_2.
\end{align}
\end{subequations}
\end{corollary}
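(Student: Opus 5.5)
The corollary follows by unpacking the shorthand notation and then invoking \cref{nonoverlap} directly. By the convention introduced just before the statement, the two-index word $ii$ denotes the cardinality $|\delta_{in}(a_1)\cap\delta_{in}(a_2)|$, where $a_1,a_2\in X_0$ are the basis $0$-cochains in the first and second positions. Likewise, $oo$ denotes $|\delta_{out}(a_1)\cap\delta_{out}(a_2)|$. Here, as in \cref{defn: preorientation}, each cochain is identified with its support as a binary vector.

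The argument is a single substitution. Suppose $a_1\neq a_2$. The non-overlapping bits condition of \cref{nonoverlap} states that $\delta_{in}(a_1)\cap\delta_{in}(a_2)=\emptyset$ and $\delta_{out}(a_1)\cap\delta_{out}(a_2)=\emptyset$. The cardinality of the empty set is $0$, so $ii=0$ and $oo=0$ as integers, and hence in particular modulo $2$, which is the sense in which these terms enter the master equations of \cref{defn: master}.

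The only point needing care is the scope of the claim, which I expect to be the sole (mild) obstacle. The statement says nothing about the diagonal case $a_1=a_2$. In that case $ii=|\delta_{in}(a_1)|$ and $oo=|\delta_{out}(a_1)|$, which are generally nonzero, so the restriction $a_1\neq a_2$ must be kept explicit. When these identities are later substituted into the master equations, the sum over all $a_i\in X_0$ must therefore be split into the case where the relevant indices are distinct, where the corollary kills the $ii$ and $oo$ terms, and the case where they coincide, where those terms survive. I would remark on this split after the proof, since it is exactly how the corollary will be used. The same reasoning applies verbatim to longer words: any intersection containing two $\delta_{in}$ factors, or two $\delta_{out}$ factors, evaluated on distinct vertices is empty, and so vanishes.
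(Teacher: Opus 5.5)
Your proof is correct and matches the paper, which states this corollary without proof as an immediate consequence of \cref{nonoverlap}: writing $ii=|\delta_{in}(a_1)\cap\delta_{in}(a_2)|$ and $oo=|\delta_{out}(a_1)\cap\delta_{out}(a_2)|$, both intersections are empty when $a_1\neq a_2$. One small wording fix in your closing remark: the master equations are imposed for every choice of $a_1,\dots,a_\Lambda\in X_0$ rather than summed over them, so what gets split by equalities among the $a_i$ is the family of conditions, exactly as in the proofs of \cref{prop ryan ver} and \cref{prop: lambda 3 conditions}.
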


\cref{defn: master,nonoverlap} let us determine the necessary conditions on the pre-orientation to have a valid cup product.

\begin{prop}[2-copy-cup gate conditions]\label{prop ryan ver}
Let $\Lambda=2$. The conditions, defined over modulo 2, on pre-orientation of a classical code for the 2-copy-cup gate are given as 
\begin{subequations} \label{eqn: CZ preorientation conditions}
\begin{align}
    i+o&=0, \qquad a_1=a_2,\label{eqn: parity CZ non associative}\\
    ii+fi+oo+of&=0, \qquad a_1\neq a_2.\label{eqn: CZ non associative 2}
\end{align}
\end{subequations}
When the no-overlapping bits condition is satisfied, the set of equations simplify to
\begin{subequations}
\begin{align}
    i+o&=0, \qquad a_1=a_2,\\
    of+fi &=0, \qquad a_1\neq a_2. \label{eqn: CZ associative}
\end{align}
\end{subequations}
\end{prop}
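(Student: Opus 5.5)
Specialise the master equations of \cref{defn: master} to $\Lambda=2$ and expand them in the index notation. For two arguments the non-associative and symmetric forms \cref{eqn: master,eqn: master2} agree, since each intersection involves only a single $a_i$ on either side of the $\delta(a_j)$ term. So it suffices to expand
\begin{equation}
|\delta(a_1)\cap\delta_{in}(a_2)| + |\delta_{out}(a_1)\cap\delta(a_2)| = 0 \mod 2
\end{equation}
for all $a_1,a_2\in X_0$. Using $\delta=\delta_{in}\uplus\delta_{out}\uplus\delta_{free}$ from \cref{defn: preorientation}, the sizes of the intersections are additive over the disjoint pieces. The first term is $ii+oi+fi$ and the second is $oi+oo+of$, so modulo 2 the $oi$ terms cancel. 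This gives
\begin{equation}
ii+fi+oo+of=0 \mod 2 .
\end{equation}

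Next, split into the cases $a_1=a_2$ and $a_1\neq a_2$. When $a_1=a_2$, disjointness of the partition forces every mixed intersection to vanish: $fi=0$ and $of=0$, because $\delta_{free}(a)\cap\delta_{in}(a)=\emptyset$ and $\delta_{out}(a)\cap\delta_{free}(a)=\emptyset$. Meanwhile $ii=|\delta_{in}(a)|$ and $oo=|\delta_{out}(a)|$, which the notation writes as $i$ and $o$. This yields \cref{eqn: parity CZ non associative}. When $a_1\neq a_2$ no such cancellation occurs, and the equation stands as \cref{eqn: CZ non associative 2}.

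Finally, under \cref{nonoverlap} the corollary preceding the proposition gives $ii=oo=0$ for $a_1\neq a_2$, so the second condition reduces to $of+fi=0$. The diagonal case is unchanged.

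The only delicate point is the bookkeeping. One has to verify that the index convention (position $k$ referring to $a_k$) is applied consistently so that the two $oi$ contributions are genuinely the same set and cancel. One also has to justify that the disjointness in \cref{defn: preorientation} lets cardinalities split additively over the in/out/free pieces. The rest of the argument is routine.
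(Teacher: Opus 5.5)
Your proposal is correct and follows the paper's proof essentially step for step. You specialise the master equations to $(i+o+f)i+o(i+o+f)=0$, cancel the repeated $oi$ term mod 2, use disjointness of the partition to reduce the $a_1=a_2$ case to $i+o=0$, and drop $ii$ and $oo$ for $a_1\neq a_2$ under the non-overlapping bits condition. Your observation that the non-associative and symmetric forms coincide at $\Lambda=2$ applies equally to the outside-in form, which the paper also notes.
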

\begin{proof}
When $\Lambda=2$ the symmetric, non-associative, and outside-in cup products return the same conditions from the master equations of \cref{defn: master}; that is,
\begin{equation}
    (i+o+f)i + o(i+o+f) = 0\mod2.
\end{equation}
Setting $a_1=a_2$ returns us the condition \cref{eqn: parity CZ non associative} by noting that other terms cancel since $\delta$ is a disjoint union of coboundaries. Similarly, setting $a_1\neq a_2$ and cancelling terms returns us \cref{eqn: CZ non associative 2}. When the no-overlapping bits condition is satisfied, terms with repeated indices cancel. This is because they correspond to intersections of the form in \cref{eqn: repeated numbers}; removing the terms returns \cref{eqn: CZ associative}.
\end{proof}
The cup product on 2 arguments is associative, so the conditions on the 2-copy-cup gate is the same for the symmetric, non-associative, and outside-in cup products.
We can reason conditions for the 3-copy-cup gate in a similar fashion. However, in this case, due to non-associativity, the master equations result in different sets of conditions depending on how the cup product is applied.
\begin{prop}[3-copy-cup gate conditions] \label{prop: lambda 3 conditions}
Let $\Lambda=3$. The conditions, defined over modulo 2, on pre-orientation of a classical code for the 3-copy-cup gate using the non-associative cup product are given as
\begin{subequations}
\label{eqn: non associative cup product}
\begin{align}
    i+o&=0,\label{eqn: parity CCZ non associative}\\
    oo+of&=0, \label{eqn: 22+23=0}\\
    fi&=0,\label{eqn:singular1}\\
    ii&=0,\label{eqn:singular2}\\
    iii+fii+ooi+ofi&=0. \label{eqn: 111331 blabla bla}
\end{align}
\end{subequations}
For the symmetric cup product, the conditions, defined over modulo 2, are
\begin{subequations}
\label{eqn: symmetric non associative cup product}
\begin{align}
    i+o&=0, \label{symmetric CCZ parity}\\
    ii+oo&=0, \label{eqn: 11+22 symmetric} \\
    of&=0, \label{eqn: 23=0 symmetric}\\
    fi&=0, \label{eqn: 31=0 symmetric}\\
    iii+fii+ofi+ooo+oof&=0. \label{eqn: 111 blablabla symmetric}
\end{align}
\end{subequations}
If $\delta_{free}=\emptyset$ then the equations simplify to
\begin{subequations} \label{eqn: menons free}
\begin{align}
    i+o&=0, \\
    ii+oo&=0, \\
    iii+ooo &=0,\\
    \delta_{free}:=\emptyset. \notag
\end{align}
\end{subequations}
For the outside-in cup product, the conditions, defined over modulo 2, are given as
\begin{subequations}
\label{eqn: outside in cup product}
\begin{align}
    i+o&=0,\\
    oo+of&=0, \label{singulara}\\
    ii+fi&=0, \label{singularb}\\
    oii+ooi+ofi&=0. \label{fat eqn outside in}
\end{align}
\end{subequations}
In all cases, if the no-overlapping bits condition is satisfied, the conditions simplify to
\begin{subequations} \label{eqn: associative CCZ conditions}
\begin{align}
    i+o&=0,\\
    of&=0, \label{eqn: associative cup product singular 1}\\
    fi&=0, \label{eqn: associative cup product singular 2}\\
    ofi&=0.
\end{align}
\end{subequations}
\end{prop}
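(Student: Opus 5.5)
The proof is a direct case analysis of the master equations of \cref{defn: master} with $\Lambda=3$, in the same way as the proof of \cref{prop ryan ver}. For each of the three cup products, I will first write the master equation in the index notation. For the non-associative product, \cref{eqn: master} reads
\begin{equation*}
(i+o+f)\,i\,i \;+\; o\,(i+o+f)\,i \;+\; \bigl[\delta_{out}(a_1\cap a_2)\cap\delta(a_3)\bigr] = 0 \mod 2 .
\end{equation*}
The last term vanishes unless $a_1=a_2$, in which case it equals $o(i+o+f)$ evaluated on the pair $(a_1,a_3)$. Since the $a_i$ are basis elements, only the equality pattern among $a_1,a_2,a_3$ matters, so I will split into five cases: all equal; $a_1=a_2\neq a_3$; $a_1=a_3\neq a_2$; $a_2=a_3\neq a_1$; and all distinct. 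In each case I use the disjointness in \cref{defn: preorientation}, namely $\delta_\alpha(a)\cap\delta_\beta(a)=\emptyset$ for $\alpha\neq\beta$. This collapses any intersection of terms attached to the same vertex: mixed labels give zero, and equal labels merge into a single factor.

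For the non-associative product, the cases go as follows. When all three vertices are equal, the first summand gives $i$, the second gives $0$ (since $o\cap i$ on the same vertex is empty), and the third gives $o$, so $i+o=0$. When $a_1=a_3\neq a_2$, only the first summand survives and gives $ii=0$. When $a_1=a_2\neq a_3$, the summands give $ii+oi+(oi+oo+of)$, which reduces to $oo+of=0$ after using $ii=0$. When $a_2=a_3\neq a_1$, they give $(ii+oi+fi)+oi$, hence $fi=0$. When all are distinct, the $oii$ terms cancel in pairs and leave $iii+fii+ooi+ofi=0$. This reproduces \cref{eqn: non associative cup product}.

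The symmetric case uses \cref{eqn: master2}, whose third summand is instead $o\,o\,(i+o+f)$, and the outside-in case uses \cref{eqn: master3}, whose first summand is $\delta(a_1)\cap\delta_{in}(a_2\cap a_3)$. I will run the same five cases for each. The subtle point, and the main thing to double-check, is which pair of vertices a surviving two-index term refers to. A two-index term records the ordered pair of distinct vertices, and different coincidence patterns can produce the same symbol with its labels in a different order. This is what yields, for example, $ii+oo=0$, $of=0$ and $fi=0$ separately in the symmetric case, versus $oo+of=0$ and $ii+fi=0$ in the outside-in case. The corresponding specialisation with $\delta_{free}=\emptyset$ follows by deleting every term containing $f$.

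Finally, under the non-overlapping bits condition, \cref{eqn: repeated numbers} kills every term in which $i$ appears at two distinct vertices, or $o$ does. This includes three-index terms with a repeated letter, such as $fii$, $ooi$, $oii$, $ooo$ and $oof$, since each contains an $ii$ or $oo$ intersection on distinct vertices. Each of the three systems then reduces to $i+o=0$, $of=0$, $fi=0$ and $ofi=0$. I expect the main obstacle to be bookkeeping rather than any conceptual step: tracking orientation labels against vertex positions in each coincidence case, especially for the symmetric and outside-in products, where the two-index conditions split differently.
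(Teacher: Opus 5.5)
Your proposal is correct and follows the paper's proof. Both expand each master equation in index notation and split on the coincidence pattern of $a_1,a_2,a_3$. Both collapse same-vertex intersections by disjointness, then use the $a_1=a_3\neq a_2$ equation ($ii=0$, resp.\ $ii+oo=0$) to simplify the other two-index conditions, and finally kill repeated-label terms under the non-overlapping bits condition. Your bookkeeping in the case $a_1=a_2\neq a_3$ (raw sum $ii+oo+of$, reduced via $ii=0$) is slightly more careful than the paper, which lists $oo+of=0$ for that case directly.
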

\begin{proof}
Let $\Lambda=3$. From \cref{eqn: master} we get the equation
\begin{equation}
    (i+o+f)ii + o(i+o+f)i+\delta_{a_1,a_2}oo(i+o+f)=0\mod 2
\end{equation}
for the non-associative cup product.
Note that the $\delta_{a_1,a_2}$ term refers to a delta function rather than a coboundary operator, and is non-zero only when $a_1=a_2$. The delta function comes from the $\delta_{out}(a_1\cap a_2)$ term of \cref{eqn: master} when $j=2$. Canceling terms and setting $a_1,a_2,a_3$ equalities, we obtain the set of equations over modulo 2
\begin{subequations}
\begin{align}
    i+o&=0,\qquad a_1=a_2=a_3\\
    oo+of&=0, \qquad a_1= a_2\neq a_3\\
    ii+fi&=0,\qquad a_1\neq a_2=a_3 \label{eqn: unsimplified}\\
    ii&=0,\qquad a_1=a_3\neq a_2 \label{eqn:singular3}\\
    iii+fii+ooi+ofi&=0 ,\qquad a_1\neq a_2\neq a_3\neq a_1,
\end{align}
\end{subequations}
where for this first set of equations we explicitly list out how to obtain each equation by setting equalities on $a_i$. Using \cref{eqn:singular3} we can simplify \cref{eqn: unsimplified}; omitting the $a_i$ equalities, we obtain the set of equations in \cref{eqn: non associative cup product}. Then, by considering the no-overlapping bits condition, all terms with repeated indices of the form $ii$, $oo$ cancel, and we get \cref{eqn: associative CCZ conditions}.

Similarly, from \cref{eqn: master2} for the symmetric cup product, we start with the equation
\begin{equation}
    (i+o+f)ii + o(i+o+f)i+oo(i+o+f)=0\mod2.
\end{equation}
Cancelling terms and setting $a_1,a_2,a_3$ equalities, we obtain the set of equations over modulo 2
\begin{subequations} \label{eqn: symmetric cup product CCZ}
\begin{align}
    i+o&=0,\\
    ii+oo+of&=0, \label{eqn: redundancy2}\\
    ii+fi+oo&=0,\label{eqn: redundancy3}\\
    ii+oo&=0, \label{eqn: redundancy}\\
    iii+fii+ofi+ooo+oof&=0.
\end{align}
\end{subequations}
Again we can use \cref{eqn: redundancy} to simplify \cref{eqn: redundancy2,eqn: redundancy3} to obtain the set of equations defined in \cref{eqn: symmetric non associative cup product}; and similar cancellation of terms when assuming the no-overlapping bits condition returns \cref{eqn: associative CCZ conditions}.

Finally, for the outside-in cup product, we start with \cref{eqn: master3} and set inequalities and cancel terms in the same way to obtain \cref{eqn: outside in cup product} and \cref{eqn: associative CCZ conditions}.
\end{proof}
\begin{remark}
    The conditions in \cref{eqn: non associative cup product} is the same set of equations as in \cite[Equations 53-57]{jacob2025singleshotdecodingfaulttolerantgates}. Similarly, the conditions of \cref{eqn: symmetric non associative cup product} and \cref{eqn: menons free} are given in \cite[Equations 46-50, Equations 52-54]{menon2025magictricyclesefficientmagic} respectively.
\end{remark}
In all cup products, when the no-overlapping bits condition (\cref{nonoverlap}) is satisfied, we obtain the same set of equations as \cref{eqn: associative CCZ conditions}. This makes sense since the different cup products are defined to handle non-associativity when there are 3 or more terms in the cup product. When the no-overlapping bits condition is fulfilled, the cup product is associative and we return to the same set of conditions. 

We also make a remark on conditions such as those in \cref{eqn: 31=0 symmetric,eqn: 23=0 symmetric,eqn:singular1,singulara,singularb}. These equations involve intersections terms between different partitions, such that a partition can be ``factored out". We call these \emph{singular conditions}, and we will see that singular conditions lead to unsatisfiable constraints in pre-orientations of group algebra codes. 
\begin{definition}[Singular conditions] \label{defn: singular conditions}
    Consider the systems of equations presented in \cref{prop ryan ver} and \cref{prop: lambda 3 conditions}, which are derived from the master equations in \cref{defn: master}. The systems of equations involve intersection terms between partitions $\delta_{in},\delta_{out}, \delta_{free}$ of the coboundary operators, acting on $a_1,a_2\in X_0$ if it is a double intersection, and $a_1,a_2,a_3\in X_0$ if it is a triple intersection. 
    A \emph{singular condition} is an equation involving double or triple intersection terms, such that terms in the equation can be written as
    \begin{subequations}
    \begin{align}
        |\delta_{\alpha}(a_1)\cap\delta_{\beta_1}(a_2)|
        +\cdots
        +|\delta_{\alpha}(a_1)\cap\delta_{\beta_m}(a_2)|
        &= |\delta_{\alpha}(a_1)\cap(\delta_{\beta_1}+\cdots+\delta_{\beta_{m}})(a_2)|, 
        \text{ or } \\
        |\delta_{\beta_1}(a_1)\cap\delta_{\alpha}(a_2)|+\cdots+
        |\delta_{\beta_m}(a_1)\cap\delta_{\alpha}(a_2)|
        &=|(\delta_{\beta_1}+\cdots+\delta_{\beta_{m}})(a_1)\cap \delta_{\alpha}(a_2)|
        \end{align}
    \end{subequations}
    if the equation involves double intersection terms, or    \begin{subequations}
    \begin{align}
        &|\delta_{\alpha}(a_1)\cap\delta_{\beta_1}(a_2)\cap\delta_{\gamma_1}(a_3)|
        +\cdots
        +|\delta_{\alpha}(a_1)\cap\delta_{\beta_m}(a_2)\cap\delta_{\gamma_m}(a_3)|, 
        \text{ or } \\
        &|\delta_{\beta_1}(a_1)\cap\delta_{\gamma_1}(a_2)\cap\delta_{\alpha}(a_3)|+\cdots+
        |\delta_{\beta_m}(a_1)\cap\delta_{\gamma_m}(a_2)\cap\delta_{\alpha}(a_3)|
        \end{align}
    \end{subequations}  
    if the equation involves triple intersection terms, for $\alpha,\beta_i,\gamma_i\in \{in,out,free\}$ and $i\in\{1,\cdots,m\}$. We exclude equations comprising a single term of the form
    \begin{subequations}
        \begin{align}
            &|\delta_{\alpha}(a_1)\cap\delta_{\alpha}(a_2)| \text{ or}\\
            &|\delta_{\beta}(a_1)\cap\delta_{\alpha}(a_2)\cap\delta_{\alpha}(a_3)|\text{ or}\\
            &|\delta_{\alpha}(a_1)\cap\delta_{\alpha}(a_2)\cap\delta_{\beta}(a_3)|\text{ or}\\
            &|\delta_{\alpha}(a_1)\cap\delta_{\beta}(a_2)\cap\delta_{\alpha}(a_3)|
        \end{align}
    \end{subequations}
    for $\alpha,\beta\in \{in,out,free\}$ from this definition.
\end{definition}

\begin{exa}{Singular conditions}
    Equations with single terms involving different coboundary partitions such as 
    \begin{align*}
        of&=|\delta_{out}(a_1)\cap\delta_{free}(a_2)|\\
        &=0 \mod 2
    \end{align*} from \cref{eqn: 23=0 symmetric} can be factored trivially, and are singular conditions. 

    Equations with multiple terms such as
    \begin{align*}
        oo+of&=|\delta_{out}(a_1)\cap\delta_{out}(a_2)| + |\delta_{out}(a_1)\cap\delta_{free}(a_2)|\\
        &=|\delta_{out}(a_1)\cap(\delta_{out}(a_2)+\delta_{free}(a_2))|\\
        &=o(o+f)\\
        &=0 \mod 2
    \end{align*}
    from \cref{singulara}, where $\delta_{out}(a_1)$ can be factored out, are singular conditions. 
    
    On the other hand, equations with intersection terms involving single coboundary partitions such as
    \begin{align*}
        ii&=|\delta_{in}(a_1)\cap\delta_{in}(a_2)|\\
        &=0\mod2
    \end{align*}
    from \cref{eqn:singular2} and \cref{eqn:singular3} are \textit{not} singular conditions since they consist of intersection terms of a \emph{single} coboundary partition. The reason we make the distinction will be clear in \cref{lemma: expansion}. There, we consider the case where the factored coboundary partition has a single element, which leads to unsatisfiable conditions on group algebra codes. If the equation contains single intersection terms from the same coboundary partition, then when the partition has a single element, the equation is immediately satisfied and does not lead to contradiction.
\end{exa}
\section{Copy-cup gates on 3-term Group Algebra Codes} \label{sec: application}
In this section, we focus on applying the conditions derived for pre-orientation on 3-term group algebra codes. 
We recall from \cref{sec: group algebra} that a group algebra code can be defined by a 2-term cochain complex with coboundary operator $\delta\in\mathbb{F}_2[G]$; and since we identify group elements with their regular representation, which are permutation matrices, we can consider the Hamming weight of \ $|\delta|$ to be the check weight of the code. Moreover, since $|\delta g| = |\delta|$ for $g\in G$, we can also define the Hamming weight of the pre-orientations $|\delta_{in}|,|\delta_{out}|,|\delta_{free}|$ as the number of group elements in each partition, without reference to multiplication with $a_i\in G$ in \cref{prop ryan ver} and \cref{prop: lambda 3 conditions}.

\subsection{Conditions for 2-copy-cup gates}
\begin{theorem} \label{thm:2lambda}
    Consider a 3-term group algebra code with $\delta=g_1^{}+g_2^{}+g_3^{}\in \mathbb{F}_2[G]$ following the definitions in \cref{sec: group algebra}. The code has a valid pre-orientation that satisfies conditions for the 2-copy-cup gate as defined in \cref{prop ryan ver} that is dependent on group elements assigned to partitions $\delta_{in},\delta_{out}$, and $\delta_{free}$. The pre-orientation satisfies the conditions
    \begin{subequations}
        \begin{align}
            g_3^{-1}g_2 &= g_1^{-1}g_3,\\
            \delta_{in}&:=g_1,\notag\\
            \delta_{out}&:=g_2,\notag\\
            \delta_{free}&:=g_3.\notag
        \end{align}
    \end{subequations}
\end{theorem}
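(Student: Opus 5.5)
The plan is to reduce the conditions of \cref{prop ryan ver} to a counting statement about quotients of group elements, and then enumerate the possible pre-orientations of a weight-3 check element. For a basis element $a\in X_0=G$, the coboundary $\delta(a)$ is the set $\{ga : g\in\delta\}$. For a pre-orientation, $\delta_\alpha(a)$ is the analogous set for the group elements assigned to partition $\alpha$. Hence for $a_1\neq a_2$,
\begin{equation}
|\delta_\alpha(a_1)\cap\delta_\beta(a_2)| = \#\{(g,h)\in\delta_\alpha\times\delta_\beta : h^{-1}g = a_2a_1^{-1}\}.
\end{equation}
Since $a_2a_1^{-1}$ ranges over all $t\neq 1$, condition \cref{eqn: CZ non associative 2} says the following. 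Form the multiset $M$ of quotients $h^{-1}g$ coming from the pairs of type $ii$, $fi$, $oo$ and $of$. Then every non-identity element must occur in $M$ with even multiplicity. In other words, the non-identity quotients in $M$ must admit a perfect matching into equal pairs. The condition \cref{eqn: parity CZ non associative} simply says $|\delta_{in}|+|\delta_{out}|$ is even.

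Given the parity condition, the candidate weight patterns $(|\delta_{in}|,|\delta_{out}|,|\delta_{free}|)$ are $(0,0,3)$, $(2,0,1)$, $(0,2,1)$ and $(1,1,1)$. I would dispose of $(0,0,3)$ as the trivial pre-orientation: every cup product of a $1$-cochain with a $0$-cochain vanishes, so the resulting gate is the identity. I would then treat the two weight-$2$ patterns, which are mirror images under exchanging in and out together with the order of the arguments. It therefore suffices to analyse $\delta_{in}=\{g_1,g_2\}$, $\delta_{free}=\{g_3\}$.

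In this case
\begin{equation}
M=\{g_2^{-1}g_1,\ g_1^{-1}g_2,\ g_1^{-1}g_3,\ g_2^{-1}g_3\}.
\end{equation}
All four entries are non-identity because the $g_i$ are distinct, and $g_1^{-1}g_3\neq g_2^{-1}g_3$. I would then run through the admissible matchings and try to show each forces a collision among the $g_i$ or a degeneracy. This is the step I expect to be the main obstacle. The matching $g_2^{-1}g_1=g_1^{-1}g_3$, $g_1^{-1}g_2=g_2^{-1}g_3$ is not immediately contradictory; in the abelian case it reduces to $g_1^3=g_2^3$. So I would first try to exploit that $fi+ii=(f+i)i$ is a singular-type condition with a factored partition. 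If it cannot be excluded, I would record it as an extra family.

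Finally, for $(1,1,1)$ with $\delta_{in}=g_1$, $\delta_{out}=g_2$, $\delta_{free}=g_3$, the non-overlapping bits condition holds automatically. The condition therefore reduces to \cref{eqn: CZ associative}, so $M=\{g_3^{-1}g_2,\ g_1^{-1}g_3\}$. Both entries are non-identity, so the only possible matching is $g_3^{-1}g_2=g_1^{-1}g_3$, which is the claimed condition. Conversely, this equality makes every multiplicity even, which establishes sufficiency. Relabelling the $g_i$ covers every assignment of the three elements to the three partitions.
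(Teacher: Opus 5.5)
Your treatment of the $(1,1,1)$ assignment is correct and is essentially the paper's argument. The paper writes the equivalent form $if+fo$, which is $fi+of$ with $a_1,a_2$ swapped, and your multiset reformulation is an exact version of its pairing heuristic.

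The gap is in the $(2,0,1)$ and $(0,2,1)$ assignments, which cannot be excluded by the route you propose. The matching $g_2^{-1}g_1=g_1^{-1}g_3$, $g_1^{-1}g_2=g_2^{-1}g_3$ is realizable. Left-multiply so that $g_1=1$; this leaves every quotient $h^{-1}g$ unchanged. The matching then reads $g_3=g_2^{-1}=g_2^{2}$, so $g_2$ has order $3$ and $\delta=1+x+x^2$ with $x^3=1$. Concretely, in $G=C_3$ with $\delta_{in}=\{1,x\}$ and $\delta_{free}=\{x^2\}$, one checks directly that $ii=fi=1$ for every $a_1\neq a_2$. The singular-condition idea does not help either. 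In $(f+i)i$ the factored partition is $\delta_{in}=\{g_1,g_2\}$, and \cref{lemma: expansion} only yields a contradiction when the factored partition is a single element. So the equations of \cref{prop ryan ver} genuinely admit this assignment. Your fallback of recording it as an extra family would contradict the statement that the only valid pre-orientation is the $(1,1,1)$ one.

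The missing ingredient is the non-triviality requirement the paper imposes at the very start: $|\delta_{in}|\neq 0$ and $|\delta_{out}|\neq 0$. This is the same reasoning you already used for $(0,0,3)$. If $\delta_{out}=\emptyset$, the rule $x^0\cup x^1=x^1$ never applies; if $\delta_{in}=\emptyset$, the rule $x^1\cup x^0=x^1$ never applies. So when the factors carry such a pre-orientation, both types of term $(p\otimes 1)\cup(1\otimes q)$ and $(1\otimes q)\cup(p\otimes 1)$ in the copy-cup circuit vanish, and the gate is the identity. Combined with the parity condition $|\delta_{in}|+|\delta_{out}|\equiv 0 \pmod 2$, this leaves only $(1,1,1)$ and makes your weight-$2$ case analysis unnecessary. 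You should state explicitly that this exclusion is a non-triviality hypothesis, not a consequence of the integrated Leibniz conditions. Your own computation shows those conditions alone are satisfiable for $(2,0,1)$.
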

\begin{proof}
    For a non-trivial pre-orientation we require $|\delta_{in}|,|\delta_{out}|\neq 0$ so that the copy-cup gate as defined in \cref{copy-cup gate} is not trivial. Then, by \cref{eqn: parity CZ non associative} we have that $|\delta_{in}|=|\delta_{out}|=1$.
    We define an \emph{assignment} as the number of terms in each partition. Here, having the assignment $(|\delta_{in}|,|\delta_{out}|,|\delta_{free}|)=(1,1,1)$ immediately satisfies the no-overlapping bits condition. Then the main condition that needs to be satisfied is \cref{eqn: CZ associative}, that is
\begin{equation}
    |\delta_{in}(a_1)\cap \delta_{free}(a_2)|+|\delta_{free}(a_1)\cap \delta_{out}(a_2)|=0\mod 2
    \end{equation}
for $a_i\in G, i\in \{1,2,3\}$. To satisfy this condition, we expand the intersections, then pair terms such that if $a_1,a_2$ are chosen so that either term in the pair is non-zero, the other term will be non-zero as well. The expanded equation looks like
\begin{equation}
    |g_1a_1\cap g_3a_2| + |g_3a_1\cap g_2a_2| = 0\mod 2.
\end{equation}
We remind the reader that $g_i,a_i$ are treated as binary vectors freely identified with their support.
When $|g_1a_1\cap g_3a_2|=1$, we have that $g_1a_1=g_3a_2$. Then it follows that $a_1=g_1^{-1}g_3a_2$. Similarly, when $|g_3a_1\cap g_2a_2|=1$, we have that $a_1=g_3^{-1}g_2a_2$. Putting the conditions together, we have 
\begin{equation} \label{eqn: commutation condition for lambda 2}
    g_3^{-1}g_2 = g_1^{-1}g_3.
\end{equation}
\end{proof}
We see that satisfying pre-orientation amounts to pairing terms in the system of equations given by \cref{eqn: CZ preorientation conditions} after expanding all terms in the intersections. Pairing terms ensures that the integrated Leibniz rule is satisfied, and each pair effectively returns a condition on the group algebra code. This principle will let us approach generalisation of the pre-orientation for copy-cup gates to group algebra codes with higher weight as a combinatorial problem.

Let us make some statements on valid pre-orientations before going to specific examples of bivariate bicycle codes.
\begin{corollary} \label{cor: cyclic shifts}
    For a 3-term group algebra code with~$\delta=g_1^{}+g_2^{}+g_3^{}\in \mathbb{F}_2[G]$, and that satisfies the condition in \cref{thm:2lambda}, multiplication of $\delta$ by single elements in the group will not affect the pre-orientation.
\end{corollary}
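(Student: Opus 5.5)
The plan is to show that the single algebraic condition of \cref{thm:2lambda}, namely $g_3^{-1}g_2=g_1^{-1}g_3$ with $\delta_{in}:=g_1$, $\delta_{out}:=g_2$, $\delta_{free}:=g_3$, is preserved when $\delta$ is replaced by $h\delta$ or $\delta h$ for a fixed $h\in G$. The pre-orientation is transported along in the obvious way: the term $hg_1$ (resp.\ $g_1h$) is labelled \emph{in}, $hg_2$ (resp.\ $g_2h$) is labelled \emph{out}, and $hg_3$ (resp.\ $g_3h$) is labelled \emph{free}. Since $|h\delta|=|\delta h|=3$ by the group rearrangement theorem, the new code is again a 3-term group algebra code. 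The assignment $(|\delta_{in}|,|\delta_{out}|,|\delta_{free}|)$ is still $(1,1,1)$, so the parity condition \cref{eqn: parity CZ non associative} and the no-overlapping bits condition continue to hold. By \cref{prop ryan ver}, only \cref{eqn: CZ associative} then needs to be checked.

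For left multiplication, the condition for the new elements reads $(hg_3)^{-1}(hg_2)=(hg_1)^{-1}(hg_3)$. Since $(hg_i)^{-1}=g_i^{-1}h^{-1}$, the factors $h^{-1}h$ cancel on both sides, and this reduces literally to $g_3^{-1}g_2=g_1^{-1}g_3$.

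For right multiplication, the new condition is $h^{-1}g_3^{-1}g_2h=h^{-1}g_1^{-1}g_3h$. This is the original equation conjugated by $h$, and conjugation is a bijection, so the new condition holds if and only if the old one does. The argument therefore does not use commutativity of $G$.

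As a sanity check independent of the closed-form condition, I would also argue directly at the level of intersections, mirroring the proof of \cref{thm:2lambda}. For right multiplication, each term has the form $|g_ih a_1\cap g_jh a_2|$. The substitution $a_k\mapsto h a_k$ is a bijection of the basis $X_0=G$, so the family of equations indexed by all pairs $a_1\neq a_2$ is unchanged. For left multiplication, $|hg_ia_1\cap hg_ja_2|=|g_ia_1\cap g_ja_2|$, because left multiplication by $h$ is a permutation of $G$ and preserves supports and intersection sizes. Either way, the pairing of the terms $|g_1a_1\cap g_3a_2|$ and $|g_3a_1\cap g_2a_2|$ is carried over intact.

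There is no real obstacle here. The only point needing care is the non-abelian case with right multiplication, where the condition is not literally identical but only conjugate. I would handle this by stating explicitly that the corollary asserts validity of the transported pre-orientation, not equality of the group elements appearing in the condition.
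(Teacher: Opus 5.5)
Your proposal is correct and follows essentially the paper's argument. The paper rewrites the condition as $g_2 = g_3g_1^{-1}g_3$, checks that multiplication by a fixed group element cancels out, and also states the contrapositive, which your ``if and only if'' covers. In that rearranged form, right multiplication also cancels literally, since $g_3h\,(g_1h)^{-1}\,g_3h = g_3g_1^{-1}g_3h$, so your conjugation caveat for the non-abelian case is unnecessary, though harmless.
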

\begin{proof}
    From \cref{eqn: commutation condition for lambda 2}, we have that $g_2 = g_3g_1^{-1}g_3$. Now consider an arbitrary group element $g_4$ that we left multiply on all the group elements to get a new group algebra code defined by $g_4g_1+g_4g_2+g_4g_3$. Then the new RHS of the equation becomes
    \begin{equation}
        g_4g_3(g_1^{-1}g_4^{-1})g_4g_3 = g_4g_3g_1^{-1}g_3
    \end{equation}
    which is exactly the new LHS. A similar cancellation will occur if we had right-multiplied instead. Therefore, if $\delta$ did not satisfy pre-orientation, multiplying the $\delta$ by single group elements will not lead to valid pre-orientation. 
\end{proof}
\begin{corollary}
    The examples of bivariate bicycle codes in \cite{bravyi2023highthreshold} do not have a pre-orientation satisfying the 2-copy-cup gate.
\end{corollary}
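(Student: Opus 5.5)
By \cref{thm:2lambda}, a 3-term group algebra code has a valid pre-orientation for the 2-copy-cup gate only if its terms can be labelled so that $\delta_{in}=g_1$, $\delta_{out}=g_2$, $\delta_{free}=g_3$ with $g_3^{-1}g_2=g_1^{-1}g_3$. The plan is to show that, for every bivariate bicycle code in \cite{bravyi2023highthreshold}, at least one of the two classical factors admits no such labelling.

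The bivariate bicycle codes are built from abelian group algebra codes over $\mathbb{F}_2[\mathbb{Z}_\ell\times\mathbb{Z}_m]$, so I will write the group additively and represent each monomial $x^ay^b$ by its exponent vector $(a,b)$. The condition then becomes
\begin{equation*}
2\,e_{free}=e_{in}+e_{out}\pmod{(\ell,m)}.
\end{equation*}
This condition is symmetric in $e_{in}$ and $e_{out}$. It is also invariant under negation, which is the antipode relating check element and coboundary, and under translation, by \cref{cor: cyclic shifts}. So for each check polynomial it suffices to try the three choices of which term is free. It is also enough to show that one of the two polynomials $A,B$ fails, since each classical factor must carry its own pre-orientation.

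The check is a finite case analysis over the examples listed in \cite{bravyi2023highthreshold}.

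For the codes using $A=x^3+y+y^2$, with $(\ell,m)\in\{(6,6),(9,6),(12,6),(12,12)\}$, the three choices give the following.
\begin{itemize}
\item Free term $y$: we need $(0,2)=(3,2)$, which forces $\ell\mid 3$.
\item Free term $y^2$: we need $(0,4)=(3,1)$, which forces $\ell\mid 3$ and $m\mid 3$.
\item Free term $x^3$: we need $(6,0)=(0,3)$, which forces $m\mid 3$.
\end{itemize}
Each of these fails for every listed $(\ell,m)$. For the $[[288,12,18]]$ code, $A=x^3+y^{12}+y^4$; I will handle it the same way, or instead use $B=y^3+x+x^2$, which is the same polynomial with the roles of $x$ and $y$ swapped.

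For the $[[90,8,10]]$ code, with $\ell=15$ and $m=3$, take $B=1+x^2+x^7$ in $\mathbb{Z}_{15}$.
\begin{itemize}
\item Free term $1$: we need $0\equiv 9$.
\item Free term $x^2$: we need $4\equiv 7$.
\item Free term $x^7$: we need $14\equiv 2$.
\end{itemize}
All three fail modulo $15$.

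The only real obstacle is bookkeeping. I must make sure every listed code is covered with its exact $(\ell,m)$ and polynomials, and I must justify that checking three labellings is exhaustive. The latter follows because the proof of \cref{thm:2lambda} forces the assignment $(|\delta_{in}|,|\delta_{out}|,|\delta_{free}|)=(1,1,1)$, and the condition is symmetric under swapping in and out.
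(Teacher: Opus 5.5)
You take the same route as the paper: a finite inspection of the listed check polynomials against the condition $g_3^{-1}g_2=g_1^{-1}g_3$ of \cref{thm:2lambda}. Your reductions are sound: the forced $(1,1,1)$ assignment, the in/out symmetry, antipode invariance, and the fact that one failing factor suffices. Every computation you actually carry out is correct. (\cref{cor: cyclic shifts} is not what reduces you to three labellings; the paper uses it to rule out monomial multiples of the listed polynomials.)

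The gap is the enumeration, and for a statement like this the enumeration is the whole proof. Your case list does not match the table in \cite{bravyi2023highthreshold}:
\begin{itemize}
\item The $[[288,12,18]]$ code has $(\ell,m)=(12,12)$, $A=x^3+y^2+y^7$ and $B=y^3+x+x^2$. It involves neither $x^3+y+y^2$ (so $(12,12)$ does not belong in that list) nor $x^3+y^{12}+y^4$.
\item You do not treat $[[360,12,\le 24]]$, which has $(\ell,m)=(30,6)$, $A=x^9+y+y^2$ and $B=y^3+x^{25}+x^{26}$.
\item You do not treat $[[756,16,\le 34]]$, which has $(\ell,m)=(21,18)$, $A=x^3+y^{10}+y^{17}$ and $B=y^5+x^3+x^{19}$.
\end{itemize}
Until these are handled, the proposal does not establish the corollary as stated.

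Each missing case closes with the same three-line check you already use:
\begin{itemize}
\item For the $288$ code, your fallback to $B=y^3+x+x^2$ works. Each of its three labellings requires $\ell\mid 3$ or $m\mid 3$, and both fail for $\ell=m=12$. Alternatively, $A=x^3+y^2+y^7$ would need $(6,0)\equiv(0,9)$, $(0,4)\equiv(3,7)$ or $(0,14)\equiv(3,2)$ modulo $(12,12)$, all impossible.
\item For $A=x^9+y+y^2$ modulo $(30,6)$, taking $x^9$, $y$ or $y^2$ as free requires $(18,0)\equiv(0,3)$, $(0,2)\equiv(9,2)$ or $(0,4)\equiv(9,1)$ respectively. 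All are impossible since $30\nmid 18$ and $30\nmid 9$.
\item For $A=x^3+y^{10}+y^{17}$ modulo $(21,18)$, the requirements are $(6,0)\equiv(0,27)$, $(0,20)\equiv(3,17)$ and $(0,34)\equiv(3,10)$. All are impossible since $21\nmid 6$ and $21\nmid 3$.
\end{itemize}
With these cases added and the $288$ code's polynomials corrected, your case analysis is complete and the corollary follows.
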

\begin{proof}
    The bivariate bicycle codes are a special case of \cref{thm:2lambda} where the groups are abelian. Upon inspection, none of the check polynomials presented in \cite{bravyi2023highthreshold} have the required form to satisfy \cref{thm:2lambda}; and by \cref{cor: cyclic shifts}, multiplication by monomials cannot lead to a valid pre-orientation.
\end{proof}
\subsection{Conditions for 3-copy-cup gates}
\begin{theorem} \label{thm: 3 term no copy-cup}
    3-term group algebra codes do not have a valid pre-orientation that satisfies conditions for the 3-copy-cup gate as described in \cref{prop: lambda 3 conditions}.
\end{theorem}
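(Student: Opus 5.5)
Following the approach of \cref{thm:2lambda}, I will first pin down the assignment and then derive a contradiction from a singular condition. All three versions of \cref{prop: lambda 3 conditions} contain the parity condition $i+o=0$ when $a_1=a_2=a_3$. With $a_1=a_2=a_3=a$ the intersections are just the sets themselves, so this reads $|\delta_{in}|+|\delta_{out}|=0 \bmod 2$. A nontrivial gate needs $|\delta_{in}|,|\delta_{out}|\geq 1$, and since $|\delta|=3$ this forces $|\delta_{in}|=|\delta_{out}|=1$, hence $|\delta_{free}|=1$.

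Without loss of generality I take $\delta_{in}=g_1$, $\delta_{out}=g_2$, $\delta_{free}=g_3$. With single terms in $\delta_{in}$ and $\delta_{out}$, the non-overlapping bits condition of \cref{nonoverlap} holds automatically. So for every cup product (non-associative, symmetric, outside-in) the conditions reduce to \cref{eqn: associative CCZ conditions}, which in particular contain the singular conditions $of=0$ and $fi=0$ for all $a_1\neq a_2$.

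The key step is to show that $fi=0$ cannot hold for every pair $a_1\neq a_2\in G$. Expanding, it reads $|g_3a_1\cap g_1a_2|=0 \bmod 2$. Since each side is a single group element, the intersection has size $1$ exactly when $g_3a_1=g_1a_2$. Given any $a_2$, choose $a_1=g_3^{-1}g_1a_2$. Because $g_1\neq g_3$ (the three terms of $\delta$ are distinct), we get $a_1\neq a_2$, and the intersection has odd size $1$, violating the condition. The same argument applied to $|g_2a_1\cap g_3a_2|$ breaks $of=0$, though one violated condition suffices.

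The only point needing care is the reduction to the associative conditions. If instead I use the unsimplified systems \cref{eqn: non associative cup product}, \cref{eqn: symmetric non associative cup product} or \cref{eqn: outside in cup product}, I must check that each still contains a singular condition that fails here. The symmetric system contains $fi=0$ directly, the non-associative system contains $fi=0$, and the outside-in system contains $ii+fi=0$, where $ii$ vanishes for $a_1\neq a_2$ because $\delta_{in}$ is a single term. In every case the singular condition is violated by the pair constructed above, so there is no valid pre-orientation. This mirrors the role of singular conditions anticipated after \cref{defn: singular conditions}.
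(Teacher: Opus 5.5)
Your proof is correct and follows the paper's argument. The parity condition forces the assignment $(1,1,1)$, and the singular conditions $of=0$ and $fi=0$ then expand to a single unpaired term, which you explicitly violate with $a_1=g_3^{-1}g_1a_2\neq a_2$; your added check that the unsimplified non-associative, symmetric and outside-in systems each still contain a violated singular condition (using $ii=0$ for a single-element $\delta_{in}$) makes explicit what the paper leaves implicit.
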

\begin{proof}
    By \cref{eqn: parity CCZ non associative}, we can only assign ~$|\delta_{in}| = |\delta_{out}|=|\delta_{free}|=1$, i.e. the assignment~$(|\delta_{in}|,|\delta_{out}|,|\delta_{free}|)=(1,1,1)$. However, from \cref{eqn: associative cup product singular 1,eqn: associative cup product singular 2}, expansion of the~$|\delta_{out}(a_1)\cap \delta_{free}(a_2)|$ and $|\delta_{free}(a_1)\cap \delta_{in}(a_2)|$ in the singular conditions have single terms that cannot be paired. Therefore by properties of group algebras there can always be $a_1,a_2$ chosen such that these equations cannot be satisfied.
\end{proof}
This principle can be straightforwardly extended. 

\begin{lemma} \label{lemma: expansion}
    Consider the system of equations obtained from setting inequalities on $\{a_i\}$ from the master equations defined in \cref{eqn: master,eqn: master2}. Each equation involves expanding intersections of the coboundary partitions. Then,

    \begin{enumerate}
        \item if, for any expanded equation, the number of non-zero intersection terms is odd, or
        \item if coboundary partitions factored out in singular conditions as defined in \cref{defn: singular conditions} have only single terms,
    \end{enumerate}
    the system of equations cannot be satisfied, and the assignment does not have pre-orientation for the copy-cup gate.
\end{lemma}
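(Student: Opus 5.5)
The plan is to make precise the pairing principle already used in the proofs of \cref{thm:2lambda} and \cref{thm: 3 term no copy-cup}. Fix an assignment $(|\delta_{in}|,|\delta_{out}|,|\delta_{free}|)$, i.e.\ write each partition as a sum of distinct group elements, $\delta_{\alpha}=\sum_{g\in S_\alpha} g$, with $S_{in},S_{out},S_{free}$ disjoint subsets of $G$. Each intersection term such as $|\delta_\alpha(a_1)\cap\delta_\beta(a_2)|$ expands as $\sum_{g\in S_\alpha,h\in S_\beta}[ga_1=ha_2]$. Triple terms expand the same way, with indicator $[ga_1=ha_2=ka_3]$. Every expanded summand is thus an indicator of a group-theoretic coincidence of the form $a_1=g^{-1}ha_2$, and possibly $a_3=k^{-1}ha_2$. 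The equation holds for all admissible $a_i$ exactly when, for every choice of $a_i$, the number of summands equal to $1$ is even.

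For item 1, I would read ``the number of non-zero intersection terms is odd'' as follows: the summands cannot be partitioned into pairs whose indicator conditions coincide as relations between the $a_i$. Let $g^{-1}h$ (resp.\ the pair $(g^{-1}h,k^{-1}h)$) be the \emph{label} of a summand. The summands group into classes with equal labels, and the step to carry out is to show that some class has odd size. Pick a representative summand of that class and choose $a_1:=g^{-1}ha_2$ (and $a_3$ accordingly) with $a_2$ arbitrary. Then every summand in that class evaluates to $1$, while summands with different labels evaluate to $0$, because their defining equations force a different value of $a_1$ (or $a_3$). The left-hand side is then odd, so the equation is violated. Two details need checking here. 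First, the chosen $a_i$ must respect the distinctness constraints ($a_1\neq a_2$, etc.) under which the equation was derived. Second, when some coincidence label is the identity, the equation for $a_1\neq a_2$ may not be reachable. Both are handled because the summands with identity label are exactly the excluded diagonal case, and labels are distinct group elements.

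For item 2, reduce to item 1. In a singular condition a single partition $\delta_\alpha$ factors out, giving $|\delta_\alpha(a_1)\cap(\delta_{\beta_1}+\cdots+\delta_{\beta_m})(a_2)|$. Since the $\delta_{\beta_j}$ are parts of a disjoint union, the sum is again a sum of distinct elements $h\in T=\bigcup_j S_{\beta_j}$, and $T$ is disjoint from $S_\alpha$ when $\alpha\notin\{\beta_j\}$. This is why the single-partition terms $\delta_\alpha\cap\delta_\alpha$ are excluded in \cref{defn: singular conditions}. If $S_\alpha=\{g\}$, the labels $g^{-1}h$ for $h\in T$ are pairwise distinct, by left cancellation in $G$. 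So every class has size one, and item 1 applies with any nonempty $T$. The triple-intersection case is handled the same way: once $\alpha$ is fixed to a single $g$, distinct $(h,k)$ give distinct label pairs. If $T$ is empty, the condition is vacuous and that case is not included in the claim.

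The main obstacle is making item 1 rigorous. Two labels that differ as group elements could still force the same $a_1$ once $a_2$ is fixed, and several pairs of summands could collide only for special $a_2$. I would argue that in the labelled form, $a_1=\ell a_2$ and $a_1=\ell' a_2$ agree iff $\ell=\ell'$. This holds uniformly in $a_2$, so the parity of each label class is independent of the choice and the odd-class witness is valid for every $a_2$.
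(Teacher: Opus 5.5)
Your proposal is essentially the paper's argument, made rigorous. The paper says an odd number of terms cannot be paired, and that a single-element factored partition forces $g_i=g_j$. You group summands by the label $g^{-1}h$ (or the label pair) and use $\ell a_2=\ell' a_2\iff \ell=\ell'$. This turns ``cannot be paired'' into an explicit choice of $a_i$ that violates the equation, a step the paper only asserts. Your injectivity of labels for a fixed single element is exactly the paper's $g_i=g_j$ contradiction.

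One boundary fix is needed in item 2. The vacuous exception is not only $T=\emptyset$. It is any case where no summand has pairwise distinct group elements.

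\begin{itemize}
\item Singular conditions such as $o(o+f)$ have $\alpha\in\{\beta_j\}$. The exclusion in \cref{defn: singular conditions} covers only equations consisting of a single such term, so your claim that $T$ is disjoint from $S_\alpha$ does not cover this case.
\item With $S_{out}=\{g\}$ and $\delta_{free}=\emptyset$, the only summand is $|ga_1\cap ga_2|$, which vanishes for $a_1\neq a_2$. The condition then holds trivially, even though $T$ is nonempty.
\item This is how the paper itself treats $oo+of$ in the $(3,1,0)$ case of \cref{thm: weight 4 non associative}.
\end{itemize}
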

\begin{proof}
    If expanded intersections have an odd number of terms, then we cannot pair terms up such that if $\{a_i\}$ are chosen where one of the terms is non-zero, the other will be non-zero as well. Therefore the system of equations cannot be satisfied, and we obtain the first statement.
    
    For the second statement, consider expanding intersection terms for equations involving singular conditions. One of the involved coboundary partitions only has a single term, say $g_1$. Then every expanded term will be of the form~$|g_1\cap \cdots|$ if the single-term coboundary partition is on the left, or~$|\cdots\cap g_1|$ if the partition is on the right. In the first case, pairing terms up will result in conditions of the form
    \begin{subequations}
        \begin{align}
            g_1a_1 &= g_ia_2,\\
            g_1a_1 &= g_ja_2\\
            \implies g_i&=g_j
        \end{align}
        which is a contradiction on the group algebra elements being distinct. The same analysis applies to the second case. Therefore the system of equations cannot be satisfied, and the assignment does not have pre-orientation for the copy-cup gate.
    \end{subequations}
\end{proof}
\begin{exa}{Contradiction}
    Consider the condition of \cref{eqn: 22+23=0}. If $\delta_{out} := g_1$ and $\delta_{free} := g_2+g_3$, expanding the intersection and keeping only non-zero terms returns the equation
    \begin{equation}
        |g_1a_1\cap g_2a_2|+|g_1a_1\cap g_3a_2)| = 0 \mod 2.
    \end{equation}
    There are two terms in the equation, and we pair them so that if $a_1,a_2$ are chosen such that $|g_1a_1\cap g_2a_2|=1$, we will have that $|g_1a_1\cap g_3a_2|=1$. However, pairing them leads to the condition
    \begin{equation}
        g_2=g_3,
    \end{equation}
    which is a contradiction on group elements of the code being distinct.\\

    Similarly, consider the conditions of \cref{eqn: 31=0 symmetric}, and where $|\delta_{in}|=|\delta_{out}|=3$. There are 9 terms in the expansion of the intersections. After pairing, there will be a single unpaired term of the form $|g_ia_1\cap g_ja_2|$. Then choosing $a_1 = g_i^{-1}g_ja_2$ will lead to the integrated Leibniz rule not being satisfied.
\end{exa}
Using \cref{lemma: expansion} we can immediately extend the discussion of the 3-copy-cup gate to group algebra codes with weight 5.
\begin{corollary}
    Weight 5 group algebra codes do not have non-trivial pre-orientation that satisfies conditions for the symmetric or non-associative 3-copy-cup gates described in \cref{prop: lambda 3 conditions}.
\end{corollary}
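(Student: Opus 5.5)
Since $|\delta|=5$, I would enumerate the possible assignments $(|\delta_{in}|,|\delta_{out}|,|\delta_{free}|)$ and rule each one out with \cref{lemma: expansion}.

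\textbf{Step 1: reduce to a few assignments.} Setting $a_1=a_2=a_3$ in both \cref{eqn: non associative cup product} and \cref{eqn: symmetric non associative cup product} gives the parity condition $|\delta_{in}|+|\delta_{out}|\equiv 0 \pmod 2$. Non-triviality of the gate requires $|\delta_{in}|,|\delta_{out}|\ge 1$. The three partitions sum to $5$, so $|\delta_{free}|$ is odd. This leaves only $(1,1,3)$, $(2,2,1)$, $(1,3,1)$ and $(3,1,1)$.

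\textbf{Step 2: handle the symmetric cup product.} The relevant singular conditions are \cref{eqn: 23=0 symmetric} ($of=0$) and \cref{eqn: 31=0 symmetric} ($fi=0$).
\begin{itemize}
\item In $(1,1,3)$, $\delta_{out}$ is a single element and is factored out in $of$. Case 2 of \cref{lemma: expansion} gives a contradiction $g_i=g_j$ between distinct free elements.
\item In $(2,2,1)$, $(1,3,1)$ and $(3,1,1)$, $\delta_{free}$ is a single element, factored out in both $of$ and $fi$. Again case 2 applies.
\end{itemize}
If one prefers the parity argument: $of$ expands into $|\delta_{out}||\delta_{free}|$ terms, which is odd for $(1,1,3)$, $(1,3,1)$ and $(3,1,1)$, so case 1 of \cref{lemma: expansion} applies directly. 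For $(2,2,1)$ one uses the single free element and case 2.

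\textbf{Step 3: handle the non-associative cup product.} Here the singular conditions are \cref{eqn:singular1} ($fi=0$) and \cref{eqn: 22+23=0} ($o(o+f)=0$).
\begin{itemize}
\item Whenever $|\delta_{free}|=1$ or $|\delta_{in}|=1$, the condition $fi=0$ has a single-element partition factored out, so case 2 kills it. This covers all four assignments, because each has $|\delta_{in}|=1$ or $|\delta_{free}|=1$.
\end{itemize}
Care is needed with case 2 when the \emph{single} element is the factored partition and the other partition also has size one. Then the equation is one term $|g a_1\cap g' a_2|$, which is odd and cannot be paired. This is covered by case 1 and by the argument in \cref{thm: 3 term no copy-cup}.

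\textbf{Main obstacle.} The delicate point is making sure every assignment genuinely hits a singular condition with a factored single-element partition, or an odd term count. The \emph{ii}/\emph{oo} non-singular terms could in principle rescue things, but they do not appear in $of$ and $fi$ in the symmetric case, nor in $fi$ in the non-associative case. So I expect the case check to go through, though I would verify the $(2,2,1)$ case explicitly by expanding $fi$ as $|g_f a_1\cap g_{i_1}a_2|+|g_f a_1\cap g_{i_2}a_2|$. Pairing these two terms forces $g_{i_1}=g_{i_2}$, which is a contradiction.
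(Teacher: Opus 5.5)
Your proposal is correct and follows essentially the paper's route: use the parity condition (with non-trivial $\delta_{in},\delta_{out}$) to reduce to the assignments $(1,1,3),(2,2,1),(1,3,1),(3,1,1)$, then rule each one out with \cref{lemma: expansion}, via either an odd term count or a single-element partition factored out of a singular condition. The only difference is that the paper uses the single condition $fi=0$ for every assignment, since it appears in both the symmetric and non-associative systems; your additional use of $of$ in the symmetric case is therefore valid but unnecessary.
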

\begin{proof}
    The pre-orientation must satisfy the conditions listed in \cref{eqn: non associative cup product} or \cref{eqn: symmetric non associative cup product}. To satisfy the parity condition of \cref{eqn: parity CCZ non associative} or \cref{symmetric CCZ parity}, we assign
    \begin{equation}
        (|\delta_{in}|,|\delta_{out}|,|\delta_{free}|)\in \{(1,1,3),(2,2,1),(1,3,1), (3,1,1)\}.
    \end{equation}
    All assignments have $|\delta_{in}|=1$ or $|\delta_{out}|=1$ or $|\delta_{free}|=1$. Therefore in the expansion of the intersection~$|\delta_{free}(a_1)\cap \delta_{in}(a_2)|$ for \cref{eqn:singular1} in the non-associative cup product conditions, or \cref{eqn: 31=0 symmetric} in the symmetric cup product conditions, the number of terms will either be odd, or lead to contradictions on the group algebra elements when paired according to \cref{lemma: expansion}.

\end{proof}
\begin{remark}
    While we have shown that weight 3 and weight 5 group algebra codes cannot satisfy pre-orientation conditions for the 3-copy-cup gate, higher odd weight codes might still have valid pre-orientation as assignments are not immediately disqualified by \cref{lemma: expansion}. For example, a 7-term group algebra code has assignment $(|\delta_{in}|,|\delta_{out}|,|\delta_{free}|)=(2,2,3)$ where no partition has single terms, and so singular conditions are not immediately unsatisfiable.
\end{remark}
\subsection{Examples of codes with constant-depth $\operatorname{CZ}$ gates}
We have shown that 3-term group algebra codes cannot satisfy pre-orientation for the 3-copy-cup gate, but can satisfy that for 2-copy-cup gates. We attempt to find CSS codes constructed from the balanced product of weight 3 abelian group algebra codes that satisfy the conditions for pre-orientation, and have constant-depth $\operatorname{CZ}$ via the 2-copy-cup gate. The search is conducted in abelian groups of order $\{36,45,54,72\}$ to compare with the bivariate bicycle codes of \cite[Table 1]{bravyi2023highthreshold}. The constructed quantum codes have stabiliser check weight of 6. To reduce the search space, we fix the first term of the group algebra code to be the identity element 1, but do not restrict ourselves to bivariate groups. We find that, in most cases, codes from univariate cyclic groups have equivalent parameters to those found from bivariate or multivariate abelian groups. Results are summarised in \cref{table: weight 3 CZ}. Code distances are exact and solved by formulating distance-finding as a problem in integer programming. The calculation for obtaining gates in the copy-cup circuit are detailed in \cref{appendix: 2 copy-cup gate}, while the check for non-trivial logic is given in \cref{alg: 2copy nontrivial}.
\begin{table}[h]
\renewcommand{\arraystretch}{1.4}
\centering
\begin{tabular}{|c|c|c|c|}
\hline
$G$ & $[[n,k,d]]$  & $p_1$ & $p_2$\\\hline
$C_{9}\times C_4$ & [[72,8,6]] & $1+x^4+x^8$ & $1+x^{2}+xy^2$ \\\hline
$C_{9}\times C_4$ & [[72,4,8]] & $1+x^4y^3+x^8y^2$ & $1+x^5y^2+x^7y$ \\\hline
$C_{5}\times C_3\times C_3$ & [[90,8,6]] & $1+x^2y+x^4y^2$ & $1+x^3z^2+x^4z$ \\\hline
$C_{9}\times C_5$ & [[90,4,10]] & $1+x^4y^2+x^8y^4$ & $1+xy^4+x^5y^2$ \\\hline
$C_{27}\times C_2$ & [[108,12,6]] & $1+x^{12}+x^{24}$ & $1+x^3y+x^6$ \\\hline
$C_{27}\times C_2$ & [[108,4,10]] & $1+x^{13}y+x^{26}$ & $1+x^{23}+x^{25}y$\\\hline
$C_{9}\times C_8$ & [[144,16,6]] & $1+x^4+x^8$ & $1+xy^4+x^2$\\\hline
$C_{9}\times C_8$ & [[144,8,8]] & $1+x^4y^6+x^8y^4$ & $1+x^5y^4+ x^7y^2$\\\hline
$C_{9}\times C_8$ & [[144,4,12]] & $1+ x^4y^6 + x^8y^4 $ & $1+ x^4y^7+x^8y^6 $\\\hline
\end{tabular}
\caption{CSS codes found through a numerical search equipped with constant-depth $\operatorname{CZ}$ gates from the 2-copy-cup gate. Codes have $\operatorname{X}$ and $\operatorname{Z}$ check weight of 6, and are constructed as balanced products of weight 3 abelian group algebra codes with pre-orientation for the 2-copy-cup gate. The 2-copy-cup gate is simulated numerically according to \cref{alg: 2copy nontrivial}, and non-trivial action on the codespace is found indicating a constant-depth $\operatorname{CZ}$ gate. Groups presented have up to 3 generators $x,y,z$, whose orders are defined left-to-right by the subscripts on the products of cyclic groups in the first column.}
\label{table: weight 3 CZ}
\end{table}
\begin{remark} \label{rem: fundamental}
Almost all of the groups presented in the table admit a univariate presentation. For example, we have that $C_9\times C_4\cong C_{36},C_{27}\times C_2 \cong C_{54},$ and $C_9\times C_8\cong C_{72}$ as the input groups have coprime order, and by the fundamental theorem of finite abelian groups \cite[Chapter 5.2 Thm.\ 3]{Dummit_Foote_2004}. In the table, we present the codes as returned by a numerical search. A different presentation is possible by identifying appropriate group isomorphisms.
\end{remark}
\section{Group Algebra Codes with Higher Weight} \label{sec: higher weight}
In this section, we reason the existence of pre-orientations for group algebra codes with higher weight. Analysis follows from that for the 3-term group algebra codes. 
Finding conditions on the code can be generalised to a combinatorial argument. With higher weights, the calculation is unwieldy, and so we turn to numerical simulation to return valid configurations and conditions on the group elements. To ease notation in the next few sections, let us define tuples and triples of numbers as intersections of group elements. Pairs of tuples and triples can be written $(ij,kl)$ or $(ijk,lmn)$ that then impose conditions on group algebra elements. 
\begin{exa}{Notation}
Tuples and triples of numbers refer to group elements in double and triple intersection terms. For example, the triple $145$ refers to the intersection 
\begin{equation}
    g_1a_1\cap g_4a_2\cap g_5a_3\notag
\end{equation}
and the tuple $23$ refers to the intersection  
\begin{equation}
    g_2a_1\cap g_3a_2.,\notag
\end{equation}
where again we remind the reader that $g_i,a_i$ are $\mathbb{F}_2$-vectors that we freely identify with their support. Just as in notation for the coboundary partitions, the group elements act on basis elements $\{a_1,a_2,a_3\}\in X_0$ implicitly based on the position of the index.\\

Pairs of tuples and triples impose conditions on group algebra elements. For example, the tuple $(12,34)$ refers to $g_1a_1\cap g_2a_2$ and $g_3a_1\cap g_4a_2$ being satisfied at the same time for any $a_1,a_2$, which returns the condition 
\begin{equation}
    g_1^{-1}g_2=g_3^{-1}g_4. \notag
\end{equation}
The tuple $(123,456)$ refers to $g_1a_1\cap g_2a_2\cap g_3a_3$ being paired with $g_4a_1\cap g_5a_2\cap g_6a_3$, returning 2 independent conditions 
\begin{subequations}
    \begin{align}
        g_1^{-1}g_2&=g_4^{-1}g_5,\notag \\
        g_2^{-1}g_3&=g_5^{-1}g_6.
    \end{align}
\end{subequations}
\end{exa}
Following the previous sections, determining conditions on group algebra codes can be summarised as
\begin{enumerate} \label{pseudocode}
    \item Determine an assignment of terms in the group algebra coboundary operator to $\delta_{in},\delta_{out},\delta_{free}$.
    \item Expand out intersection terms for the system of equations for the copy-cup gate pre-orientation. For the 2-copy-cup gate, the system of equations is given by \cref{eqn: CZ preorientation conditions}, and is the same for both the symmetric and non-associative cup products. For the 3-copy-cup gates, the system of equations is given by \cref{eqn: non associative cup product} for the non-associative cup product, or \cref{eqn: symmetric non associative cup product} for the symmetric cup product. 
    \item For each equation in the system of equations, remove any 0 terms that arise from intersections with repeated group elements. For example, terms of the form $|g_ia_1\cap g_i a_2|$ are zero since $a_1\neq a_2$. Similarly, triple intersections are zero if they contain duplicated group elements, of the form $(i,i,j),(i,j,i),(j,i,i)$.
    \item If there are an odd number of remaining non-zero terms, or if there are singular conditions involving coboundary partitions with single terms, the condition cannot be satisfied by \cref{lemma: expansion}, and the assignment is invalid.
    \item Otherwise, pair all terms. A \emph{configuration} is defined to be a pairing of all terms for each equation. Terms are paired so if $a_1,a_2\in X_0$ are chosen such that a double intersection is non-zero, then the same $a_1,a_2$ will cause its paired double intersection to also be non-zero. Similarly, if $a_1,a_2,a_3\in X_0$ are chosen such that a triple intersection is non-zero, then the same $a_1,a_2,a_3$ will cause its paired triple intersection to also be non-zero.
    \item Determine if the configuration leads to contradictions in the group elements. If yes, then the configuration is not valid.
    \item Each pair of double intersections imposes a constraint on the group algebra elements. A tuple pair 
    \begin{equation}
        (12,34) \notag
    \end{equation} 
    indexing the pair of double intersections 
    \begin{equation}
        (g_1a_1\cap g_2a_2,g_3a_1\cap g_4a_2) \notag
    \end{equation}
    cannot have duplicated group elements $g_1=g_3$ or $g_2=g_4$ as they imply $g_1^{-1}g_2=g_1^{-1}g_4$ and $g_1^{-1}g_2=g_3^{-1}g_2$ respectively if both of them are to be simultaneously satisfied. 

    \item Each pair of triple intersections impose 2 constraints on the group algebra elements. A triple pair
    \begin{equation}
        (123,456), \notag
    \end{equation}
    indexing a pair of triple intersections
    \begin{equation}
        (g_1a_1\cap g_2a_2 \cap g_3a_3,g_4a_1\cap g_5a_2 \cap g_6a_3) \notag
    \end{equation}
    cannot have duplicated group elements $g_1=g_4,g_2=g_5,g_3=g_6$. Moreover, each new condition imposed on the group elements must not contradict conditions from other pairings. For example, while the pairs $(123,456)$ and $(125,461)$ are not contradictory on their own, they impose the restrictions $g_1^{-1}g_2 = g_4^{-1}g_5$ and $g_1^{-1}g_2=g_4^{-1}g_6$ respectively, which leads to the contradiction $g_5=g_6$.

    \item If there are no valid configurations, then the assignment is invalid. Else, there exist group algebra codes with pre-orientation based on the conditions of the valid configurations.
\end{enumerate}
This summarises and extends the analysis from the previous section for 3-term group algebra codes, and makes it amenable towards numerical implementation. The number of possible configurations scales as $n!/(n/2)!(2^{n/2})$ where $n$ is the number of terms to pair, and so can increase rapidly even for small group algebra codes. This problem is known as perfect matching in graph theory \cite{perfect_matching}. Pairs of terms can be interpret as graph edges, while the terms themselves are vertices. Constraints are updated dynamically, where selecting edges might render others invalid. Finding \emph{a} solution to the perfect matching problem can be done in polynomial time, such as by using Edmonds' blossom algorithm \cite{Edmonds_1965}, which has been used in other areas of quantum error correction, for example in minimum weight perfect matching decoders \cite{higgott2021pymatchingpythonpackagedecoding}. However, enumerating \emph{all} solutions to a perfect matching problem is computationally challenging.
\subsection{4-term group algebra codes}
For 4-term group algebra codes, the number of terms is still small enough to perform the calculation for pre-orientation by hand. In this section, we explicitly determine the conditions for pre-orientation for the 2- and 3-copy-cup gates on weight 4 group algebra codes. We do this analytically, and check the calculation using an implementation of the algorithm described in the previous section.
\subsubsection{Conditions for 2-copy-cup gates}
\begin{theorem} \label{thm: 2 copy cup}
    Consider a 4-term group algebra code with $\delta=g_1+g_2+g_3+g_4\in \mathbb{F}_2[G]$, following the definitions in \cref{sec: group algebra}. The code has pre-orientation satisfying the 2-copy-cup gate conditions of \cref{prop ryan ver} dependent on group elements assigned to partitions $\delta_{in},\delta_{out}$, and $\delta_{free}$. The pre-orientation must satisfy the conditions
    \begin{subequations} \label{eqn: CZ 112 assignment}
    \begin{align}
        g_3^{-1}g_1 &= g_2^{-1}g_3\text{ and }g_4^{-1}g_1=g_2^{-1}g_4, \text{or }\\
        g_3^{-1}g_1 &= g_2^{-1}g_4 \text{ and }g_4^{-1}g_1= g_2^{-1}g_3 , \label{eqn: eberhardt ex}\\
        \delta_{in}&:=g_1,\notag\\
        \delta_{out}&:=g_2,\notag\\
        \delta_{free}&:=g_3+g_4\notag
        \end{align}
    \end{subequations}
    or 
    \begin{subequations} \label{eqn: 11+22=0 assignment 220}
    \begin{align}
            g_1^{-1}g_2 &= g_3^{-1}g_4, \text{ or} \\
            g_1^{-1}g_2 &= g_4^{-1}g_3,\text{ or} \\
            g_1^{-1}g_2 &= g_2^{-1}g_1 \text{ and } g_3^{-1}g_4=g_4^{-1}g_3,\\
        \delta_{in}&:=g_1+g_2,\notag\\
        \delta_{out}&:=g_3+g_4,\notag\\
        \delta_{free}&:=\emptyset\notag
        \end{align}
    \end{subequations}
    or
    \begin{subequations} \label{eqn: 11+22=0 assignment 130}
    \begin{align}
            g_2^{-1}g_3 &= g_3^{-1}g_2, g_2^{-1}g_4=g_4^{-1}g_2 , g_4^{-1}g_3=g_3^{-1}g_4, \text{ or} \\
            g_2^{-1}g_3 &= g_3^{-1}g_2, g_2^{-1}g_4=g_4^{-1}g_3 , g_3^{-1}g_4=g_4^{-1}g_2, \text{ or}\\
            g_2^{-1}g_3 &= g_3^{-1}g_4, g_2^{-1}g_4=g_4^{-1}g_2 , g_3^{-1}g_2=g_4^{-1}g_3, \text{ or}\\
            g_2^{-1}g_3 &= g_4^{-1}g_2, g_2^{-1}g_4=g_3^{-1}g_2 , g_4^{-1}g_3=g_3^{-1}g_4,\\
        \delta_{in}&:=g_1,\notag\\
        \delta_{out}&:=g_2+g_3+g_4,\notag\\
        \delta_{free}&:=\emptyset\notag
        \end{align}
    \end{subequations}
    or
    \begin{subequations} \label{eqn: 11+22=0 assignment 310}
    \begin{align}
            g_1^{-1}g_2 &= g_2^{-1}g_1, g_1^{-1}g_3=g_3^{-1}g_1 , g_2^{-1}g_3=g_3^{-1}g_2, \text{ or} \\
            g_1^{-1}g_2 &= g_2^{-1}g_1, g_1^{-1}g_3=g_3^{-1}g_2 , g_2^{-1}g_3=g_3^{-1}g_1, \text{ or}\\
            g_1^{-1}g_2 &= g_2^{-1}g_3, g_1^{-1}g_3=g_3^{-1}g_1 , g_2^{-1}g_1=g_3^{-1}g_2, \text{ or}\\
            g_1^{-1}g_2 &= g_3^{-1}g_1, g_1^{-1}g_3=g_2^{-1}g_1 , g_2^{-1}g_3=g_3^{-1}g_2,\\
        \delta_{in}&:=g_1+g_2+g_3,\notag\\
        \delta_{out}&:=g_4,\notag\\
        \delta_{free}&:=\emptyset\notag
        \end{align}
    \end{subequations}
\end{theorem}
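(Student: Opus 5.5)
The plan is to enumerate assignments and then, for each assignment, enumerate the pairings allowed by \cref{lemma: expansion}, following the procedure set out after the notation example.

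\textbf{Step 1: admissible assignments.} The parity condition \cref{eqn: parity CZ non associative} gives $|\delta_{in}|+|\delta_{out}|$ even. Non-triviality, as in \cref{thm:2lambda}, requires both $|\delta_{in}|$ and $|\delta_{out}|$ to be nonzero. With four terms, the candidates for $(|\delta_{in}|,|\delta_{out}|,|\delta_{free}|)$ are therefore $(1,1,2)$, $(2,2,0)$, $(1,3,0)$ and $(3,1,0)$. By relabelling the $g_i$ we may take the partitions exactly as written in the statement.

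\textbf{Step 2: the case $(1,1,2)$.} This assignment satisfies the non-overlapping bits condition, so only \cref{eqn: CZ associative} remains, namely $of+fi=0$. Expanding gives four tuples: $23$, $24$, $31$ and $41$. Pairing within $of$, that is $(23,24)$, forces $g_3=g_4$, and pairing within $fi$ fails in the same way; both are excluded by \cref{lemma: expansion}. Each $of$ tuple must therefore pair with an $fi$ tuple. The two perfect matchings are
\[
(23,31),(24,41) \quad\text{and}\quad (23,41),(24,31).
\]
Using the rule $(ij,kl)\Rightarrow g_i^{-1}g_j=g_k^{-1}g_l$, these produce the two alternatives in \cref{eqn: CZ 112 assignment}, after inverting both sides where needed. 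I will also check that neither alternative forces two of the $g_i$ to coincide.

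\textbf{Step 3: the cases with $\delta_{free}=\emptyset$.} Non-overlapping bits no longer hold, so I return to \cref{eqn: CZ non associative 2}, which reduces to $ii+oo=0$ for $a_1\neq a_2$. Diagonal tuples $jj$ vanish, leaving ordered tuples of distinct indices within a single partition.

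For $(2,2,0)$ the tuples are $12$, $21$, $34$ and $43$. Pairing $12$ with $21$ is allowed: it gives the condition $g_1^{-1}g_2=g_2^{-1}g_1$, and forces $(34,43)$ as the remaining pair. The other two matchings are $(12,34),(21,43)$ and $(12,43),(21,34)$. In each of these the second pair's condition is the inverse of the first, so each contributes one independent condition. This gives the three lines of \cref{eqn: 11+22=0 assignment 220}.

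For $(1,3,0)$ only $oo$ survives, with the six tuples $23$, $32$, $24$, $42$, $34$ and $43$. I will list the 15 perfect matchings and discard any pair sharing a first index or a second index. I will then check the remaining conditions for consistency, remove redundancy caused by inversion symmetry, and expect the four families listed in \cref{eqn: 11+22=0 assignment 130}. The case $(3,1,0)$ follows by the symmetry exchanging in and out, which amounts to relabelling. The main obstacle is this bookkeeping for the six-term case: I have to be sure that exactly four matchings survive and that none of their condition sets secretly implies $g_i=g_j$.
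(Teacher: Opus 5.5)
Your proposal is correct and follows the paper's proof essentially step for step. It uses the same four assignments, the same tuple expansions ($23,24,31,41$ for $(1,1,2)$ and $12,21,34,43$ for $(2,2,0)$), the same exclusion of pairs sharing a first or second index, and the in/out symmetry for $(3,1,0)$. Carrying out your planned enumeration for $(1,3,0)$ does leave exactly four of the fifteen matchings, and these reproduce the four families in the statement (some conditions within a family are inverses of one another, as you anticipated).
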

\begin{proof}
    We start with the general condition on the pre-orientation derived in \cref{eqn: CZ preorientation conditions}. To satisfy the parity condition, we must assign an even number of terms to the in/out terms, i.e. $|\delta_{in}|+|\delta_{out}|=0 \mod 2$. This gives us 3 assignments of coboundary partitions,
    \begin{equation}
        (|\delta_{in},|\delta_{out}|,|\delta_{free}|)\in \{(1,1,2),(2,2,0),(1,3,0),(3,1,0)\}.
    \end{equation}

    Let us consider the assignment $(1,1,2)$. By \cref{eqn: CZ preorientation conditions}, since we satisfy the no overlapping bits condition of \cref{nonoverlap}, we default to the case
    \begin{equation}
        |\delta_{free}(a_1)\cap \delta_{in}(a_2)|+|\delta_{out}(a_1)\cap \delta_{free}(a_2)|= 0\mod2.
    \end{equation}
    Expanding the terms out, we have that
    \begin{equation}
            |g_3a_1\cap g_1a_2|+|g_4a_1\cap g_1a_2| + |g_2a_1\cap g_3a_2| + |g_2a_1\cap g_4a_2 |= 0\mod2.
    \end{equation}
    Using tuples $ij$ to represent intersection terms $g_ia_1\cap g_ja_2$, we have the set of terms
    \begin{equation}
        31,41,23,24
    \end{equation}
    that need to be paired. There are 4 terms in this equation. Much like the proof of \cref{thm:2lambda}, we pair the terms up and determine conditions for the pairs to be satisfied at the same time. This is to ensure that, for any $a_1,a_2$ chosen, pairs of terms will evaluate to 1 at the same time. There are 3 distinct configurations of pairings. They are
    \begin{subequations}
        \begin{align}
            &\{(31,41),(23,24)\},\\
            &\{(31,23),(41,24)\},\\
            &\{(31,24),(41,23)\},
        \end{align}
    \end{subequations}
    which return conditions
    \begin{subequations}
        \begin{align}
            g_3^{-1}g_1 = g_4^{-1}g_1 &\text{ and } g_2^{-1}g_3=g_2^{-1}g_4, \text{or} \label{eqn:abc 1}\\
            g_3^{-1}g_1 = g_2^{-1}g_3 &\text{ and } g_4^{-1}g_1=g_2^{-1}g_4, \text{or} \label{eqn:abc 2}\\
            g_3^{-1}g_1 = g_2^{-1}g_4 &\text{ and } g_4^{-1}g_1=g_2^{-1}g_3 \label{eqn:abc 3}
        \end{align}
    \end{subequations}
    respectively. \cref{eqn:abc 1} is not satisfiable since we have assumed the group elements to be distinct; the other 2 equations return the conditions in \cref{eqn: CZ 112 assignment}.

    Now let us consider the assignment $(2,2,0)$ The free term is 0, and we are left with the condition
    \begin{equation} \label{eqn:blablalb}
        |\delta_{in}(a_1)\cap \delta_{in}(a_2)| + |\delta_{out}(a_1)\cap \delta_{out}(a_2)| = 0 \mod 2.
    \end{equation}
    We go through a similar analysis as before. The expansion of \cref{eqn:blablalb} results in 8 terms. However, 4 of the terms of the form $|g_ia_1\cap g_ia_2|$ are immediately 0 due to the rearrangement theorem. We are again left with 4 terms; written as tuples they are 
    \begin{equation}
            12,21,34,43.
    \end{equation}
    There are 3 configurations of paired terms,
    \begin{subequations}
        \begin{align}
            &\{(12,43),(21,34)\},\\
            &\{(12,34),(21,43)\},\\
            &\{(12,21),(34,43)\},
        \end{align}
    \end{subequations}
    returning conditions
    \begin{subequations}
        \begin{align}
            g_1^{-1}g_2 = g_4^{-1}g_3 &\text{ and } g_2^{-1}g_1=g_3^{-1}g_4,\label{bac2} \text{or} \\
            g_1^{-1}g_2 = g_3^{-1}g_4 &\text{ and } g_2^{-1}g_1=g_4^{-1}g_3, \text{or} \label{bac}\\
            g_1^{-1}g_2 = g_2^{-1}g_1 &\text{ and } g_3^{-1}g_4=g_4^{-1}g_3.
        \end{align}
    \end{subequations}
    Note that the conditions in \cref{bac,bac2} are repeated. Removing the repeats returns the conditions of \cref{eqn: 11+22=0 assignment 220}.
    
    Now consider the assignment $(1,3,0)$. Again we must satisfy \cref{eqn:blablalb}. Since $|\delta_{in}|=1$, the equation reduces to
    \begin{equation}
        |(g_2+g_3+g_4)a_1\cap(g_2+g_3+g_4)a_2 |= 0 \mod 2.
    \end{equation}
    Following the previous sections, we expand out the terms in the equation. There are 9 terms, of which 3 of the form $|g_ia_1\cap g_ia_2|$ are 0 due to the rearrangement theorem. The 6 remaining terms are
    \begin{equation}
        23,24,32,34,42,43.
    \end{equation}
    Now we are interested in configurations of pairings. Every configuration will return 3 conditions on the group algebra elements. However, some of the pairings will lead to contradictions. For example, the configuration
    \begin{equation}
        \{(23,32),(24,34),(43,42)\}
    \end{equation}
    is invalid due to the $(24,34)$ and $(43,42)$ pairs due to repeated numbers, since for example
    \begin{subequations}
        \begin{align}
            &(24,34)\\
            &\implies |g_2a_1\cap g_4a_2|=1, |g_3a_1\cap g_4a_2|=1\\
            &\implies g_2a_1=g_4a_2,g_3a_1=g_4a_2\\
            &\implies g_2^{-1}g_4=g_3^{-1}g_4\\
            &\implies g_2=g_3.
        \end{align}
    \end{subequations}
    We end up with 4 valid configurations that return conditions
    \begin{subequations}
        \begin{align}
            g_2^{-1}g_3 &= g_3^{-1}g_2, g_2^{-1}g_4=g_4^{-1}g_2 , g_4^{-1}g_3=g_3^{-1}g_4, \text{ or} \\
            g_2^{-1}g_3 &= g_3^{-1}g_2, g_2^{-1}g_4=g_4^{-1}g_3 , g_3^{-1}g_4=g_4^{-1}g_2, \text{ or}\\
            g_2^{-1}g_3 &= g_3^{-1}g_4, g_2^{-1}g_4=g_4^{-1}g_2 , g_3^{-1}g_2=g_4^{-1}g_3, \text{ or}\\
            g_2^{-1}g_3 &= g_4^{-1}g_2, g_2^{-1}g_4=g_3^{-1}g_2 , g_4^{-1}g_3=g_3^{-1}g_4,\\
        \end{align}
    \end{subequations}
    which is \cref{eqn: 11+22=0 assignment 130}.

    Finally, we note that the $(3,1,0)$ assignment is symmetric to the $(1,3,0)$ assignment due to symmetry in the 2-copy-cup gate pre-orientation conditions. We omit the analysis for this assignment. All of the conditions are further verified numerically via a perfect matching algorithm.
\end{proof}
\begin{remark}
    Some group algebra codes can satisfy multiple assignments. It is beneficial to use assignments that minimise $|\delta_{in}|$ and $|\delta_{out}|$ as the $\delta_{free}$ terms do not participate in the circuit for the copy-cup gate. Therefore reducing $|\delta_{in}|$ and $|\delta_{out}|$ results in a smaller number of physical gates.
\end{remark}
\begin{exa}{}
    The assignment $(1,1,2)$ and the conditions \cref{eqn: CZ 112 assignment} is a generalisation of the example given in \cite[Proposition 5.4]{breuckmann2024cupsgatesicohomology} to non-abelian weight 4 group algebra codes. Explicitly, the example was given as 
    \begin{subequations}
        \begin{align*}
            |\delta_{in}|& = 1,\\
            \delta_{out} &=\delta_{in}^{-1},\\
            \delta_{free} &= \delta_{free}^{-1},
        \end{align*}
    \end{subequations}
    where the superscript $^{-1}$ refers to the antipode map $a\mapsto a^{-1}$ being applied element-wise to group elements in the partitions. Setting 
        \begin{align*}
        \delta_{in}&:=g_1\\
        \delta_{out}&:=g_2\\
        \delta_{free}&:=g_3+g_4,
        \end{align*}
        these conditions are equivalent to 
        \begin{align*}
            g_1&=g_2^{-1},\\
            g_3+g_4 &=g_3^{-1}+g_4^{-1}.
        \end{align*} 
        The latter equation implies 
        \begin{align*}
            g_3=g_3^{-1} &\text{ and } g_4=g_4^{-1} \text{ or}\\
            g_3=g_4^{-1} &\text{ and } g_4=g_3^{-1}.
        \end{align*}
        We therefore have that
        \begin{align*}
            &g_1g_2=g_3^2 =g_4^2=1\\
            &\implies g_3^{-1}g_1=g_3g_2^{-1}\text{ and } g_4^{-1}g_1=g_4g_2^{-1} \text{ or}\\
            &g_1g_2=g_3g_4=g_4g_3=1\\
            &\implies g_3^{-1}g_1=g_4g_2^{-1} \text{ and } g_4^{-1}g_1=g_3g_2^{-1}
        \end{align*}
        which is equivalent to \cref{eqn: CZ 112 assignment} when the group is abelian.
    \end{exa}

\subsubsection{Examples of codes with constant-depth $\operatorname{CZ}$ gates}
We attempt to find weight 4 abelian group algebra codes that satisfy pre-orientation for the 2-copy-cup gate, and that have constant-depth $\operatorname{CZ}$ gates from the 2-copy-cup gate. A search is conducted through abelian groups of order $\{8,12,20,36\}$, and due to computational constraints, a partial search is conducted through the $C_{72}$ group. Optimal codes are presented in \cref{table: weight 4 CZ}.
\begin{table}[h!]
\renewcommand{\arraystretch}{1.4}
\centering
\begin{tabular}{|c|c|c|c|c|c|}
\hline
$G$ & $[[n,k,d]]$  & $p_1$ & $p_2$ & Assignment 1 & Assignment 2 \\\hline
$C_{8}$ & [[16,6,4]] & $1+x+x^2+x^3$ & $1+x+x^3+x^6$ & [2,2,0] & [1,1,2] \\\hline 
$C_{3}\times C_2\times C_{2}$ & [[24,2,6]] & $1+x+xy+x^2$ & $1+x^2+x^2y+x^2z$ & [1,1,2] & [1,3,0] \\\hline
$C_{4}\times C_3$ & [[24,4,5]] & $1+x^2+x^3+x^3y^2$ & $1+xy+x^2y^2+x^3$ & [3,1,0] & [1,1,2] \\\hline
$C_{5}\times C_{4}$ & [[40,2,8]] & $1+x^4+x^4y^2+x^4y^3$ & $1+x^3+x^3y^2+x^3y^3$ & [1,3,0] & [1,3,0] \\\hline
$C_{5}\times C_{4}$ & [[40,4,7]] & $1+xy+x^3y^3+x^4$ & $1+y^2+y^3+x^4y$ & [1,1,2] & [3,1,0] \\\hline
$C_{5}\times C_{4}$ & [[40,6,6]] & $1+y^2+y^3+x^4y$ & $1+y^2+y^3+x^3y$ & [3,1,0] & [3,1,0] \\\hline
$C_{9}\times C_4$  & [[72,6,10]] & $1+y^2+y^3+x^8y$ & $1+x^2y^2+x^6y^3+x^8y$ & [3,1,0] & [2,2,0] \\\hline 
$C_{9}\times C_4$ & [[72,8,8]] & $1+x^5y^3+x^6y^3+x^8$ & $1+x^5+x^6y+x^8y^3$ & [2,2,0] & [1,1,2] \\\hline
$C_{9}\times C_4$ & [[72,14,6]] & $1+x^5y^2+x^6y^2+x^8$ & $1+y^2+y^3+x^6y$ & [2,2,0] & [3,1,0] \\\hline
$C_{9}\times C_8$ & [[144,4,14]] & $1+x^6 +x^7y^2+x^8y^6$ & $1+x^2y^7+x^6y+x^8$ & [1,1,2] & [1,1,2] \\\hline 
$C_{9}\times C_8$ & [[144,6,12]] & $1+x^6+x^7y^2+x^8y^6$ & $1+x^2y+x^6y^6+x^8y^7$ & [1,1,2] & [2,2,0] \\\hline 
$C_{9}\times C_8$ & [[144,8,10]] & $1+x^6+x^7y^2+x^8y^6$ & $1+x^8+x^8y^2+x^8y^6$ & [1,1,2] & [1,3,0] \\\hline 
$C_{9}\times C_8$ & [[144,12,8]] & $1+x^6+x^6y^4+x^6y^6$ & $1+y^7+x^6y^4+x^6y^5$ & [1,3,0] & [1,1,2] \\\hline
\end{tabular}
\caption{CSS codes found through a numerical search equipped with constant-depth $\operatorname{CZ}$ gates from the 2-copy-cup gate. The codes have $\operatorname{X}$ and $\operatorname{Z}$ check weight of 8, and are constructed as the balanced product of two weight 4 abelian group algebra codes from $\mathbb{F}_2[G]$ satisfying pre-orientation for the 2-copy-cup gate. Groups presented have up to 3 generators $x,y,z$, whose orders are defined left-to-right by the subscripts on the products of cyclic groups in the first column. The 2-copy-cup gate is numerically simulated for the codes which are found through a numerical search, and codes where the copy-cup gate have non-trivial action on the codespace are presented. The assignments of group elements to $\delta_{in},\delta_{out},$ and $\delta_{free}$ that results in non-trivial action is stated.}
\label{table: weight 4 CZ}
\end{table}
\subsubsection{Conditions for 3-copy-cup gates}

Let us now focus on the analysis for the 3-copy-cup gate. Conditions for the non-associative and symmetric cup products are different, and so we analyse them separately. Indeed, it was mentioned in \cite{jacob2025singleshotdecodingfaulttolerantgates} that the weight 4 group algebra codes with pre-orientation satisfying the non-associative 3-copy-cup gate had distance 2, whereas in \cite{menon2025magictricyclesefficientmagic}, codes with pre-orientation satisfying the symmetric 3-copy-cup gate with higher distances were found. In this section we fully determine the conditions on the group algebra codes for pre-orientation, and also provide reasoning for the low distance of the former codes.

\begin{theorem} [Weight 4 Non-associative cup product] \label{thm: weight 4 non associative}
    Consider a 4-term group algebra code with $\delta=g_1+g_2+g_3+g_4\in \mathbb{F}_2[G]$ following the definitions of \cref{sec: group algebra}. The code has pre-orientation satisfying the non-associative 3-copy-cup gate conditions of \cref{prop: lambda 3 conditions} that is dependent on group elements assigned to partitions $\delta_{in},\delta_{out}$, and $\delta_{free}$. The pre-orientation must satisfy the conditions
\begin{subequations} \label{eqn: 4 term CCZ}
    \begin{align}
            g_1^{-1}g_2&=g_2^{-1}g_1, \label{eqn: 4 term g1g2}\\
            g_3^{-1}g_4&=g_4^{-1}g_3, \\
            g_4^{-1}g_2 &= g_3^{-1}g_1, \label{eqn: 4 term g2g3}\\
        \delta_{in}&:=g_1+g_2,\notag\\
        \delta_{out}&:=g_3+g_4,\notag\\
        \delta_{free}&:=\emptyset.\notag
        \end{align}
    \end{subequations}
\end{theorem}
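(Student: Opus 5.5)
The plan is to run the assignment-and-pairing procedure of \cref{sec: higher weight} on the non-associative system \cref{eqn: non associative cup product}. First I enumerate assignments. The parity condition \cref{eqn: parity CCZ non associative} forces $|\delta_{in}|+|\delta_{out}|$ even, and non-triviality forces both to be non-zero. This leaves the candidates $(1,1,2)$, $(2,2,0)$, $(1,3,0)$ and $(3,1,0)$.

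Next I eliminate every assignment except $(2,2,0)$ using \cref{lemma: expansion}.
\begin{itemize}
\item For $(1,1,2)$, the singular condition $fi=0$ of \cref{eqn:singular1} factors out $\delta_{in}=g_1$, a single term, so by the second clause of the lemma it is unsatisfiable.
\item For $(3,1,0)$, the condition $oo+of=0$ of \cref{eqn: 22+23=0} is the singular condition $o(o+f)$, with $\delta_{out}$ a single term. With $f=\emptyset$ it reduces to $oo$, which vanishes anyway. The real obstruction is \cref{eqn:singular2}, $ii=0$ for $a_1\neq a_2$. Expanding it over three in-terms gives the six tuples $12,13,21,23,31,32$. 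This is even, so I have to show that no perfect matching is free of contradictions, or else use the triple equation.
\item For $(1,3,0)$, the condition $ii$ is trivially zero. Then \cref{eqn: 22+23=0} becomes $oo=0$, with the same six tuples over the out-terms. I then check the triple equation \cref{eqn: 111331 blabla bla}, which reduces to $ooi=0$ with $\delta_{in}=g_1$ single. Its expansion has the form $|\cdot\cap\cdot\cap g_1a_3|$ over the ordered distinct pairs from $\{g_2,g_3,g_4\}$. Pairing two such triples forces $g_1$-equalities in the third slot, which yields a contradiction of the type in \cref{lemma: expansion}. This is the step I am least sure works directly. If it does not, I will fall back on combining the constraints from $oo$ with those from $ooi$.
\end{itemize}

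For $(2,2,0)$ with $\delta_{in}=g_1+g_2$ and $\delta_{out}=g_3+g_4$, the surviving equations are $oo=0$, $ii=0$ and $iii+ooi=0$.
\begin{itemize}
\item $ii$ expands to the tuples $12,21$. These must be paired, giving $g_1^{-1}g_2=g_2^{-1}g_1$.
\item $oo$ similarly gives $g_3^{-1}g_4=g_4^{-1}g_3$.
\item In the triple equation, $iii$ has no non-zero terms, because any triple over two in-elements repeats an element in adjacent or first/third slots, and $a_1\neq a_2\neq a_3\neq a_1$ kills it. The triples that survive are $121$ and $212$, but whether these count as zero depends on the paper's convention in step 3, which removes $(i,j,i)$ patterns. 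Likewise $ooi$ leaves $341$, $342$, $431$, $432$.
\end{itemize}
So I pair these four triples, each pair giving two conditions. I discard pairings with a repeated index in the same slot. That rules out $(341,342)$ and $(431,432)$, since they share the first two slots. The remaining options are $\{(341,432),(342,431)\}$ and $\{(341,431),(342,432)\}$, and the latter has repeated third slots and is invalid. The configuration $\{(341,432),(342,431)\}$ gives $g_3^{-1}g_4=g_4^{-1}g_3$ together with $g_4^{-1}g_1=g_3^{-1}g_2$, equivalently $g_4^{-1}g_2=g_3^{-1}g_1$ after relabelling. This reproduces \cref{eqn: 4 term CCZ}.

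The main obstacle I expect is the bookkeeping in the triple equation. Specifically, I need to decide exactly which triples vanish under the constraint $a_1\neq a_2\neq a_3\neq a_1$, confirm that the derived conditions are mutually consistent with $g_1^{-1}g_2=g_2^{-1}g_1$, and rule out the $(1,3,0)$ and $(3,1,0)$ assignments cleanly. I would verify all of this with the numerical perfect-matching implementation, as done in \cref{thm: 2 copy cup}.
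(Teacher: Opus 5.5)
Your skeleton is the paper's proof. You restrict to the four parity-allowed assignments, discard $(1,1,2)$ via the singular condition $fi=0$, discard $(1,3,0)$ via $ooi$, and read off the $(2,2,0)$ conditions from $ii$, $oo$ and the unique valid matching $\{(341,432),(342,431)\}$. Those steps are sound.

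The $(1,3,0)$ step you were unsure about works directly. If two triples $(i,j,1)$ and $(i',j',1)$ are nonzero for the same $a_1,a_2,a_3$, then $g_ja_2=g_1a_3=g_{j'}a_2$. Hence $j=j'$, and then $i=i'$, so the two triples coincide.

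\textbf{The gap is the $(3,1,0)$ assignment.} You name $ii=0$ as ``the real obstruction'', but it is not one. The six tuples $12,13,21,23,31,32$ admit the contradiction-free matching $\{(12,21),(13,31),(23,32)\}$. This matching gives $g_i^{-1}g_j=g_j^{-1}g_i$ for all pairs, which are exactly the $(3,1,0)$ conditions already appearing in \cref{thm: 2 copy cup}. These conditions are realisable with distinct elements, for example $g_1,g_2,g_3=1,x,y$ in $C_2\times C_2$. So your first option cannot succeed, and your fallback (``or else use the triple equation'') is never carried out. As written, $(3,1,0)$ is not excluded.

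The missing argument runs through the triple equation.
\begin{itemize}
\item With $|\delta_{out}|=1$ and $\delta_{free}=\emptyset$, \cref{eqn: 111331 blabla bla} reduces to $iii=0$. The $ooi$ term vanishes because $g_4a_1\neq g_4a_2$.
\item The surviving terms of $iii$ are the six permutations $123,132,213,231,312,321$.
\item Paired triples must differ in every slot. Two permutations of three symbols do so exactly when they differ by a $3$-cycle, so they must have the same sign.
\item Each sign class $\{123,231,312\}$ and $\{132,213,321\}$ has three elements, so no perfect matching exists.
\end{itemize}
This is precisely the step the paper's proof supplies (``no configuration where all pairs of triples have no repeated number at the same index'').

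Two smaller corrections.
\begin{itemize}
\item The triples $121$ and $212$ vanish not by convention but because a nonzero $g_1a_1\cap g_2a_2\cap g_1a_3$ forces $a_1=a_3$. That case is excluded from the generic equation and is handled separately by \cref{eqn:singular2}.
\item Your matching produces both $g_4^{-1}g_1=g_3^{-1}g_2$ and $g_4^{-1}g_2=g_3^{-1}g_1$. These are equivalent not ``after relabelling'', but because $g_3^{-1}g_4$ being an involution implies its conjugate $g_4g_3^{-1}$ is one too. Hence $g_4g_3^{-1}=g_3g_4^{-1}$, and $g_4g_3^{-1}g_2=g_3g_4^{-1}g_2$. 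This is why the theorem can list only one of the two.
\end{itemize}
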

\begin{proof}
    By \cref{eqn: parity CCZ non associative}, valid coboundary assignments are
    \begin{equation}
        (|\delta_{in}|,|\delta_{out}|,|\delta_{free}|)\in \{(1,1,2),(2,2,0),(1,3,0),(3,1,0)\}.
    \end{equation}
    In the discussion of these assignments, we assign $g_i$ to the partitions starting from the lowest $i$ to $\delta_{in}$, then $\delta_{out}$, and finally $\delta_{free}$.

    Assignment $(1,1,2)$ is immediately invalid due to \cref{lemma: expansion} and the singular condition of \cref{eqn:singular1}. Explicitly, the assignment has to satisfy the condition
    \begin{equation}
            |g_1a_1\cap (g_3+g_4)a_2| = 0\mod2,
    \end{equation}
    Pairing these terms leads to a contradiction $g_3=g_4$.
    
    Consider the assignment $(3,1,0)$. Of the equations in \cref{eqn: non associative cup product}, the only relevant equations to satisfy are \cref{eqn:singular2,eqn: 111331 blabla bla}; that is, 
    \begin{subequations}
        \begin{align}
            |\delta_{in}(a_1)\cap \delta_{in}(a_2)|&=0\mod2,\\
            |\delta_{in}(a_1)\cap \delta_{in}(a_2)\cap\delta_{in}(a_3)| &=0\mod2.
        \end{align}
    \end{subequations}
    Upon expansion, the first equation has 9 terms while the second has 27. Both are reduced to 6 terms when removing terms that are zero, where there are repeated group elements in the intersection, which we remove due to the rearrangement theorem. The triple intersection has terms
    \begin{equation}
        123,132,213,231,312,321.
    \end{equation}
    We pair these 6 terms. However, repeated numbers on the same indices lead to contradictions on the group elements. Upon further inspection, there is no configuration where all pairs of triples have no repeated number at the same index. Therefore this assignment does not lead to pre-orientation.

   Now consider the assignment (1,3,0), which in this case is different from (3,1,0) unlike the case $\Lambda=2$ as there is no longer symmetry in the conditions. We focus on satisfying the equation
    \begin{equation}
        |\delta_{out}(a_1)\cap \delta_{out}(a_2)\cap\delta_{in}(a_3)| =0\mod2
    \end{equation}
    from \cref{eqn: 111331 blabla bla}. Expanding the equation results in 6 terms that must be paired, explicitly given as
    \begin{equation}
        231+241+321+341+421+431. 
    \end{equation}
    However, note that this is a singular condition since $|\delta_{in}|=1$, presenting itself on the repeated number on the third index on all terms. Pairing any of these terms lead to a contradiction, and therefore the assignment does not lead to a pre-orientation.

    Finally, consider the assignment (2,2,0). The equations to satisfy are \cref{eqn:singular2,eqn: 111331 blabla bla,eqn: 22+23=0}.
    \begin{subequations}
        \begin{align}
            |(g_1+g_2)a_1\cap (g_1+g_2)a_2| &=0\mod2\label{blabla1} \\\implies g_1^{-1}g_2=g_2^{-1}g_1\notag\\
            |(g_3+g_4)a_1\cap (g_3+g_4)a_2| &=0\mod2\label{blabla3} \\\implies g_3^{-1}g_4=g_4^{-1}g_3\notag\\
            |(g_3+g_4)a_1\cap (g_3+g_4)a_2\cap (g_1+g_2)a_3| &=0\mod2. \label{eqn: associative cup product CCZ difference}
        \end{align}
    \end{subequations}
    Expanding \cref{eqn: associative cup product CCZ difference} returns 8 terms. Of these, 4 of the form $g_ia_1\cap g_i a_2\cap g_ja_3$ are immediately 0, leaving us with 4 terms written as
    \begin{equation}
        341,342,431,432.
    \end{equation}
    The only valid configuration is $\{(341,432),(342,431)\}$, leading to the conditions
        \begin{subequations} \label{blabla2}
        \begin{align}
            g_3^{-1}g_4 &= g_4^{-1}g_3,\\
            g_4^{-1}g_1 &= g_3^{-1}g_2,\\
            g_4^{-1}g_2 &= g_3^{-1}g_1.
        \end{align}
    \end{subequations}
    Together, \cref{blabla1,blabla3,blabla2} lead to \cref{eqn: 4 term CCZ}, and hence the theorem.
\end{proof}
\begin{remark}
    The conditions of \cref{eqn: 4 term CCZ} have already been stated in \cite{jacob2025singleshotdecodingfaulttolerantgates}. Here we show that, in fact, this is the \emph{only} assignment of weight 4 group algebra codes -- abelian and non-abelian -- that can satisfy the pre-orientation for the non-associative triple cup product.
\end{remark}
It was mentioned in \cite{jacob2025singleshotdecodingfaulttolerantgates} that the 4-term group abelian group algebra codes they found were of distance 2. We show this is true for all codes that satisfy this pre-orientation.
\begin{theorem} \label{thm: 220 max dist 2}
    4-term abelian group algebra codes with $\delta = g_1+g_2+g_3+g_4\in \mathbb{F}_2[G]$ following the definitions of \cref{sec: group algebra} that satisfy the pre-orientation condition in \cref{eqn: 4 term CCZ} of \cref{thm: weight 4 non associative} will have an upper-bounded distance of 2. 
\end{theorem}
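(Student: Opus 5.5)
The plan is to write down an explicit nonzero codeword of Hamming weight $2$. This is enough, because an upper bound on the distance only needs one low-weight codeword. Since $G$ is abelian, $\mathbb{F}_2[G]$ is commutative. So the conditions of \cref{eqn: 4 term CCZ} in \cref{thm: weight 4 non associative} become relations among commuting elements, and I will use them to factor the check element.

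\textbf{Step 1: simplify the conditions.} Set $u:=g_1^{-1}g_2$ and $v:=g_3^{-1}g_4$.
\begin{itemize}
\item \cref{eqn: 4 term g1g2} says $u=u^{-1}$, i.e.\ $u^2=1$. Likewise the second condition gives $v^2=1$.
\item \cref{eqn: 4 term g2g3}, $g_4^{-1}g_2=g_3^{-1}g_1$, can be rearranged using commutativity to $g_2g_3=g_1g_4$, hence $g_1^{-1}g_2=g_3^{-1}g_4$, i.e.\ $u=v$.
\item Distinctness of the terms gives $g_2\neq g_1$, so $u\neq 1$.
\end{itemize}

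\textbf{Step 2: factor the coboundary operator.} Using $g_2=g_1u$ and $g_4=g_3u$,
\begin{equation}
\delta=g_1+g_1u+g_3+g_3u=(g_1+g_3)(1+u).
\end{equation}
The check element $\alpha=\delta^{-1}$ is obtained by applying the antipode map, which is a ring anti-automorphism and hence an automorphism here since the ring is commutative. This gives $\alpha=(g_1^{-1}+g_3^{-1})(1+u^{-1})=(g_1^{-1}+g_3^{-1})(1+u)$, because $u^{-1}=u$.

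\textbf{Step 3: exhibit the codeword.} Take $c:=1+u\in\mathbb{F}_2[G]$. Over $\mathbb{F}_2$ we have $(1+u)^2=1+u^2=0$, so
\begin{equation}
\alpha c=(g_1^{-1}+g_3^{-1})(1+u)^2=0 .
\end{equation}
Hence $c$ lies in the kernel of the regular representation of the check element, i.e.\ it is a codeword. Since $u\neq 1$, $c$ has exactly two nonzero coefficients, so $|c|=2$ and $c\neq 0$. Therefore the distance is at most $2$.

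The only delicate point is the bookkeeping between $\delta$ and its antipode $\alpha$, and the left versus right action in the regular representation. Commutativity makes the side irrelevant, and $u=u^{-1}$ makes the factor $1+u$ invariant under the antipode. So the same weight-$2$ vector $1+u$ annihilates both $\delta$ and $\alpha$, and the bound holds whichever convention is used for the classical code.
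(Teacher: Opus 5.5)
Your proof is correct and follows essentially the same route as the paper. In both, commutativity and the conditions give $g_1^{-1}g_2=g_3^{-1}g_4$ as an involution, the factor $(1+g_1^{-1}g_2)$ is pulled out of the check element, and this weight-$2$ element is a codeword because $(1+u)^2=1+u^2=0$ (your explicit treatment of the antipode and of $u\neq 1$ tidies up points the paper leaves implicit).
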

\begin{proof}
By \cref{eqn: 4 term CCZ}, we have that the group elements satisfy the equations
\begin{equation}
        g_1g_2^{-1}=g_2g_1^{-1}=g_3g_4^{-1}=g_4g_3^{-1}.
\end{equation}
Then we can rearrange the group algebra elements as
\begin{subequations}
    \begin{align}
        g_1+g_2+g_3+g_4&=g_1(1+g_1^{-1}g_2+g_1^{-1}g_3+g_1^{-1}g_4)\label{eqn: 4term eq1}\\
        &= g_1(1+g_1^{-1}g_2 + g_1^{-1}g_3+g_1^{-1}g_2g_1^{-1}g_3)\\
        &= g_1(1+g_1^{-1}g_2 +(1+g_1^{-1}g_2)g_1^{-1}g_3)\\
        &= g_1(1+g_1^{-1}g_2)(1+g_1^{-1}g_3) \label{eqn: abelian commute blabla}\\
        &= g_1(1+g_1^{-1}g_3)(1+g_1^{-1}g_2) \label{eqn: abelian commute blabla2},
    \end{align}
\end{subequations}
where we use the commutative property of abelian groups to commute the $(1+g_1^{-1}g_2)$ term across the $(1+g_1^{-1}g_3)$ term from \cref{eqn: abelian commute blabla} to \cref{eqn: abelian commute blabla2}. Then, since $g_1^{-1}g_2$ is an \emph{involution}, which are group elements with order 2, we have that
\begin{equation}
    \delta(1+g_1^{-1}g_2)=0.
\end{equation}
Therefore $(1+g_1^{-1}g_2)$ is a valid codeword, and so abelian group algebra codes that satisfy this pre-orientation will have an upper bound distance of 2.
\end{proof}
\begin{remark}
    The proof of \cref{thm: 220 max dist 2} relies on the fact that the codes are defined on abelian group algebras. This let us commute the $1+g_1^{-1}g_2$ term across the $1+g_1^{-1}g_3$ term to the right when going from \cref{eqn: abelian commute blabla} to \cref{eqn: abelian commute blabla2}. This naturally raises the question about how tensor product quantum codes constructed from non-abelian group algebra codes perform when subject to the same pre-orientation conditions.
\end{remark}
By \cref{thm: 220 max dist 2}, the classical abelian group algebra codes that satisfy the non-associative 3-copy-cup gate pre-orientation have maximum distance 2. It follows that product quantum codes obtained from weight 4 group algebra codes will have a max distance of 2 by the Kunneth formula, which relates the classical codewords to quantum codewords. This suggests that the non-associative cup product cannot be used with abelian group algebra codes to obtain high-distance quantum codes with constant-depth $\operatorname{CCZ}$ gates.

Remarkably, the different set of conditions using the symmetric cup product results in codes with distances greater than 2. The conditions for pre-orientation are summarised below.
\begin{theorem} [Weight 4 Symmetric cup product] \label{thm: weight 4 symmetric}
    Consider a 4-term group algebra code with $\delta=g_1+g_2+g_3+g_4\in\mathbb{F}_2[G]$ following the definitions of \cref{sec: group algebra}. The code has pre-orientation satisfying the symmetric 3-copy-cup gate conditions described in \cref{prop: lambda 3 conditions} dependent on group elements assigned to partitions $\delta_{in},\delta_{out}$, and $\delta_{free}$. The pre-orientation must satisfy the conditions
\begin{subequations} \label{eqn: CCZ 220 symmetric}
    \begin{align}
            g_1^{-1}g_2&=g_2^{-1}g_1 \text{ and } g_3^{-1}g_4=g_4^{-1}g_3,\text{ or } \label{eqn: menon conditions2}\\
            g_1^{-1}g_2&=g_3^{-1}g_4,\text{ or }\label{eqn: menon conditions}\\
            g_1^{-1}g_2&=g_4^{-1}g_3,\\
        \delta_{in}&:=g_1+g_2,\notag\\
        \delta_{out}&:=g_3+g_4,\notag\\
        \delta_{free}&:=\emptyset.\notag
        \end{align}
    \end{subequations}
\end{theorem}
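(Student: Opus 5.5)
Following the proof of \cref{thm: weight 4 non associative}, I will first enumerate the assignments allowed by the parity condition \cref{symmetric CCZ parity}, namely $(|\delta_{in}|,|\delta_{out}|,|\delta_{free}|)\in\{(1,1,2),(2,2,0),(1,3,0),(3,1,0)\}$. I will then check each against the symmetric system \cref{eqn: symmetric non associative cup product}, using \cref{lemma: expansion} to rule out assignments quickly. Throughout, terms with a repeated group element in the same intersection are dropped by the rearrangement theorem, as in step 3 of the procedure.

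\textbf{Excluding assignments with a free part or a singleton partition.}
\begin{itemize}
\item For $(1,1,2)$, the singular condition \cref{eqn: 31=0 symmetric}, $fi=0$, expands to $|(g_3+g_4)a_1\cap g_1a_2|$. Since $\delta_{in}=g_1$ is a single term, any pairing forces $g_3=g_4$, so the assignment is invalid by \cref{lemma: expansion}.
\item For $(3,1,0)$, with $\delta_{free}=\emptyset$, the surviving conditions reduce to \cref{eqn: menons free}. Since $|\delta_{out}|=1$, the $ooo$ term vanishes and $iii=0$ remains. This is the same six-term triple equation $123,132,213,231,312,321$ as in the non-associative case. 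I will reuse the observation that every perfect matching of these six permutations pairs two triples agreeing in some position, which forces two of $g_1,g_2,g_3$ to coincide. Checking this carefully is the one real combinatorial point.
\item For $(1,3,0)$, the roles swap: $iii$ vanishes and $ooo$ gives the same six permutations of $2,3,4$. It is excluded by the same argument.
\end{itemize}

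\textbf{The assignment $(2,2,0)$.} Here $iii$ and $ooo$ both vanish, because every triple drawn from a two-element set repeats an element and is therefore zero. All singular conditions involve $\delta_{free}$ and so vanish as well. The only surviving condition is $ii+oo=0$, whose expansion has four terms $12,21,34,43$.

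This is exactly the $(2,2,0)$ computation in the proof of \cref{thm: 2 copy cup}. Its three perfect matchings give, after removing the duplicated pair of conditions, precisely the three alternatives in \cref{eqn: CCZ 220 symmetric}. The matching $(12,21),(34,43)$ yields \cref{eqn: menon conditions2}, and the other two yield $g_1^{-1}g_2=g_3^{-1}g_4$ and $g_1^{-1}g_2=g_4^{-1}g_3$.

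The main obstacle I expect is the exclusion of $(3,1,0)$ and $(1,3,0)$. I will handle it by an explicit case check: fix the partner of $123$. Each of the remaining five triples shares a fixed position with $123$ except $231$ and $312$. Pairing $123$ with $231$ leaves $132,213,312,321$, and these cannot be matched without a shared position. The case $123$ with $312$ fails symmetrically. I would confirm this with the numerical perfect-matching implementation, as was done for \cref{thm: 2 copy cup}.
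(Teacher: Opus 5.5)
Your proposal is correct and follows essentially the same route as the paper. Enumerate the parity-allowed assignments, rule out $(1,1,2)$ via the singular condition $fi=0$ and $(3,1,0)$, $(1,3,0)$ via the unpairable six-permutation triple equation, then note that for $(2,2,0)$ every triple term vanishes, leaving only $ii+oo=0$, which is exactly the $2$-copy-cup computation. Your explicit check that no perfect matching of the six permutations avoids a shared position (they split into two triangles of mutually deranged triples, each of odd size) fills in the step the paper only asserts as ``an analysis similar to'' the non-associative case.
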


\begin{proof}
    By \cref{symmetric CCZ parity}, we can only have coboundary assignments 
    \begin{equation}
        (|\delta_{in}|,|\delta_{out}|,|\delta_{free}|)\in \{(1,1,2),(2,2,0),(1,3,0),(3,1,0)\}.
    \end{equation}
    An analysis similar to that of \cref{thm: weight 4 non associative} shows that the assignments $(1,1,2),(3,1,0),$ and $(1,3,0)$ cannot be satisfied. In fact, in this case, conditions are symmetric, and so there is no need to analyse the $(1,3,0)$ and $(3,1,0)$ assignments separately.

    Let us focus on the $(2,2,0)$ assignment as stated in \cref{eqn: CCZ 220 symmetric} and how the conditions are obtained. The equations to satisfy are \cref{eqn: 11+22 symmetric,eqn: 111 blablabla symmetric}. Explicitly, \cref{eqn: 11+22 symmetric} is the equation
    \begin{equation}\label{eqn: symmetric blabla}
            |\delta_{in}(a_1)\cap \delta_{in}(a_2)| + |\delta_{out}(a_1)\cap \delta_{out}(a_2)| =0\mod2.
        \end{equation}
    Now consider \cref{eqn: 111 blablabla symmetric}. Upon inspection, we find that due to an empty $\delta_{free}$ partition and the fact that $|\delta_{in}|,|\delta_{out}|=2$, the equation is immediately satisfied, and there are no triple intersection terms to pair. Then we find that \cref{eqn: symmetric blabla} is the same as for the 2-copy-cup gate on weight 4 group algebra codes in \cref{eqn:blablalb}, returning the conditions of \cref{eqn: 11+22=0 assignment 220} and therefore proving the theorem. 
\end{proof}
\begin{remark}
    Conditions of \cref{eqn: CCZ 220 symmetric} generalise those stated in \cite[Theorem D.1]{menon2025magictricyclesefficientmagic}. We show this assignment is the only possible one for weight 4 group algebra codes, and extend discussion to non-abelian group algebra codes.
\end{remark}
\begin{remark}
    The symmetric cup product causes cancellation of triple intersection terms, leading to less restrictive conditions on pre-orientation compared to the non-associative cup product. It can also be shown that the outside-in method of computing the cup product will have 8 triple intersection terms to pair in \cref{fat eqn outside in} for the (2,2,0) assignment after expansion. which is extremely restrictive. The cancellation of triple intersection terms provides reasoning for the empirical observation made in \cite{menon2025magictricyclesefficientmagic} about codes that have pre-orientation for the symmetric 3-copy-cup gate having distances greater than 2.
\end{remark}
 Comparing \cref{eqn: CCZ 220 symmetric} and \cref{eqn: 4 term CCZ}, we see that pre-orientations are less restrictive. Specifically, we do not have the condition that the group algebra elements must make up involutions, and that the involutions \emph{must be equal to each other.} These conditions led to the distance bound in \cref{thm: 220 max dist 2} due to commutation of group elements and the ability to factor out $(1+g_1^{-1}g_2)$. The less restrictive conditions come from not needing to pair triple intersection terms due to the symmetry of this cup product as opposed to the non-associative case.
 
 In fact, we see that \cref{eqn: CCZ 220 symmetric} is equivalent to \cref{eqn: 11+22=0 assignment 220} for the 2-copy-cup gate. This is because required conditions for pre-orientation is the same as that for the 2-copy-cup gate, and that means that the codes will be amenable to both 2- and 3-copy-cup gates, and therefore have constant-depth $\operatorname{CZ}$ and $\operatorname{CCZ}$ gates.
 \begin{corollary} \label{cor: 3 and 2 have preorientation}
     4-term group algebra codes that satisfy pre-orientation conditions for the symmetric 3-copy-cup gate also satisfy conditions for the 2-copy-cup gate. 
 \end{corollary}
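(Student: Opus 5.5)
The plan is to read off the conclusion from \cref{thm: weight 4 symmetric} together with \cref{thm: 2 copy cup}, by showing that the conditions one implies are exactly among the conditions the other accepts.

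First, I would invoke \cref{thm: weight 4 symmetric}. It says that a 4-term group algebra code with pre-orientation for the symmetric 3-copy-cup gate must use the assignment $(|\delta_{in}|,|\delta_{out}|,|\delta_{free}|)=(2,2,0)$, with $\delta_{in}=g_1+g_2$ and $\delta_{out}=g_3+g_4$ after relabelling. Its group elements must also satisfy one of the three alternatives in \cref{eqn: CCZ 220 symmetric}. I then keep exactly this pre-orientation and test it against the 2-copy-cup conditions of \cref{prop ryan ver}. Since $\delta_{free}=\emptyset$, the conditions reduce to two.

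\textbf{Parity condition.} The parity condition $i+o=0$ holds because $|\delta_{in}|+|\delta_{out}|=4$. The same condition, \cref{symmetric CCZ parity}, was already imposed in the 3-copy setting, so nothing new is needed here.

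\textbf{Double-intersection condition.} With $\delta_{free}=\emptyset$, \cref{eqn: CZ non associative 2} collapses to $ii+oo=0$ for $a_1\neq a_2$. This is literally \cref{eqn: 11+22 symmetric}, i.e.\ \cref{eqn: symmetric blabla}, which the symmetric 3-copy conditions already require. Equivalently, the three alternatives in \cref{eqn: CCZ 220 symmetric} coincide term for term with the three alternatives in \cref{eqn: 11+22=0 assignment 220}. Those are the $(2,2,0)$ solutions in \cref{thm: 2 copy cup}. So any code satisfying one of them has a valid $(2,2,0)$ pre-orientation for the 2-copy-cup gate.

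\textbf{Main obstacle.} The one point needing care is that \cref{thm: weight 4 symmetric} asserts the symmetric conditions are equivalent to \cref{eqn: 11+22=0 assignment 220}. That equivalence rests on the triple-intersection equation \cref{eqn: 111 blablabla symmetric} being vacuous for the $(2,2,0)$ assignment. I would re-check this directly. With $\delta_{free}=\emptyset$ the equation reduces to $iii+ooo=0$. Every triple intersection drawn from a two-element partition contains a repeated group element at two of the three positions. Because $a_1,a_2,a_3$ are pairwise distinct in that equation, each such term is zero.

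It would also be worth confirming that the 3-copy conditions are only a strengthening of the 2-copy ones on this assignment. In the $\Lambda=2$ analysis the double-intersection condition arises from the master equation with $a_1\neq a_2$. In the $\Lambda=3$ analysis the condition \cref{eqn: redundancy} arises with $a_1=a_3\neq a_2$. Both give the same $ii+oo$ expression, so no 2-copy constraint is missing. This completes the argument.
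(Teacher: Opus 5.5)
Your proposal is correct and follows the paper's route. The paper likewise reads the corollary off from \cref{thm: weight 4 symmetric}: only the $(2,2,0)$ assignment survives, and \cref{eqn: CCZ 220 symmetric} coincides with the $(2,2,0)$ conditions \cref{eqn: 11+22=0 assignment 220} of \cref{thm: 2 copy cup}. One refinement: the step you flag as the ``main obstacle'' (the triple-intersection equation being vacuous) is only needed for the converse direction of that equivalence. For the implication stated, it suffices that the 2-copy conditions $i+o=0$ and $ii+oo=0$ already appear among the symmetric 3-copy conditions.
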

We can straightforwardly apply \cref{cor: 3 and 2 have preorientation} to the codes of \cite{menon2025magictricyclesefficientmagic}. We do so for the codes presented in \cref{table: menon codes CZ}, a subset of codes from \cite[Table 2]{menon2025magictricyclesefficientmagic} with $\leq 250$ qubits. We check that the 2-copy-cup gate preserves the codespace, and furthermore has non-trivial logical action according to \cref{alg: 2copy nontrivial}. %
\begin{table}[h]
\centering
\renewcommand{\arraystretch}{1.4}
\begin{tabular}{|c|c|c|c|c|}
\hline
$\,[[n, k, (d_X, d_Z)]]\,$ & $(l,m,n)$ & $a$ & $b$ & $c$ \\
\hline
$[[48, 6, (8,4)]]$ 
& $(2,2,4)$ 
& $y+z+xz+xyz^{2}$ 
& $yz^{2}+yz^{3}$ 
& $y+xyz$ \\
\hline
$[[84, 6, (12,5)]]$ 
& $(2,2,7)$ 
& $y+z+xz+xyz^{2}$ 
& $z^{3}+xz^{4}$ 
& $y+yz^{4}$\\
\hline
$[[108,6,(12,6)]]$ 
& $(3,3,4)$ 
& $x+z^{2}+yz+x^{2}yz^{3}$ 
& $y^{2}z+x^{2}yz^{3}$ 
& $x^{2}+x^{2}yz^{2}$\\
\hline
$[[240,6,(\le 22,8)]]$ 
& $(4,4,5)$ 
& $xy^{2}z^{3}+xy^{3}z^{4}+x^{2}y^{2}z+x^{2}y^{3}z^{2}$ 
& $y^{3}+x^{2}yz^{2}$ 
& $xz^{4}+x^{3}y^{3}z$\\
\hline
$[[108,12,(6,4)]]$ 
& $(3,3,4)$ 
& $z+xz^{3}+xyz^{2}+x^{2}y$ 
& $y^{2}+y^{2}z^{3}+xy^{2}z+x^{}y^{2}z^{2}$ 
& $z+xyz^{3}$ \\
\hline
$[[180,12,(15,6)]]$ 
& $(3,4,5)$ 
& $yz^{3}+y^{3}+x^{2}yz^{3}+x^{2}y^{3}z$ 
& $xy^{}z^{4}+xy^{2}z^{2}+x^{2}yz+x^{2}y^{2}z^{4}$ 
& $z^{4}+x^{2}z$ \\
\hline
$[[108,15,(12,6)]]$ 
& $(3,3,4)$ 
& $y+y^{2}z+xyz^{3}+x^{2}y^{2}z^2$ 
& $z^2+xy+xy^2z+x^2z^3$ 
& $yz^3+y^2z+x^2+x^2y^2z^2$ \\
\hline
\end{tabular}
    \caption{Codes from \cite[Table 2]{menon2025magictricyclesefficientmagic} with $n\leq250$. Polynomials $a,b,c\in \mathbb{F}_2[G]$ are the check polynomials of the classical group algebra code with $G=C_l\times C_m\times C_n$ generated by $x,y,z$ respectively. They are used to construct the 3-fold balanced product cube complex, which is collapsed into a total complex to define the trivariate tricycle code with qubits defined on the basis of 1-cochains.  
    The $\operatorname{X}$ and $\operatorname{Z}$ distances $d_X,d_Z$ are stated explicitly. Applying \cref{cor: 3 and 2 have preorientation}, the weight 4 group algebra codes that have pre-orientation for the symmetric 3-copy-cup gate also satisfy pre-orientation for the 2-copy-cup gate. The 2-copy-cup gate is numerically simulated, and shown to have non-trivial action indicating constant-depth $\operatorname{CZ}$ gates on all codes in the table.}
    \label{table: menon codes CZ}
\end{table}
\section{Further numerical simulation} \label{sec: further}
In this section, we present further results of numerical searches for new codes that satisfy pre-orientation for the 3-copy-cup gate. The result of \cref{thm: 220 max dist 2} motivates us to use the conditions for the symmetric cup product in our search for weight 4 group algebra codes. However, by results of \cref{thm: weight 4 symmetric}, it turns out that such a search has already been done in the work of \cite{menon2025magictricyclesefficientmagic} through general abelian group algebra codes. Moreover, in \cite{jacob2025singleshotdecodingfaulttolerantgates}, a search was also conducted for trivariate tricycle codes optimised on rate and distance, and so the numerical search for codes in these bodies of work have already been extensive. We attempt to differentiate our search from existing ones by being more targeted in the search parameters.
\subsection{Balanced products of univariate cyclic groups}
In both \cite{menon2025magictricyclesefficientmagic} and \cite{jacob2025singleshotdecodingfaulttolerantgates}, an emphasis is placed on so-called \emph{trivariate tricycle} codes, due to the fact that group algebra codes they consider come from abelian groups with 3 generators; or equivalently polynomial rings with 3 variables. However, in constructing the balanced product cochain complex, this condition is not required. Indeed, the isomorphism 
\begin{equation*}
    R\otimes_RR\cong R,
\end{equation*}
of \cref{eqn: RR isomorphism} has been pointed out for example in \cite[Equation 89]{menon2025magictricyclesefficientmagic} and \cite[Appendix A]{eberhardt2024logicaloperatorsfoldtransversalgates}, which can be used to simplify $\Lambda$-fold tensor product cochain complexes to have all elements defined over $R$ rather than $R\otimes_R\cdots \otimes_R R$. Moreover, finite abelian groups can be written as products of cyclic groups; this is known as the fundamental theorem of finite abelian groups \cite[Chapter 5.2 Thm.\ 3]{Dummit_Foote_2004}. Some of the examples found in \cite[Table~2]{menon2025magictricyclesefficientmagic} have a simpler group presentation. The $[[180,12,(15,6)]]$ code, for example, is constructed with group algebra elements from $\mathbb{F}_2 [C_3\times C_4\times C_5]$. However, all of the univariate group orders are coprime. Therefore this group is isomorphic to the univariate cyclic group algebra $\mathbb{F}_2[C_{60}]$. Many of the trivariate tricycle codes can therefore be defined equivalently on bivariate, or in this case univariate cyclic groups. This identification might make codes more amenable to analysis, since properties of rate and distance for univariate cyclic group algebra codes are well-understood. In turn this might lead to a smaller dependence on brute force numerical searches.
\begin{exa}{Collapsing Toric Codes}
    Consider square and cube complexes from the 2-fold and 3-fold tensor products of 2-term cochain complexes of the form
    \begin{center}
    \begin{tikzcd}[cells={nodes={minimum height=2em}}]
        R \arrow[r, "\delta\in R"] & R, R:=\mathbb{F}_2[G].
    \end{tikzcd}
    \end{center}
    Quantum codes defined on square complexes can be equipped with 2-copy-cup gates, while those on cube complexes can be equipped with 3-copy-cup gates. Bivariate bicycle codes \cite{bravyi2023highthreshold} are an example of abelian two-block group algebra codes \cite{pryadko_2BGA} on the square complex, taking the tensor product $\otimes_R$ of two 2-term chain complexes with $R=\mathbb{F}_2[C_l\times C_m]$. Meanwhile trivariate tricycle codes are defined on a cube complex by taking the tensor product $\otimes_R$ of three 2-term chain complexes with $R=\mathbb{F}_2 [C_l\times C_m\times C_n]$.\\

    Now consider the case where $G$ is the univariate cyclic group. When the tensor product is taken over $\mathbb{F}_2$, i.e. in the hypergraph product construction, then the 2-fold and 3-fold tensor products of the cochain complexes will necessarily have sectors on the hypercube complex be generated by 2 and 3 variables respectively. These sectors are placed on vertices of the hypercube complex, and will be of the form $\mathbb{F}_2 [C_l\times C_m]$ or $\mathbb{F}_2 [C_l\times C_m\times C_n]$. For example, taking the 3-fold tensor product $\otimes_{\mathbb{F}_2}$ of repetition codes from the cyclic group $C_l$ returns the 3-dimensional toric code of length $l$. Sectors positioned on vertices of the cube complexes as shown in \cref{exa: cube complexes} will be of the form
    \begin{subequations}
    \begin{align}
     (\mathbb{F}_2 [C_l])\otimes_{\mathbb{F}_2}(\mathbb{F}_2 [C_l])\otimes_{\mathbb{F}_2}(\mathbb{F}_2 [C_l]) \cong \mathbb{F}_2[x,y,z]/\langle x^l+1,y^l+1,z^l+1\rangle. \notag
    \end{align}
    \end{subequations}
    On the other hand, we could also take the tensor product $\otimes_{R}$ where $R:=\mathbb{F}_2[C_l]$. This can be interpreted geometrically as ``collapsing" the 3-dimensional torus into a single dimension. Then on the cube complex we have sectors
    \begin{subequations}
    \begin{align}
        (\mathbb{F}_2 [C_l])\otimes_{\mathbb{F}_2 [C_l]}(\mathbb{F}_2 [C_l])\otimes_{\mathbb{F}_2 [C_l]}(\mathbb{F}_2 [C_l]) &\cong \mathbb{F}_2[C_l]\notag\\
        &\cong \mathbb{F}_2[x]/\langle x^l+1\rangle. \notag
    \end{align}
    \end{subequations}
    The sectors are generated by single variables, and are in fact cyclic group algebras. Generalising, we see that under the balanced product, there is no need for the trivariate restriction in choosing $G$, which can be any abelian group, in construction of hypercube complexes.
\end{exa}
Let us build on this observation and perform a numerical search for group algebra codes in \emph{univariate} cyclic groups, in the construction of CSS codes amenable to the symmetric 3-copy-cup gate. In particular, we focus on the repetition code defined by check polynomial $1+x$, where $x$ generates the group. We attempt to find \emph{collapsed} 3D toric codes with optimal parameters, and that satisfy the conditions of \cref{eqn: CCZ 220 symmetric} for the symmetric 3-copy-cup gate. Results are presented in \cref{table: search results}.

\begin{table}[h]
\renewcommand{\arraystretch}{1.4}
\centering
\begin{tabular}{|c|c|c|c|c|}
\hline
$G$ & $[[n,k,d]]$  & $p_1$ & $p_2$ & $p_3$ \\\hline
$C_{4}$ & [[12,3,2]] & $1+x+x^2+x^{4}$ & $1+x+x^2+x^{4}$ & $1+x+x^2+x^{4}$\\\hline
$C_{7}$ & [[21,3,3]]  & $1+x+x^2+x^{3}$ & $1+x^{}+x^{3}+x^{4}$ & $1+x^{2}+x^{3}+x^{5}$\\\hline
$C_{11}$ & [[33,3,4]] & $1+x+x^2+x^{3}$ & $1+x^{}+x^{3}+x^{4}$ & $1+x^{2}+x^{4}+x^{6}$\\\hline
$C_{13}$ & [[39,3,5]]& $1+x+x^2+x^{3}$ & $1+x^{}+x^{3}+x^{4}$ & $1+x^{2}+x^{5}+x^{7}$\\\hline
$C_{21}$ & [[63,3,6]] & $1+x^{}+x^{2}+x^{3}$ & $1+x^{}+x^{5}+x^{6}$ & $1+x^{2}+x^{8}+x^{10}$\\\hline
\end{tabular}
\caption{Parameters of collapsed 3D toric codes found through numerical search. These are CSS codes with $\operatorname{X}$ check weight 12, and $\operatorname{Z}$ check weight 8. The codes are constructed as 3-fold balanced products of abelian group algebra codes from univariate cyclic group algebra $\mathbb{F}_2[G]$; the search was conducted for $|G|$ up to order 21. The polynomials come from the ideal $\langle 1+x\rangle$ in the ring $\mathbb{F}_2[G]/\langle x^n+1\rangle$, and therefore are different representative polynomials of the repetition code. Code polynomials satisfy conditions for pre-orientation; however, the 3-copy-cup gate results in trivial action on the codespace for all codes in this table. }
\label{table: search results}
\end{table} 
While the classical codes satisfy the pre-orientation conditions, we find that, for these codes, applying the copy-cup gate circuit results in trivial action on the logical space. These codes therefore represent optimal code parameters that have pre-orientation, but trivial logic from the 3-copy-cup gate.
\subsection{Searching for abelian group algebra codes}
We conduct a numerical search for classical group algebra codes with the following conditions:
\begin{itemize}
    \item They are weight 4 and satisfy pre-orientation conditions for the symmetric 3-copy-cup gate. The first element is fixed to be identity.
    \item Group orders are between 2 and 16. %
    \item The codes have dimension $k\leq 3$. This is motivated by a tradeoff between code rate and distance.
\end{itemize}
Quantum codes are constructed as the 3-fold balanced product of the group algebra codes if there are $\geq 3$ codes from the search. The code parameters are determined, and then a check is done for non-trivial logic from the 3-copy-cup gate. The code is recorded in \cref{table: non trivial logic} if this is satisfied. 
\begin{table}[h]
\renewcommand{\arraystretch}{1.4}
\centering
\begin{tabular}{|c|c|c|c|c|}
\hline
$G$ & $[[n,k,d]]$  & $p_1$ & $p_2$&$p_3$\\\hline
$C_9$ & $[[27,9,2]]$ & $1+x+x^3+x^4$ & $1+x+x^6+x^7$ & $1+x^2+x^3+x^5$\\\hline
\end{tabular}
\caption{Quantum codes with non-trivial 3-copy-cup gates found from a numerical search. The search is conducted for classical group algebra codes with pre-orientation for the 3-copy-cup gate, in all abelian groups of order between 2 and 16, and with code dimension $k\leq 3$. A single code was found to satisfy these constraints.}
\label{table: non trivial logic}
\end{table}
From the search, only a single code was found. In all other cases, either there were no abelian group algebra codes that satisfied the search parameters, or the constructed quantum codes had trivial logic from the 3-copy-cup gate.

We make an effort to reconcile this result with the $[[48,6,4]]$ code found in \cite{menon2025magictricyclesefficientmagic}, which has also been restated in \cref{table: menon codes CZ}. The code is constructed from a balanced product of group algebra codes, defined by a weight 4 polynomial and two weight 2 polynomials from the $C_2\times C_2 \times C_4$ abelian group. As classical group algebra codes, these are check polynomials for a $[16,8,4]$ and two $[16,4,4]$ cyclic codes respectively. This suggests that the parameter sweep conducted in our search is not thorough enough to find all interesting examples of balanced product group algebra codes equipped with a non-trivial copy-cup gate.
\subsection{2-term check polynomials from cyclic groups}
We can compare the quantum codes constructed from weight 4 codes obtained in \cref{table: search results} to codes constructed from polynomials of weight 2. Weight 2 check polynomials can be understood as variants of the repetition code. These have particularly simple form: for a cyclic group of order $n$ generated by $x$, the check polynomial $1+x^i$ generates the same code as $1+x$ if $i$ is coprime to $n$. This significantly reduces the search space. We construct 3D balanced product codes based on these polynomials and report them in \cref{table: search results2}.

\begin{table}[h]
\renewcommand{\arraystretch}{1.4}
\centering
\begin{tabular}{|c|c|c|c|c|}
\hline
$G$ & $[[n,k,d]]$  & $p_1$ & $p_2$ & $p_3$ \\\hline
$C_{2}$ & [[6,3,2]] & $1+x$ & $1+x$ & $1+x$\\\hline
$C_{7}$ & [[21,3,3]] & $1+x$ & $1+x^2$ & $1+x^3$\\\hline
$C_{14}$ & [[42,3,4]] & $1+x$ & $1+x^3$ & $1+x^5$\\\hline
$C_{27}$ & [[81,3,5]] & $1+x$ & $1+x^4$ & $1+x^{10}$\\\hline
\end{tabular}
\caption{Parameters of collapsed 3D toric codes found from a numerical search. The codes are constructed as the 3-fold balanced product of weight 2 repetition code polynomials. Groups up to $C_{30}$ have been tested. Codes with highest distances first appearing are presented. These codes are equipped with constant-depth $\operatorname{CCZ}$ from the 3-copy-cup gate, which has been implemented numerically.}
\label{table: search results2}
\end{table} 
We can further compare these to the trivariate tricycle codes from \cite[Table III]{jacob2025singleshotdecodingfaulttolerantgates}, also constructed from tensor products of weight 2 polynomials, presented in \cref{table: search results3}. By ignoring the trivariate restriction, we are able to achieve better code parameters at smaller code lengths.
\begin{table}[h]
\renewcommand{\arraystretch}{1.4}
\centering
\begin{tabular}{|c|c|c|c|c|}
\hline
$G$ & $[[n,k,d]]$  & $p_1$ & $p_2$ & $p_3$ \\\hline
$C_{3}\times C_2\times C_2$ & [[36,3,3]] & $1+xyz$ & $1+x^2z$ & $1+x^2y$\\\hline
$C_{4}\times C_2\times C_2$ & [[48,3,4]] & $1+x$ & $1+xz$ & $1+xy$\\\hline
$C_{5}\times C_3\times C_2$ & [[90,3,5]] & $1+x$ & $1+xy$ & $1+x^2y^2z$\\\hline
\end{tabular}
\caption{Optimal trivariate tricycle codes from \cite[Table III]{jacob2025singleshotdecodingfaulttolerantgates}, constructed as the 3-fold balanced product of weight 2 abelian group algebra codes.}
\label{table: search results3}
\end{table} 

Let us make remarks on two of these codes. Firstly, consider the $[[90,3,5]]$ code from \cref{table: search results3}. The three cyclic groups $C_5,C_3,C_2$ have coprime order, so in fact we have that  $C_5\times C_3\times C_2\cong C_{30}$. We can define an isomorphism on the generators
\begin{subequations}
    \begin{align}
        \phi: C_{5}\times C_3\times C_2&\to C_{30}\\
        \phi(x^iy^jz^k)&\mapsto a^{6i+10j+15k},
    \end{align}
\end{subequations}
where $x,y,z$ are the generators of the $C_5,C_3,$ and $C_2$ groups respectively; and $a$ is the generator of the $C_{30}$ cyclic group. Mapping the check polynomials, we find a $[[90,3,5]]$ code that has preserved the code parameters, with check polynomials $1+a^6,1+a^{16},1+a^{17}$. Monomials in the check polynomials do not have orders coprime to the group order, and so an equivalent code was not found when searching for codes presented in \cref{table: search results2}. Again, this suggests that there exist less restrictive search parameters for quantum codes than those we have considered in this work. For example, the check polynomial $1+a^6$ defines 6 disjoint copies of a classical distance 5 repetition code, i.e. a $[30,6,5]$ code. Results suggest that looking only for repetition codes in the univariate cyclic groups might not be the only way to obtain codes with good parameters. Indeed, it is a consequence of taking the balanced product $\otimes_R$, i.e. collapsing the dimensions or twisting the periodic boundary conditions, that lead to lower distances. The quantum codes might be out-performed by taking tensor products of other cyclic codes from the same group algebra. Detailed analysis of the code parameters would follow the work of \cite{liang2025generalizedtoriccodestwisted,eberhardt2024logicaloperatorsfoldtransversalgates}.

Secondly, let us consider the $[[6,3,2]]$ code from \cref{table: search results2}. After removing duplicated rows, this code has check matrices
\begin{subequations}
    \begin{align}
        H_X &= \begin{bmatrix}
            1 & 1 & 1 & 1 & 1 & 1
        \end{bmatrix},\\
        H_Z &= \begin{bmatrix}
            0 & 0 & 1 & 1 & 1 & 1\\
            1 & 1 & 0 & 0 & 1 & 1
        \end{bmatrix}.
    \end{align}
\end{subequations}
This is comparable to the $[[8,3,2]]$ color code, which admits a transversal $\operatorname{CCZ}$ gate. Both codes also involve a single X stabiliser on all qubits, and weight 4 Z stabilisers. However, the $[[6,3,2]]$ code has \textit{inter-code} non-Clifford gates on three copies of the code, whereas the colour code supports \textit{intra-code} non-Clifford gates on a single copy of the code.

\subsection{6-term group algebra codes}
For the rest of this section, we briefly touch on 6-term group algebra codes and the existence of pre-orientation for the 3-copy-cup gate. Conditions for pre-orientation are obtained by implementing the numerical algorithm described in \cref{pseudocode} -- that is, to frame the problem combinatorially as a perfect matching problem. First, note that the possible assignments for the weight 6 codes are given as 
\begin{align}
    (|\delta_{in}|,|\delta_{out}|,|\delta_{free}|)
    \in&\{(3,3,0),(2,4,0),(4,2,0),\notag\\
    &(1,5,0),(5,1,0),(2,2,2),\notag\\
    &(3,1,2),(1,3,2),(1,1,4)\notag\}.
\end{align}
Due to the scaling of the problem complexity, it proves difficult to even analyse the 6-term case as the number of terms and configurations grow exponentially. The algorithm lets us quickly rule out some of these assignments. Let us first analyse pre-orientation for the non-associative 3-copy-cup gate.
\begin{corollary}
    Consider a group algebra code defined by $\delta \in\mathbb{F}_2[G],|\delta|=6$, following the definitions of \cref{sec: group algebra}. The assignments
    \begin{equation}
        \{(1,5,0),(3,3,0),(3,1,2),(1,3,2)\}
    \end{equation}
    of group elements to coboundary partitions $\delta_{in}, \delta_{out},$ and $\delta_{free}$ for the non-associative 3-copy-cup gates do not have valid pre-orientation.
\end{corollary}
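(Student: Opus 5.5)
The plan is to treat each of the four assignments separately, using the non-associative conditions of \cref{eqn: non associative cup product}. Three of them should fall immediately to \cref{lemma: expansion}. The remaining one, $(3,3,0)$, needs the combinatorial matching procedure described after \cref{lemma: expansion}. Throughout, I label the group elements $g_1,\dots,g_6$, assigning them in order to $\delta_{in}$, then $\delta_{out}$, then $\delta_{free}$, as in the proof of \cref{thm: weight 4 non associative}.

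\emph{Assignments with a single-element partition.}
\begin{itemize}
\item For $(1,3,2)$, take the singular condition \cref{eqn:singular1}, $fi=0$, with $\delta_{in}=g_1$. Its expansion is $|g_5a_1\cap g_1a_2|+|g_6a_1\cap g_1a_2|$. Every term has $g_1$ in the second slot, so pairing forces $g_5=g_6$. This is a contradiction by the second item of \cref{lemma: expansion}.
\item For $(3,1,2)$, take \cref{eqn: 22+23=0}, $oo+of=o(o+f)$, with $\delta_{out}=g_4$. The $oo$ term vanishes because of the repeated element. What remains is $|g_4a_1\cap g_5a_2|+|g_4a_1\cap g_6a_2|$, and pairing gives $g_5=g_6$. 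So this assignment is also excluded by \cref{lemma: expansion}.
\item For $(1,5,0)$, $\delta_{free}=\emptyset$, so \cref{eqn: 111331 blabla bla} reduces to $iii+ooi$. Since $|\delta_{in}|=1$, the $iii$ term vanishes by repeated indices. The remaining $ooi$ expands into $20$ triples $jk1$, with $j\neq k$ in $\{2,\dots,6\}$. All of them share $g_1$ in the third slot, so any pair $(jk1,j'k'1)$ forces, via the second relation $g_k^{-1}g_1=g_{k'}^{-1}g_1$, that $g_k=g_{k'}$. This contradicts distinctness, again by \cref{lemma: expansion}.
\end{itemize}

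\emph{The assignment $(3,3,0)$.} This is the main obstacle, since no partition is a singleton. Here $\delta_{in}=g_1+g_2+g_3$ and $\delta_{out}=g_4+g_5+g_6$. The equations to satisfy are:
\begin{itemize}
\item $ii=0$, which has the six tuples $12,13,21,23,31,32$;
\item $oo=0$ (from \cref{eqn: 22+23=0}, since $f=\emptyset$), which has the six tuples on $\{4,5,6\}$;
\item $iii+ooi=0$, which has $6+18=24$ triples.
\end{itemize}

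From the proof of \cref{thm: weight 4 non associative}, case $(3,1,0)$, the six $iii$ triples cannot be perfectly matched among themselves without a repeated index in some slot. Hence at least some $iii$ triples must be paired with $ooi$ triples $jkl$, where $j,k\in\{4,5,6\}$ and $l\in\{1,2,3\}$. Each such mixed pair imposes two relations, $g_a^{-1}g_b=g_j^{-1}g_k$ and $g_b^{-1}g_c=g_k^{-1}g_l$. The first ties an in-difference to an out-difference, and the second expresses an out element in terms of in elements.

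I would first enumerate the admissible matchings of the two tuple equations. As in \cref{thm: 2 copy cup}, these give only a handful of relation patterns each, for example $g_1^{-1}g_2=g_2^{-1}g_1$-type involutive relations or cyclic identifications. I would then propagate them: for each choice, the $24$ triples form a graph whose edges are the non-conflicting pairs. The task is to show that every perfect matching either repeats an index in some slot or produces two relations whose combination forces $g_i=g_j$ for $i\neq j$, as in the $(123,456)$/$(125,461)$ example.

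A useful first reduction is to group the triples by their third index $l$. Each class contains two $iii$ triples and six $ooi$ triples. Pairs inside a class would share a third slot and are therefore forbidden, so all pairs must cross classes. This cuts the search considerably. Since a clean hand argument for the last step may not exist, I would finish by running the perfect-matching enumeration of \cref{pseudocode} with dynamic constraint updates. The goal is to confirm that no configuration survives, establishing the corollary for all four assignments.
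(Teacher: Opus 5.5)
Your proposal is correct and follows the paper's proof. The paper also rules out $(1,5,0)$, $(1,3,2)$ and $(3,1,2)$ via the singular-condition clause of \cref{lemma: expansion}, and it settles $(3,3,0)$ only by the numerical perfect-matching enumeration ("validated numerically"), which is exactly where your plan ends. Your preliminary reductions for $(3,3,0)$ are also correct. One small gap in the $(1,5,0)$ case: when $k=k'$, the relation $g_k^{-1}g_1=g_{k'}^{-1}g_1$ gives no contradiction by itself. There the contradiction comes from the first relation, which forces $g_j=g_{j'}$.
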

\begin{proof}
Validated numerically. However, some of these assignments are immediately shown not to have pre-orientation due to \cref{lemma: expansion}. For example, the assignments $(1,5,0),(1,3,2),(3,1,2)$ cannot satisfy pre-orientation due to singular conditions.
\end{proof}
We are left with the assignments $\{(2,4,0),(4,2,0),(2,2,2),(5,1,0)\}$. Let us focus on the $(2,2,2)$ assignment in the next theorem.
\begin{theorem} \label{thm: 222 group 6 assignment}
    Consider a 6-term group algebra code with $\delta=g_1+g_2+g_3+g_4+g_5+g_6\in \mathbb{F}_2[G]$, following the definitions of \cref{sec: group algebra}. The code has pre-orientation satisfying the non-associative 3-copy-cup gate conditions of \cref{prop: lambda 3 conditions} dependent on group elements assigned to partitions $\delta_{in},\delta_{out}$, and $\delta_{free}$. One such pre-orientation satisfies the conditions
    \begin{subequations} \label{eqn: 6 term CCZ}
        \begin{align}
            g_1^{-1}g_2&=g_2^{-1}g_1, \\
            g_3^{-1}g_4&=g_4^{-1}g_3, \\
            g_5^{-1}g_6 &= g_6^{-1}g_5, \\
            g_1g_2^{-1}= g_3g_4^{-1}&=g_5g_6^{-1},\\
        \delta_{in}&:=g_1+g_2,\notag\\
        \delta_{out}&:=g_3+g_4,\notag\\
        \delta_{free}&:=g_5+g_6.\notag
        \end{align}
    \end{subequations}
\end{theorem}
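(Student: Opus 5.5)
The plan is to check the five non-associative conditions of \cref{prop: lambda 3 conditions} directly for the assignment $(2,2,2)$, using the procedure of \cref{pseudocode}. Instead of searching through configurations, I would build a single global pairing from one symmetry. Set $t:=g_1g_2^{-1}$. The condition $g_1^{-1}g_2=g_2^{-1}g_1$ says $(g_2^{-1}g_1)^2=1$. Since $t$ is conjugate to $g_2^{-1}g_1$ (by $g_2$), $t$ is an involution, so $t=g_2g_1^{-1}$. In the same way $g_3g_4^{-1}$ and $g_5g_6^{-1}$ are involutions, and the last condition says all three equal $t$. Hence
\begin{equation*}
g_2=tg_1,\quad g_1=tg_2,\quad g_4=tg_3,\quad g_3=tg_4,\quad g_6=tg_5,\quad g_5=tg_6 .
\end{equation*}
So left multiplication by $t$ acts as the index involution $\sigma=(12)(34)(56)$ on the terms of $\delta$, and it preserves each of $\delta_{in},\delta_{out},\delta_{free}$.

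The key observation is then immediate. For a double intersection $ab$ we have $g_{\sigma a}^{-1}g_{\sigma b}=g_a^{-1}t^{-1}tg_b=g_a^{-1}g_b$. Thus $ab$ and $\sigma(a)\sigma(b)$ are nonzero for exactly the same $a_1,a_2$, since both require $a_1=g_a^{-1}g_ba_2$. Likewise a triple $abc$ and $\sigma(a)\sigma(b)\sigma(c)$ impose the same two constraints $g_a^{-1}g_b=g_{\sigma a}^{-1}g_{\sigma b}$ and $g_b^{-1}g_c=g_{\sigma b}^{-1}g_{\sigma c}$, so they vanish or not together. The pairing $\{\text{term},\sigma(\text{term})\}$ is a valid configuration whenever $\sigma$ maps the surviving terms of an equation to themselves and has no fixed term. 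There are no fixed terms because $\sigma$ has no fixed indices. The term set of each equation is $\sigma$-invariant because each partition is $\sigma$-invariant. Terms discarded as identically zero (those with a repeated element in adjacent positions, or in positions $1,3$ for triples) stay zero under $\sigma$, since $\sigma$ is a bijection on indices.

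I would then go through the equations explicitly as a sanity check. The parity condition $i+o=2+2=0$ holds. For $ii=0$ the surviving terms are $12,21$, paired by $\sigma$. For $fi=0$ the terms are $51,52,61,62$, paired as $(51,62),(52,61)$. For $oo+of=0$ the terms are $34,43,35,36,45,46$, paired as $(34,43),(35,46),(36,45)$. For the triple condition, $iii$ contributes nothing because $\delta_{in}$ has only two elements. The remaining terms are $fii$: $512,521,612,621$; $ooi$: $341,342,431,432$; and $ofi$: the eight terms $3\{5,6\}\{1,2\}$ and $4\{5,6\}\{1,2\}$. That is $16$ terms, which split into $8$ $\sigma$-orbits. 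By \cref{lem: cohom invariance} and \cref{defn: master}, every equation is then satisfied modulo $2$ for all $a_1,a_2,a_3$.

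The main obstacle is the non-abelian bookkeeping. The conditions are stated both in the form $g_i^{-1}g_j$ and in the form $g_ig_j^{-1}$, and one has to see that these combine into a single left translation by $t$. That requires the conjugation argument showing that $g_1g_2^{-1}$ is an involution, and then writing $g_{\sigma a}=tg_a$ so that $t$ cancels in $g_{\sigma a}^{-1}g_{\sigma b}$. I would also note, as a remark, that the stated conditions are exactly those forced by choosing this $\sigma$-pairing: for example, $(51,62)$ forces $g_6g_5^{-1}=g_2g_1^{-1}$. This is consistent with the theorem asserting only that this is \emph{one} valid pre-orientation for the assignment.
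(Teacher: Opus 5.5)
Your argument is correct, and it takes a genuinely different route from the paper. The paper gives no analytic proof of this statement. Its conditions come from running the perfect-matching enumeration of \cref{pseudocode} on the $(2,2,2)$ assignment, and the surrounding text only records that, numerically, this is the unique valid configuration.

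You instead observe that the hypotheses say exactly this: left multiplication by the involution $t=g_1g_2^{-1}$ permutes the terms of $\delta$ as $(12)(34)(56)$ and preserves each of $\delta_{in},\delta_{out},\delta_{free}$ setwise. Pairing every term with its $\sigma$-image then gives a valid configuration for all equations at once. This has several advantages:
\begin{itemize}
\item It needs no case enumeration.
\item It holds verbatim for non-abelian $G$; your conjugation step showing that $g_1g_2^{-1}$ is an involution is precisely what the non-abelian case requires.
\item It proves more than the theorem asks. Each set $\delta_\alpha(a_1)\cap\delta_\beta(a_2)\cap\delta_\gamma(a_3)$, and each double intersection, is stable under the fixed-point-free map $x\mapsto tx$, so every individual intersection term has even size. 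Hence the same pre-orientation also satisfies the symmetric and outside-in 3-copy-cup conditions and the 2-copy-cup conditions.
\item It exhibits the factorisation $\delta=(1+t)(g_1+g_3+g_5)$, which is what underlies the paper's distance-2 corollary in the abelian case.
\end{itemize}

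What the numerical search gives that your argument does not is the converse: that no other pairing of the double-intersection terms and the $16$ surviving triple terms works for $(2,2,2)$. The theorem as worded (``one such pre-orientation'') only asserts existence, and you have established that.

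One minor point: the equations of \cref{defn: master} hold directly by your pairing. \cref{lem: cohom invariance} is what then turns this into a logical gate, not what makes the equations hold.
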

We find numerically that this is the only configuration satisfying the $(2,2,2)$ assignment. Moreover, the conditions are similar to \cref{eqn: 4 term CCZ} and imposes similar conditions on involution for the two new group elements, which have been assigned to $\delta_{free}$. It is clear that, as the number of terms grow, it is also more difficult to find valid configurations since there are more tuples and triples of intersections to pair, which can lead to contradictory conditions. For this particular case, we can also show that increasing the check weight does not lead to codes with better distance.

\begin{corollary}
    Abelian weight 6 group algebra codes with assignment described in \cref{thm: 222 group 6 assignment} have an upper-bounded code distance of 2. 
\end{corollary}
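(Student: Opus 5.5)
The plan mirrors the proof of \cref{thm: 220 max dist 2}: exhibit an explicit weight 2 codeword in $\ker\delta$ by factoring $\delta$. From \cref{eqn: 6 term CCZ}, set $t:=g_1g_2^{-1}$. The first three conditions say that each of $g_1^{-1}g_2$, $g_3^{-1}g_4$, $g_5^{-1}g_6$ equals its own inverse, so each is an involution. Because $G$ is abelian, $g_1g_2^{-1}=g_1^{-1}g_2$, and similarly for the other two pairs. The fourth condition then shows that all three pairs determine the same involution, namely
\begin{equation*}
t=g_2g_1^{-1}=g_4g_3^{-1}=g_6g_5^{-1},\qquad t^2=1,
\end{equation*}
with $t\neq 1$ because the group elements are distinct.

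Next I rewrite $g_2=tg_1$, $g_4=tg_3$ and $g_6=tg_5$, which holds by commutativity. This gives the factorisation
\begin{equation*}
\delta = g_1+g_2+g_3+g_4+g_5+g_6=(1+t)(g_1+g_3+g_5).
\end{equation*}
Since $t^2=1$, we have $(1+t)^2=1+t^2=0$ over $\mathbb{F}_2$. Hence $\delta(1+t)=(g_1+g_3+g_5)(1+t)^2=0$, using commutativity of $\mathbb{F}_2[G]$.

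The element $1+t$ has Hamming weight exactly 2, because $t\neq 1$. It lies in the kernel of the coboundary operator, which is the check element up to the antipode. The antipode preserves the factorisation, since $\delta^{-1}=(1+t^{-1})(\dots)=(1+t)(\dots)$ when $t=t^{-1}$, so either convention gives a weight 2 codeword. The code distance is therefore at most 2.

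The main point needing care is the bookkeeping between left and right quotients, $g_1^{-1}g_2$ versus $g_1g_2^{-1}$, and between check element and coboundary. In the abelian setting these all collapse, and that collapse is exactly why the hypothesis that the group is abelian is needed, as in the remark following \cref{thm: 220 max dist 2}. Beyond that I expect no real obstacle.
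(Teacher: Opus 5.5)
Your proposal is correct and takes essentially the same route as the paper. The paper writes $g_4=g_2g_1^{-1}g_3$ and $g_6=g_2g_1^{-1}g_5$, factors $\delta=g_1(1+g_1^{-1}g_2)(1+g_1^{-1}g_3+g_1^{-1}g_5)$, and uses commutativity with the involution $g_1^{-1}g_2$ to exhibit the weight 2 codeword $1+g_1^{-1}g_2$, which equals your $1+t$ in the abelian case; your explicit check that the antipode preserves the factorisation, so that the check-element and coboundary conventions give the same codeword, is a detail the paper leaves implicit.
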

\begin{proof}
    The proof is the same as that for the weight 4 codes, given in \cref{thm: 220 max dist 2}. Given that
    \begin{subequations}
        \begin{align}
            g_4&=g_2g_1^{-1}g_3\\
            g_6 &= g_2g_1^{-1}g_5,
        \end{align}
    \end{subequations}
we can rearrange the group elements in the check element as
\begin{subequations}
    \begin{align}
        &g_1+g_2+g_3+g_4+g_5+g_6\\
        &=g_1(1+g_1^{-1}g_2+g_1^{-1}g_3+g_1^{-1}g_4+g_1^{-1}g_5+g_1^{-1}g_6)\\
        &= g_1(1+g_1^{-1}g_2 + g_1^{-1}g_3+g_1^{-1}g_2g_1^{-1}g_3+g_1^{-1}g_5+g_1^{-1}g_2g_1^{-1}g_5)\\
        &= g_1(1+g_1^{-1}g_2 +(1+g_1^{-1}g_2)g_1^{-1}g_3 + (1+g_1^{-1}g_2)(g_1^{-1}g_5))\\
        &= g_1(1+g_1^{-1}g_2)(1+g_1^{-1}g_3+g_1^{-1}g_5)
    \end{align}
\end{subequations}
If the group is abelian, then $(1+g_1^{-1}g_2)$ will be a valid codeword since $g_1^{-1}g_2$ is an involution. Therefore if the group is abelian, the code distance upper bound is 2.
\end{proof}
We numerically investigate the $(2,4,0)$ assignment as well, and find 315 valid configurations. However, upon closer inspection, we find that all configurations contain similar involutions as that of \cref{eqn: 6 term CCZ}; hence we omit the specific configurations here. We do not investigate the $(4,2,0)$ or $(5,1,0)$ assignments as the involve a prohibitively large number of terms after expanding the intersections. Moreover, the large number of terms will result in severely constrained group algebra codes.

A similar analysis can be conducted for the weight 6 group algebra codes satisfying the symmetric 3-copy-cup gate. The different conditions lead to different possible assignments and constraints on the group algebra code. However, just as in the non-associative cup product, there are many restrictions due to the larger number of terms that need to be paired. For example, unlike the non-associative cup product, there exist valid $(3,3,0)$ assignments. Two examples of configurations that satisfy the assignment return the conditions
\begin{subequations}
    \begin{align}
            g_1^{-1}g_2&=g_2^{-1}g_1\\
            g_1^{-1}g_2&=g_4^{-1}g_5\\
            g_1^{-1}g_3&=g_3^{-1}g_1\\
            g_1^{-1}g_3&=g_4^{-1}g_6\\
            g_2^{-1}g_3&=g_3^{-1}g_2\\
            g_2^{-1}g_3&=g_5^{-1}g_6\\
            g_4^{-1}g_5&=g_5^{-1}g_4\\
            g_4^{-1}g_6&=g_6^{-1}g_4\\
            g_5^{-1}g_6&=g_6^{-1}g_5
        \end{align}
    \end{subequations}
and 
\begin{subequations}
    \begin{align}
            g_1^{-1}g_2&=g_2^{-1}g_1\\
            g_1^{-1}g_2&=g_4^{-1}g_6\\
            g_1^{-1}g_3&=g_4^{-1}g_5\\
            g_2^{-1}g_3&=g_6^{-1}g_5\\
            g_4^{-1}g_6&=g_6^{-1}g_4,
        \end{align}
    \end{subequations}
where
\begin{subequations}
    \begin{align*}
        \delta_{in}&:=g_1+g_2+g_3,\\
        \delta_{out}&:=g_4+g_5+g_6.
    \end{align*}
\end{subequations}
These configurations are found numerically through a perfect matching algorithm. We do not implement the numerical search for weight 6 codes satisfying pre-orientation due to the severely restrictive conditions. However, we note that, just as in the case of weight 4 codes, results are suggestive that non-abelian group algebra codes could possibly have better parameters.

\section{Conclusion} \label{sec: conclusion}
In this work, we have determined the constraints on classical group algebra codes with increasing check weight so that they satisfy pre-orientation for the 2- and 3-copy-cup gates. When used to construct tensor product cochain complexes, these define quantum codes with constant-depth inter-code $\operatorname{CZ}$ and $\operatorname{CCZ}$ gates respectively. We have shown that determining these constraints can be formulated as a perfect matching problem, and that the number of constraints increase with the code check weight. 
Determining the set of all constraints is computationally challenging, and by implementing a perfect-matching algorithm, we both check the analysis of conditions for group algebra codes up to weight 5, and also determine examples of conditions for weight 6 group algebra codes to have pre-orientation for the 3-copy-cup gate. By utilising both numerics and analysis, we have shown that increased constraints result in more involution terms of pairs of group elements, which are particularly strict for abelian group algebra codes, and lead to low distances.

Through understanding subtleties of the integrated Leibniz rule, we have also extended the discussion of copy-cup gates to quantum codes constructed from classical codes with odd weight. In turn, this has allowed us to analyse a wide variety of quantum codes, including the bivariate bicycle codes of \cite{bravyi2023highthreshold}. We have shown that they do not satisfy pre-orientation for the 2- and 3-copy-cup gate, and that, in fact, no weight 3 or weight 5 group algebra codes can satisfy pre-orientation for the 3-copy-cup gate. This analysis has been supplemented by a numerical search for codes, which has led to new quantum codes such as a $[[144,4,12]]$ code with check weight 6, and a $[[72,14,6]]$ code with check weight 8, both equipped with constant-depth $\operatorname{CZ}$ gates.

Our analysis has also allowed us to reason the low distances of certain trivariate tricycle codes, as well as unify and extend examples of 2-fold and 3-fold balanced product group algebra codes with copy-cup gates described in previous works \cite{breuckmann2024cupsgatesicohomology,jacob2025singleshotdecodingfaulttolerantgates,menon2025magictricyclesefficientmagic}.
We have extended the discussion of copy-cup gates towards non-abelian group algebra codes, and have explicitly listed all possible conditions on weight 4 group algebra codes for the 2- and 3-copy cup gates. Parts of these conditions have been brought up in \cite{breuckmann2024cupsgatesicohomology,menon2025magictricyclesefficientmagic,jacob2025singleshotdecodingfaulttolerantgates} in various forms. We have also determined an upper distance bound on weight 4 abelian group algebra codes that satisfy pre-orientation for the non-associative 3-copy-cup gate, which has been empirically observed in \cite{menon2025magictricyclesefficientmagic,jacob2025singleshotdecodingfaulttolerantgates}, and have showed it is a result of involution conditions on the group elements.

Finally, we have also described balanced product cochain complexes constructed from abelian group algebra codes beyond trivariate abelian group algebras. We have explicitly utilised the ring isomorphism $R\otimes_RR\cong R$, previously described in e.g.~\cite[Equation 89]{menon2025magictricyclesefficientmagic} and \cite[Appendix A]{eberhardt2024logicaloperatorsfoldtransversalgates}, as well as the fundamental theorem of finite abelian groups \cite[Chapter 5.2 Thm.\ 3]{Dummit_Foote_2004} to construct complexes from general abelian group algebras, and on which we define quantum codes. Extending this construction to hypercube complexes as the balanced product of group algebra codes is straightforward.
We have illustrated a benefit of extending beyond trivariate abelian groups by searching for codes defined on balanced products of univariate cyclic group algebras, finding balanced product quantum codes constructed from weight 2 repetition codes that outperform similar trivariate codes from \cite{jacob2025singleshotdecodingfaulttolerantgates} in terms of distance.

Let us make two final remarks on future work. Firstly, it is clear from our results that increasing the check weight of the group algebra codes leads to more restrictive constraints for pre-orientation. On the other hand, increasing the check weight beyond 2 allows us to identify more classical codes beyond variants of the repetition code, leading potentially to an increased encoding rate. There is also numerical work, for example in \cite{tiew2024lowoverheadentanglinggatesgeneralised}, showing that quantum codes with better parameters can be found when check weight increases. There is a tradeoff between parameters of the quantum code and the strict conditions that come from group algebra codes with higher weight, which can also be affected by how non-associativity in the cup product is handled. Therefore an avenue for future work could be to identify how this tradeoff can be optimised.

Secondly, in this work, we have conducted numerical searches for abelian group algebra codes. However, the conditions we have derived can also be applied immediately to non-abelian group algebra codes. Constructing quantum codes then requires care in the left and right multiplication of group elements. Examples of quantum codes constructed from non-abelian groups can be found in \cite{pryadko_2BGA,Guo_2026}, but are currently limited to two-fold tensor products and square complexes. An immediate extension to our work would be to conduct a numerical search for non-abelian group algebra codes and the derived 2-fold tensor product quantum codes with non-trivial 2-copy-cup gates. A further extension would be to construct cube complexes from non-abelian groups, to obtain quantum codes with non-trivial 3-copy-cup gates; this can be generalised to higher-dimensional complexes.
\section*{Acknowledgments}
We would like to thank Abraham Jacob and Campbell McLauchlan for helpful discussions on the trivariate tricycle codes, in particular on the distance 2 property of the codes. 
RT acknowledges support from UK EPSRC (EP/SO23607/1).
Part of this work was conducted while the authors were visiting the Simons Institute for the Theory of Computing supported by DOE
QSA grant \#FP00010905.
\appendix 
\section{Details on Quantum Circuit}
The copy-cup gate of \cref{defn: copy-cup gate} is numerically simulated on the codes that are found. In this work we focus on the 2- and 3-copy-cup gates. Both simulations proceed in the same way. The integral of the cup product, defined in \cref{eqn: cohom inv} and restated here as
\begin{equation} \notag
    \int x_1\cup \cdots \cup x_\Lambda
\end{equation}
is evaluated for every tuple or triple of 1-cochains to obtain a list of $\operatorname{CZ}$ or $\operatorname{CCZ}$ gates for the copy-cup gate. The calculation is detailed for the 2-copy-cup gate in \cref{appendix: 2 copy-cup gate}, and for the 3-copy-cup gate in \cref{appendix: non associative 3 copy-cup gate}. Care needs to be taken when calculating the cup product for the balanced product cochain complex, with emphasis placed on calculating the cup products on coinvariants. A $\operatorname{CZ}$ or $\operatorname{CCZ}$ gate is applied on tuples and triples of qubits between codes if \cref{eqn: cohom inv} evaluated on the qubits is $1\mod 2$, depending on whether the 2- or 3-copy-cup gate is being applied.

After obtaining the set of gates in the copy-cup circuit, we check that the circuit results in non-trivial logical action on the cohomology representatives. This is done by counting the number of overlaps between the copy-cup circuit gates and the cohomology basis, and is detailed in \cref{alg: 2copy nontrivial}.

\begin{algorithm}[H]
\caption{2-copy-cup Logical action}
\begin{algorithmic}[1] \label{alg: 2copy nontrivial}
\Require Cohomology basis matrix $H^1_b$, with rows indicating the support of $\operatorname{X}$ logical operators; List $L$ of $\operatorname{CZ}$ gates in the copy-cup circuit of the form $(i,j)$ indicating $\operatorname{CZ}$ on qubit $i$ in code 1 and $j$ in code 2
\For{each $(x_1,x_2)$ in $\mathrm{rows}(H^1_b)\times\mathrm{rows}(H^1_b)$} \Comment{X logical operators in code 1 and code 2}
    \State Initialise overlap $\gets 0$  \Comment{Overlap between logical operators under the copy-cup circuit}
    \For{each $q_1$ in $x_1$ and $q_2$ in $x_2$} \Comment{For each tuple of qubits}
        \For{ $(i,j)$ in $L$} \Comment{Apply $\operatorname{CZ}$s in circuit}
            \If{$(q_1,q_2)==(i,j)$}
                \State overlap $\gets$ overlap$+1 \mod 2$ 
            \EndIf
        \EndFor
    \EndFor
        
    \If{overlap == $1\mod 2$}
        \State \Return \texttt{True}
    \EndIf
\EndFor
\State \Return \texttt{False}
\end{algorithmic}
\end{algorithm}

We also state the algorithm to check for constant-depth $\operatorname{CCZ}$ for the 3-copy-cup gate in \cref{alg: ccz}. This is the same as the discussion in \cite[Appendix C.4]{jacob2025singleshotdecodingfaulttolerantgates}. For every triple of logical X operators $(x_1,x_2,x_3)$, we count the number of $\operatorname{CCZ}$ gates in the circuit that intersect their support. If this intersection is odd, then there is a non-trivial logical $\operatorname{CCZ}$ gate. We check for presence of \emph{a} non-trivial $\operatorname{CCZ}$ by computing \emph{a} cohomology basis, not necessarily minimum weight.
\begin{algorithm}[H]
\caption{3-copy-cup Logical action}
\begin{algorithmic}[1] \label{alg: ccz}
\Require Cohomology basis matrix $H^1_b$, with rows indicating the support of $\operatorname{X}$ logical operators; List $L$ of $\operatorname{CZ}$ gates in the copy-cup circuit of the form $(i,j,k)$ indicating $\operatorname{CZ}$ on qubit $i,j,k$ in code 1,2, and 3 respectively

\For{each $(x_1,x_2,x_3)$ in $\mathrm{rows}(H^1_b)\times \mathrm{rows}(H^1_b)\times \mathrm{rows}(H^1_b)$} \Comment{X logical operator in code 1, code 2, and code 3}
    \State Initialise overlap $\gets 0$  \Comment{Overlap between logical operators under the copy-cup circuit}
    \For{each $(q_1,q_2,q_3)\in x_1\times x_2 \times x_3$} \Comment{For every triple of qubits}
        \For{ $(i,j,k)$ in $L$} \Comment{Apply $\operatorname{CCZ}$s in circuit}
            \If{$(q_1,q_2,q_3)==(i,j,k)$}
                \State overlap $\gets$ overlap$+1 \mod 2$
            \EndIf
        \EndFor
    \EndFor
    \If{overlap == $1\mod 2$}
        \State \Return \texttt{True}
    \EndIf
\EndFor
\State \Return \texttt{False}
\end{algorithmic} 
\end{algorithm}
\section{Cup product of balanced product cochains}
We explicitly go through the cup product calculation for the tensor product over the ring $R$, i.e.\ the balanced product $\otimes_R$. The hypergraph product is the trivial case where $R=\mathbb{F}_2$.

The cup product calculation for the balanced product requires one to compute with coinvariants. The partitioning of cochains into equivalence classes introduces the relation
\begin{equation}
    ar\otimes_Rb \equiv a\otimes_R rb, r\in R
\end{equation}
for cochains $a,b$ where ring elements can be brought across the tensor product. To determine the cup product between two balanced product cochains, one must compute all non-zero cup products between a single representative cochain from one equivalence class, and all representatives in the other. When the integral is taken to be the Hamming weight, it follows that there is a non-zero cup product only if a cochain has an odd number of non-zero cup products with the coinvariants of another cochain. Further information on computing with coinvariants can be found in \cite[Section 3.3]{breuckmann2024cupsgatesicohomology}

We will illustrate this principle with cup products for the 2- and 3-copy-cup gates in the following sections, and see that the computation essentially involves summing over the coinvariants.

\subsection{Cup product calculation for $\Lambda=2$} \label{appendix: 2 copy-cup gate}
The case $\Lambda=2$ is lifted and further extended from \cite[Section 5.3.1]{breuckmann2024cupsgatesicohomology}. Non-zero cup products are only between cochains of the form $p^1\otimes 1^0$ and $1^0\otimes q^1$ with superscripts indicating whether it is a 0- or 1-cochain. Let us consider taking the tensor product over $R$ where $R$ is the entire ring the group algebra code is defined on. Then we can choose a basis of 1-cochains as $p\otimes_R1\in C^1\otimes C^0$ and $1\otimes_Rq\in C^0\otimes C^1$. Then the calculation goes as
\begin{subequations}
    \begin{align*}
        (p\otimes_R1)\cup (1\otimes_R q) &= \sum_g(p\otimes_{\mathbb{F}_2}1)\cup (g^{-1}\otimes_{\mathbb{F}_2} gq)\\
        &=\sum_g(p\cup g)\otimes(1\cup gq)\\
        &= \sum_g \delta_{p\in \delta^1_{in}(g)}\delta_{gq\in\delta^2_{out}} p\otimes gq\\
        &= \sum_g \delta_{g\in \delta^{-1}_{in}(p)} \delta_{g\in\delta^2_{out}(q^{-1})} p\otimes gq\\
        &= \sum_g \delta_{g\in \delta^{-1}_{in}(p)\cap \delta^2_{out}(q^{-1})}p\otimes gq.\\
    \end{align*}
\end{subequations}
We explicitly write out the tensor products in the first line of the equation, and omit it for the rest of the calculation. Note the distinction between delta functions and the coboundary operators. The notation $\delta^{-i}$ refers to the coboundary operator $\delta$ of the $i^{th}$ classical code in the tensor product cochain complex, with the negative superscript then being the antipode map $a\mapsto a^{-1}$ being applied to all group elements in this partition. If the number of terms in this sum over $g$ is $1\mod2$, then there is a $\operatorname{CZ}$ gate from the cochain $p\otimes_R1$ of code 1 to the cochain $1\otimes_R q$ of code 2. Similarly, we omit the subscripts on tensor products and compute 
\begin{subequations}
    \begin{align*}
        (1\otimes q)\cup (p\otimes1) &= \sum_g(1\otimes q)\cup (pg^{-1}\otimes g)\\
        &=\sum_g(1\cup pg^{-1}) \otimes (q\cup g)\\
        &= \sum_g \delta_{pg^{-1}\in \delta^1_{out}} \delta_{q\in \delta^2_{in}(g)} pg^{-1}\otimes q\\
        &= \sum_g \delta_{g^{}\in \delta^{-1}_{out}(p)} \delta_{g\in \delta^{-2}_{in}(q)} pg^{-1}\otimes q\\
        &= \sum_g \delta_{g\in \delta^{-1}_{out}(p)\cap \delta^{-2}_{in}(q^{})}pg^{-1}\otimes q.
    \end{align*}
\end{subequations}
If the number of terms in this sum over $g$ is $1\mod2$, then there is a gate from the cochain $1\otimes_R q$ of code 1 to the cochain $p\otimes_R1$ of code 2. Now let us go to the 3-copy-cup gate.
\subsection{Cup product calculation for $\Lambda=3$} \label{appendix: non associative 3 copy-cup gate}
The non-associative and symmetric 3-copy-cup gates have different calculations due to the order the cup product is applied. We show the explicit calculation for the non-associative copy-cup gate. There are 6 calculations to go through, but each ends up as a sum over group elements with delta functions. For the cup product between $p\otimes 1\otimes 1,1\otimes q\otimes 1$ and $1\otimes 1 \otimes r$, we have
\begin{subequations} \label{asd}
    \begin{align}
        &(p\otimes 1\otimes 1)\cup (1\otimes q\otimes 1) \cup (1\otimes 1 \otimes r)\notag \\
        &= \sum_{g_1,g_2,g_3,g_4} (p\otimes 1\otimes 1)\cup (g_1^{-1}\otimes g_1qg_2^{-1}\otimes g_2) \cup (g_3^{-1}\otimes g_3g_4^{-1} \otimes g_4r)\label{asd1}\\
        &=\sum_{g_1,g_2,g_3,g_4} (p\cup g_1^{-1}\cup g_3^{-1}) \otimes (1\cup g_1qg_2^{-1}\cup g_3g_4^{-1})\otimes (1\cup g_2\cup g_4r)\label{asd2}\\
        &=\sum_{g_1,g_2,g_3,g_4} \left(\delta_{p\in\delta^1_{in}(g_1^{-1})}\delta_{p\in\delta^1_{in}(g_3^{-1})}\delta_{g_1qg_2^{-1}\in\delta^2_{out}}
        \delta_{g_1qg_2^{-1}\in\delta^2_{in}g_3g^{-1}_4}\delta_{g_4r\in\delta^3_{out}g_2}\delta_{1,g_2}\right) p\otimes g_1qg_2^{-1}\otimes g_4r\label{asd3}\\
        &= \sum_{g_1,g_3,g_4} \left(\delta_{g_1\in p^{-1}\delta^1_{in}}\delta_{g_3\in p^{-1}\delta^1_{in}}\delta_{g_1q\in\delta^2_{out}}
        \delta_{g_1q\in\delta^2_{in}g_3g^{-1}_4}\delta_{g_4r\in\delta^3_{out}}\right) p\otimes g_1q\otimes g_4r\label{asd4}\\
        &=\sum_{g_1,g_3,g_4} \left(\delta_{g_1\in p^{-1}\delta^1_{in}}\delta_{g_3\in p^{-1}\delta^1_{in}}\delta_{g_1\in\delta^2_{out}q^{-1}}
        \delta_{g_3\in \delta_{in}^{-2}g_1qg_4}\delta_{g_4\in\delta^3_{out}r^{-1}}\right) p\otimes g_1q\otimes g_4r\label{asd5}\\
        &=\sum_{g_1,g_3,g_4} \left(\delta_{g_1\in p^{-1}\delta^1_{in}\cap \delta_{out}^2q^{-1}}\delta_{g_3\in p^{-1}\delta^1_{in}\cap \delta_{in}^{-2}g_1qg_4}
        \delta_{g_4\in\delta^3_{out}r^{-1}}\right) p\otimes g_1q\otimes g_4r\label{asd6}\\
        &=\sum_{g_1,g_4} \left(\delta_{g_1\in p^{-1}\delta^1_{in}\cap \delta_{out}^2q^{-1}}\delta_{g_4\in\delta^3_{out}r^{-1}}\right) \mid p^{-1}\delta^1_{in}\cap \delta_{in}^{-2}g_1qg_4\mid p\otimes g_1q\otimes g_4r.\label{asd7}
    \end{align}
\end{subequations}
In \cref{asd1}, the cochains are first converted into their coinvariant form and expanded, resulting in the sums. In \cref{asd2}, the cup product is distributed over the tensor product. Then, the cup products are evaluated to return delta functions. In \cref{asd4}, the sum over $g_2$ is killed by evaluating $\delta_{1,g_2}$. Other delta functions have their arguments re-arranged; for example, the condition $p\in\delta^1_{in}(g_1^{-1})$ in \cref{asd3} is rearranged to $g_1\in p^{-1}\delta^1_{in}$ in \cref{asd4} since $g_1$ is involved in a sum. In \cref{asd7}, the sum over $g_3$ is killed by the condition $g_3\in p^{-1}\delta^1_{in}\cap \delta_{in}^{-2}g_1qg_4$, returning the argument $\mid p^{-1}\delta^1_{in}\cap \delta_{in}^{-2}g_1qg_4\mid$ since the intersection term does not have any dependence on $g_3$. We attempt to simplify the calculation as much as possible by killing sums. Without simplification, finding interacting cochains in the 3-copy-cup product would involve looping over the group elements 7 times, which quickly becomes unfeasible even for small group orders. Furthermore, we note that by keeping track of left and right multiplication, this calculation applies to both abelian and non-abelian group algebra codes.

We present the final simplified equations for the 5 other non-zero cup products between cochains below. Simplification follows as in \cref{asd}, starting with 6 delta functions over 4 sums and attempting to kill off any sums. We omit the bulk of the calculation and present the final simplified sums used in numerical simulations for the non-associative 3-copy-cup gate.
\begin{subequations}
    \begin{align*}
        &(p\otimes 1\otimes 1)\cup (1\otimes 1 \otimes r) \cup (1\otimes q\otimes 1) \\
        &=\sum_{g_2,g_3,g_4}p\otimes g_3qg_4^{-1}\otimes g_2r \cdot
        \delta_{g_2\in p^{-1}\delta^{1}_{in}\cap \delta_{out}^3r^{-1}}
        \delta_{g_3\in p^{-1}\delta^{1}_{in}\cap \delta_{out}^2g_4q^{-1}}
        \delta_{g_4\in\delta_{in}^{-3}g_2r}.\\
        &(1\otimes q\otimes 1)\cup (1\otimes 1 \otimes r) \cup (p\otimes 1\otimes 1) \\
        &=\sum_{g_2,g_3}pg_3^{-1}\otimes q \otimes g_2r \cdot 
        |q^{-1}\delta_{in}^2 g_3\cap \delta_{in}^3g_2r| 
        \delta_{g_3\in \delta_{out}^{-1}p}
        \delta_{g_2\in q^{-1}\delta_{in}^2\cap\delta_{out}^3r^{-1}}.\\
        &(1\otimes q\otimes 1)\cup (p\otimes 1\otimes 1)\cup (1\otimes 1 \otimes r)  \\
        &=\sum_{g_1,g_4}pg_1^{-1}\otimes q \otimes g_4r \cdot 
        \delta_{g_1\in \delta_{out}^{-1}p\cap \delta_{in}^{-2}q}
        \delta_{g_4\in \delta_{out}^3r^{-1}}
        |g_1p^{-1}\delta_{in}^1\cap\delta_{in}^{-2}g_4q|.\\
        &(1\otimes 1 \otimes r) \cup (p\otimes 1\otimes 1)\cup (1\otimes q\otimes 1)  \\
        &=\sum_{g_1,g_3,g_4}pg_1^{-1}\otimes g_3qg_4^{-1} \otimes r \cdot 
        \delta_{g_1^{-1}\in p^{-1}\delta_{out}^1\cap p^{-1}\delta_{in}^1g_3^{-1}\cap r^{-1}\delta_{in}^3}
        \delta_{g_3qg_4^{-1}\in\delta_{out}^2}
        \delta_{g_4^{-1}\in r^{-1}\delta_{in}^3}\\
        &(1\otimes 1 \otimes r) \cup (1\otimes q\otimes 1)\cup (p\otimes 1\otimes 1) \\
        &=\sum_{g_2,g_3}pg_3^{-1}\otimes qg_2^{-1} \otimes r \cdot 
        \mid \delta_{in}^{-3}r\cap g_2q^{-1}\delta_{in}^2g_3\mid
        \delta_{g_3\in\delta_{out}^{-1}p}
        \delta_{g_2\in\delta_{in}^{-3}r\cap\delta_{out}^{-2}q}.
    \end{align*}
\end{subequations}
These conditions are used to find cochains involved in the copy-cup circuit. There is an analogous calculation for the symmetric triple cup product. In this case, we find that fixing the $1\otimes q\otimes 1$ cochain and expanding the coinvariants $\sum_{g_1,g_2}g_1^{-1}\otimes g_1g_2^{-1}\otimes g_2r$ and $\sum_{g_3,g_4}pg_3^{-1}\otimes g_3g_4^{-1}\otimes g_4$ lead to a particularly ordered simplified expression. Again, we can show an example calculation as
\begin{subequations} \label{asdf}
    \begin{align}
        &(p\otimes 1\otimes 1)\cup (1\otimes q\otimes 1) \cup (1\otimes 1 \otimes r)\notag \\
        &= \sum_{g_1,g_2,g_3,g_4} (pg_3^{-1}\otimes g_3g_4^{-1}\otimes g_4)\cup (1\otimes q\otimes 1) \cup (g_1^{-1}\otimes g_1g_2^{-1} \otimes g_2r)\label{asdf0}\\
        &= \sum_{g_1,g_2,g_3,g_4}
        (pg_3^{-1}\cup 1 \cup g_1^{-1})\otimes (g_3g_4^{-1}\cup q \cup g_1g_2^{-1})\otimes (g_4\cup 1\cup g_2r\label{asdf1})\\
        &= \sum_{g_1,g_2,g_3,g_4}
        pg_3^{-1}\otimes q \otimes g_2r \cdot
        \delta_{pg_3^{-1}\in \delta_{in}^1} 
        \delta_{pg_3^{-1}\in \delta_{in}^1g_1^{-1}}
        \delta_{q\in \delta_{out}^2g_3g_4^{-1}}
        \delta_{q\in\delta_{in}^2g_1g_2^{-1}}
        \delta_{g_2r\in \delta_{out}^3}
        \delta_{g_2r\in \delta_{out}^3g_4} \label{asdf2}\\
        &= \sum_{g_1,g_2,g_3,g_4}
        pg_3^{-1}\otimes q \otimes g_2r \cdot
        \delta_{g_3\in \delta_{in}^{-1}p}
        \delta_{g_1\in g_3p^{-1}\delta_{in}^1}
        \delta_{g_4\in q^{-1}\delta_{out}^2g_3}
        \delta_{g_1\in{\delta_{in}^{-2}qg_2}}
        \delta_{g_2\in\delta_{out}^3r^{-1}}
        \delta_{g_4\in\delta_{out}^{-3}g_2r}
        \label{asdf3}\\
        &= \sum_{g_2,g_3}
        pg_3^{-1}\otimes q \otimes g_2r \cdot
        | g_3p^{-1}\delta_{in}^1\cap \delta_{in}^{-2}qg_2| \cdot
        | q^{-1}\delta_{out}^2g_3\cap \delta_{out}^{-3}g_2r|\cdot
        \delta_{g_2\in\delta_{out}^3r^{-1}}        
        \delta_{g_3\in \delta_{in}^{-1}p}
        \label{asdf4}
    \end{align}
\end{subequations}
In \cref{asdf0} the cochains are expanded as coinvariants; in \cref{asdf1} the cup product is distributed over the tensor product; in \cref{asdf2} the cup products are evaluated with non-associativity handled by the symmetric rule, returning 6 delta functions; in \cref{asdf3} the delta functions are re-arranged so that the group elements $g_i$ are on the left-hand side, in preparation to kill sums; in \cref{asdf4}, the sums over $g_1,g_4$ are killed and replaced with cardinality of some intersections. All 6 calculations of non-zero cup products have a similar form, and we summarise the simplified equations below.
\begin{subequations}
    \begin{align*}
        &(p\otimes 1\otimes 1)\cup (1\otimes 1 \otimes r) \cup (1\otimes q\otimes 1) \\
        &=\sum_{g_2,g_3}pg_3^{-1}\otimes q\otimes g_2r \cdot
        |g_3p^{-1}\delta_{in}^1\cap\delta_{out}^{-2}qg_2|\cdot
        |\delta_{out}^{-3}g_2r\cap q^{-1}\delta_{out}^2g_3|\cdot
        \delta_{g_3\in\delta_{in}^{-1}p}
        \delta_{g_2\in\delta_{in}^{3}r^{-1}}\\
        &(1\otimes q\otimes 1)\cup (1\otimes 1 \otimes r) \cup (p\otimes 1\otimes 1) \\
        &=\sum_{g_2,g_3} pg_3^{-1}\otimes q \otimes g_2r \cdot
        |g_3p^{-1}\delta_{out}^1\cap \delta_{in}^{-2}qg_2|\cdot|\delta_{in}^{-3}g_2r\cap q^{-1}\delta_{in}^2g_3|\cdot
        \delta_{g_3\in \delta_{out}^{-1}p}
        \delta_{g_2\in \delta_{out}^3r^{-1}}\\  
        &(1\otimes q\otimes 1)\cup (p\otimes 1\otimes 1)\cup (1\otimes 1 \otimes r)  \\
        &=\sum_{g_2,g_3} pg_3^{-1}\otimes q \otimes g_2r \cdot
        |\delta_{out}^{-3}g_2r\cap q^{-1}\delta_{in}^2g_3|\cdot
        |\delta_{in}^{-2}qg_2\cap g_3p^{-1}\delta_{in}^1|\cdot
        \delta_{g_3\in \delta_{out}^{-1}p}
        \delta_{g_2\in\delta_{out}^3r^{-1}}\\
        &(1\otimes 1 \otimes r) \cup (p\otimes 1\otimes 1)\cup (1\otimes q\otimes 1)  \\
        &=\sum_{g_2,g_3} pg_3^{-1}\otimes q \otimes g_2r \cdot
        |q^{-1}\delta_{out}^2g_3\cap \delta_{in}^{-3}g_2r|\cdot
        |g_3p^{-1}\delta_{out}^1\cap \delta_{out}^{-2}qg_2|\cdot
        \delta_{g_3\in \delta_{in}^{-1}p}
        \delta_{g_2\in\delta_{in}^3r^{-1}}\\
        &(1\otimes 1 \otimes r) \cup (1\otimes q\otimes 1)\cup (p\otimes 1\otimes 1) \\
        &=\sum_{g_2,g_3} pg_3^{-1}\otimes q \otimes g_2r \cdot
        |q^{-1}\delta_{in}^2g_3\cap \delta_{in}^{-3}g_2r|\cdot
        |g_3p^{-1}\delta_{out}^1\cap \delta_{out}^{-2}qg_2|\cdot
        \delta_{g_3\in \delta_{out}^{-1}p}
        \delta_{g_2\in\delta_{in}^3r^{-1}}\\
    \end{align*}
\end{subequations}
\bibliography{apssamp}%

@PREAMBLE{
 "\providecommand{\noopsort}[1]{}" 
 # "\providecommand{\singleletter}[1]{#1}%" 
}

@misc{golowich2024quantumldpccodestransversal,
      title={Quantum LDPC Codes with Transversal Non-Clifford Gates via Products of Algebraic Codes}, 
      author={Louis Golowich and Ting-Chun Lin},
      year={2024},
      eprint={2410.14662},
      archivePrefix={arXiv},
      primaryClass={quant-ph},
      url={https://arxiv.org/abs/2410.14662}, 
}

@misc{pryadko_2BGA
,
      title={Quantum two-block group algebra codes}, 
      author={Hsiang-Ku Lin and Leonid P. Pryadko},
      year={2023},
      eprint={2306.16400},
      archivePrefix={arXiv},
      primaryClass={quant-ph}
}

@article{Edmonds_1965, title={Paths, Trees, and Flowers}, volume={17}, DOI={10.4153/CJM-1965-045-4}, journal={Canadian Journal of Mathematics}, author={Edmonds, Jack}, year={1965}, pages={449–467}}

@misc{higgott2021pymatchingpythonpackagedecoding,
      title={PyMatching: A Python package for decoding quantum codes with minimum-weight perfect matching}, 
      author={Oscar Higgott},
      year={2021},
      eprint={2105.13082},
      archivePrefix={arXiv},
      primaryClass={quant-ph},
      url={https://arxiv.org/abs/2105.13082}, 
}

@book{Dummit_Foote_2004, place={Hoboken}, edition={3}, title={Abstract algebra}, publisher={Wiley}, author={Dummit, David S. and Foote, Richard M.}, year={2004}}

@misc{lin2025abelianmulticyclecodessingleshot,
      title={Abelian multi-cycle codes for single-shot error correction}, 
      author={Hsiang-Ku Lin and Pak Kau Lim and Alexey A. Kovalev and Leonid P. Pryadko},
      year={2025},
      eprint={2506.16910},
      archivePrefix={arXiv},
      primaryClass={quant-ph},
      url={https://arxiv.org/abs/2506.16910}, 
}

@misc{perfect_matching, title={Perfect matching}, url={https://mathworld.wolfram.com/PerfectMatching.html}, journal={Mathworld}, 
year={},
author={Weisstein, Eric}}

@misc{jacob2025singleshotdecodingfaulttolerantgates,
      title={Single-Shot Decoding and Fault-tolerant Gates with Trivariate Tricycle Codes}, 
      author={Abraham Jacob and Campbell McLauchlan and Dan E. Browne},
      year={2025},
      eprint={2508.08191},
      archivePrefix={arXiv},
      primaryClass={quant-ph},
      url={https://arxiv.org/abs/2508.08191}, 
}

@misc{menon2025magictricyclesefficientmagic,
      title={Magic tricycles: efficient magic state generation with finite block-length quantum LDPC codes}, 
      author={Varun Menon and J. Pablo Bonilla-Ataides and Rohan Mehta and Daniel Bochen Tan and Mikhail D. Lukin},
      year={2025},
      eprint={2508.10714},
      archivePrefix={arXiv},
      primaryClass={quant-ph},
      url={https://arxiv.org/abs/2508.10714}, 
}

@misc{liang2025generalizedtoriccodestwisted,
      title={Generalized toric codes on twisted tori for quantum error correction}, 
      author={Zijian Liang and Ke Liu and Hao Song and Yu-An Chen},
      year={2025},
      eprint={2503.03827},
      archivePrefix={arXiv},
      primaryClass={quant-ph},
      url={https://arxiv.org/abs/2503.03827}, 
}

@article{Horsman_2012,
   title={Surface code quantum computing by lattice surgery},
   volume={14},
   ISSN={1367-2630},
   url={http://dx.doi.org/10.1088/1367-2630/14/12/123011},
   DOI={10.1088/1367-2630/14/12/123011},
   number={12},
   journal={New Journal of Physics},
   publisher={IOP Publishing},
   author={Horsman, Dominic and Fowler, Austin G and Devitt, Simon and Meter, Rodney Van},
   year={2012},
   month=dec, pages={123011} }

@misc{tiew2024lowoverheadentanglinggatesgeneralised,
      title={Low-Overhead Entangling Gates from Generalised Dehn Twists}, 
      author={Ryan Tiew and Nikolas P. Breuckmann},
      year={2024},
      eprint={2411.03302},
      archivePrefix={arXiv},
      primaryClass={quant-ph},
      url={https://arxiv.org/abs/2411.03302}, 
}

@article{Tillich_2014,
   title={Quantum LDPC Codes With Positive Rate and Minimum Distance Proportional to the Square Root of the Blocklength},
   volume={60},
   ISSN={1557-9654},
   url={http://dx.doi.org/10.1109/TIT.2013.2292061},
   DOI={10.1109/tit.2013.2292061},
   number={2},
   journal={IEEE Transactions on Information Theory},
   publisher={Institute of Electrical and Electronics Engineers (IEEE)},
   author={Tillich, Jean-Pierre and Zemor, Gilles},
   year={2014},
   month=feb, pages={1193–1202} }

@misc{bravyi2023highthreshold,
      title={High-threshold and low-overhead fault-tolerant quantum memory}, 
      author={Sergey Bravyi and Andrew W. Cross and Jay M. Gambetta and Dmitri Maslov and Patrick Rall and Theodore J. Yoder},
      year={2023},
      eprint={2308.07915},
      archivePrefix={arXiv},
      primaryClass={quant-ph}
}

@misc{cowtan2026parallellogicalmeasurementsquantum,
      title={Parallel Logical Measurements via Quantum Code Surgery}, 
      author={Alexander Cowtan and Zhiyang He and Dominic J. Williamson and Theodore J. Yoder},
      year={2026},
      eprint={2503.05003},
      archivePrefix={arXiv},
      primaryClass={quant-ph},
      url={https://arxiv.org/abs/2503.05003}, 
}

@article{Cohen_2022,
   title={Low-overhead fault-tolerant quantum computing using long-range connectivity},
   volume={8},
   ISSN={2375-2548},
   url={http://dx.doi.org/10.1126/sciadv.abn1717},
   DOI={10.1126/sciadv.abn1717},
   number={20},
   journal={Science Advances},
   publisher={American Association for the Advancement of Science (AAAS)},
   author={Cohen, Lawrence Z. and Kim, Isaac H. and Bartlett, Stephen D. and Brown, Benjamin J.},
   year={2022},
   month=may }

@misc{breuckmann2024cupsgatesicohomology,
      title={Cups and Gates I: Cohomology invariants and logical quantum operations}, 
      author={Nikolas P. Breuckmann and Margarita Davydova and Jens N. Eberhardt and Nathanan Tantivasadakarn},
      year={2024},
      eprint={2410.16250},
      archivePrefix={arXiv},
      primaryClass={quant-ph},
      url={https://arxiv.org/abs/2410.16250}, 
}

@article{Berman_1969, title={On the theory of group codes}, volume={3}, DOI={10.1007/bf01072842}, number={1}, journal={Cybernetics}, author={Berman, S. D.}, year={1969}, pages={25–31}}

@ARTICLE{6770805,
  author={Williams, F. J. Mac},
  journal={The Bell System Technical Journal}, 
  title={Binary codes which are ideals in the group algebra of an abelian group}, 
  year={1970},
  volume={49},
  number={6},
  pages={987-1011},
  doi={10.1002/j.1538-7305.1970.tb01812.x}}

@misc{jitman2010checkablecodesgrouprings,
      title={Checkable Codes from Group Rings}, 
      author={Somphong Jitman and San Ling and Hongwei Liu and Xiaoli Xie},
      year={2010},
      eprint={1012.5498},
      archivePrefix={arXiv},
      primaryClass={cs.IT},
      url={https://arxiv.org/abs/1012.5498}, 
}

@article{PhysRevX.15.021065,
  title = {Fast and Parallelizable Logical Computation with Homological Product Codes},
  author = {Xu, Qian and Zhou, Hengyun and Zheng, Guo and Bluvstein, Dolev and Ataides, J. Pablo Bonilla and Lukin, Mikhail D. and Jiang, Liang},
  journal = {Phys. Rev. X},
  volume = {15},
  issue = {2},
  pages = {021065},
  numpages = {34},
  year = {2025},
  month = {May},
  publisher = {American Physical Society},
  doi = {10.1103/PhysRevX.15.021065},
  url = {https://link.aps.org/doi/10.1103/PhysRevX.15.021065}
}

@misc{ide2024faulttolerantlogicalmeasurementshomological,
      title={Fault-tolerant logical measurements via homological measurement}, 
      author={Benjamin Ide and Manoj G. Gowda and Priya J. Nadkarni and Guillaume Dauphinais},
      year={2024},
      eprint={2410.02753},
      archivePrefix={arXiv},
      primaryClass={quant-ph},
      url={https://arxiv.org/abs/2410.02753}, 
}

@misc{huang2022homomorphiclogicalmeasurements,
      title={Homomorphic Logical Measurements}, 
      author={Shilin Huang and Tomas Jochym-O'Connor and Theodore J. Yoder},
      year={2022},
      eprint={2211.03625},
      archivePrefix={arXiv},
      primaryClass={quant-ph},
      url={https://arxiv.org/abs/2211.03625}, 
}

@misc{he2025extractorsqldpcarchitecturesefficient,
      title={Extractors: QLDPC Architectures for Efficient Pauli-Based Computation}, 
      author={Zhiyang He and Alexander Cowtan and Dominic J. Williamson and Theodore J. Yoder},
      year={2025},
      eprint={2503.10390},
      archivePrefix={arXiv},
      primaryClass={quant-ph},
      url={https://arxiv.org/abs/2503.10390}, 
}

@misc{cross2024improvedqldpcsurgerylogical,
      title={Improved QLDPC Surgery: Logical Measurements and Bridging Codes}, 
      author={Andrew Cross and Zhiyang He and Patrick Rall and Theodore Yoder},
      year={2024},
      eprint={2407.18393},
      archivePrefix={arXiv},
      primaryClass={quant-ph},
      url={https://arxiv.org/abs/2407.18393}, 
}

@article{Breuckmann_2021_2,
	doi = {10.1109/tit.2021.3097347},
  
	url = {https://doi.org/10.1109%2Ftit.2021.3097347},
  
	year = 2021,
	month = {oct},
  
	publisher = {Institute of Electrical and Electronics Engineers ({IEEE})},
  
	volume = {67},
  
	number = {10},
  
	pages = {6653--6674},
  
	author = {Nikolas P. Breuckmann and Jens N. Eberhardt},
  
	title = {Balanced Product Quantum Codes},
  
	journal = {{IEEE} Transactions on Information Theory}
}

@article{Guo_2026,
   title={Toward Self-Correcting Quantum Codes for Neutral Atom Arrays},
   volume={7},
   ISSN={2691-3399},
   url={http://dx.doi.org/10.1103/mfmt-fwkg},
   DOI={10.1103/mfmt-fwkg},
   number={1},
   journal={PRX Quantum},
   publisher={American Physical Society (APS)},
   author={Guo, Jinkang and Hong, Yifan and Kaufman, Adam and Lucas, Andrew},
   year={2026},
   month=jan }

@misc{symons2025sequencesbivariatebicyclecodes,
      title={Sequences of Bivariate Bicycle Codes from Covering Graphs}, 
      author={Benjamin C. B. Symons and Abhishek Rajput and Dan E. Browne},
      year={2025},
      eprint={2511.13560},
      archivePrefix={arXiv},
      primaryClass={quant-ph},
      url={https://arxiv.org/abs/2511.13560}, 
}

@misc{mian2026multivariatemulticyclecodescomplete,
      title={Multivariate Multicycle Codes for Complete Single-Shot Decoding}, 
      author={Feroz Ahmed Mian and Owen Gwilliam and Stefan Krastanov},
      year={2026},
      eprint={2601.18879},
      archivePrefix={arXiv},
      primaryClass={quant-ph},
      url={https://arxiv.org/abs/2601.18879}, 
}

@article{Campbell_2019,
   title={A theory of single-shot error correction for adversarial noise},
   volume={4},
   ISSN={2058-9565},
   url={http://dx.doi.org/10.1088/2058-9565/aafc8f},
   DOI={10.1088/2058-9565/aafc8f},
   number={2},
   journal={Quantum Science and Technology},
   publisher={IOP Publishing},
   author={Campbell, Earl T},
   year={2019},
   month=feb, pages={025006} }

@misc{nguyen2024goodbinaryquantumcodes,
      title={Good binary quantum codes with transversal CCZ gate}, 
      author={Quynh T. Nguyen},
      year={2024},
      eprint={2408.10140},
      archivePrefix={arXiv},
      primaryClass={quant-ph},
      url={https://arxiv.org/abs/2408.10140}, 
}

@misc{he2025asymptoticallygoodquantumcodes,
      title={Asymptotically Good Quantum Codes with Addressable and Transversal Non-Clifford Gates}, 
      author={Zhiyang He and Vinod Vaikuntanathan and Adam Wills and Rachel Yun Zhang},
      year={2025},
      eprint={2507.05392},
      archivePrefix={arXiv},
      primaryClass={quant-ph},
      url={https://arxiv.org/abs/2507.05392}, 
}

@article{Breuckmann_2021,
	doi = {10.1103/prxquantum.2.040101},
  
	url = {https://doi.org/10.1103%2Fprxquantum.2.040101},
  
	year = 2021,
	month = {oct},
  
	publisher = {American Physical Society ({APS})},
  
	volume = {2},
  
	number = {4},
  
	author = {Nikolas P. Breuckmann and Jens Niklas Eberhardt},
  
	title = {Quantum Low-Density Parity-Check Codes},
  
	journal = {{PRX} Quantum}
}

@misc{eberhardt2024logicaloperatorsfoldtransversalgates,
      title={Logical Operators and Fold-Transversal Gates of Bivariate Bicycle Codes}, 
      author={Jens Niklas Eberhardt and Vincent Steffan},
      year={2024},
      eprint={2407.03973},
      archivePrefix={arXiv},
      primaryClass={quant-ph},
      url={https://arxiv.org/abs/2407.03973}, 
}

@incollection{GOTTESMAN2006196,
title = {Quantum Error Correction and Fault Tolerance},
editor = {Jean-Pierre Françoise and Gregory L. Naber and Tsou Sheung Tsun},
booktitle = {Encyclopedia of Mathematical Physics},
publisher = {Academic Press},
address = {Oxford},
pages = {196-201},
year = {2006},
isbn = {978-0-12-512666-3},
doi = {https://doi.org/10.1016/B0-12-512666-2/00273-X},
url = {https://www.sciencedirect.com/science/article/pii/B012512666200273X},
author = {D. Gottesman}
}

@misc{swaroop2024universaladaptersquantumldpc,
      title={Universal adapters between quantum LDPC codes}, 
      author={Esha Swaroop and Tomas Jochym-O'Connor and Theodore J. Yoder},
      year={2024},
      eprint={2410.03628},
      archivePrefix={arXiv},
      primaryClass={quant-ph},
      url={https://arxiv.org/abs/2410.03628}, 
}
\end{document}